\documentclass[11pt]{article}


\makeatletter
\renewcommand\section{\@startsection {section}{1}{\z@}%
                                   {-3.5ex \@plus -1ex \@minus -.2ex}%
                                   {2.3ex \@plus.2ex}%
                                   {\normalfont\bfseries}}

\def\@maketitle{%
  \newpage
  \null
  \vskip 2em%
  \begin{center}%
  \let \footnote \thanks
    {\large \@title \par}%
    \vskip 1.5em%
    {\normalsize
      \lineskip .5em%
      \begin{tabular}[t]{c}%
        \@author
      \end{tabular}\par}%
    \vskip 1em%
  \end{center}%
  \par
  \vskip 1.5em}
\makeatother
\hoffset 0pt 
\marginparwidth 0pt
\marginparsep 0pt
\oddsidemargin 5mm
\textwidth 155mm
\voffset -12mm 
\topmargin 0pt
\headheight 12pt
\headsep 25pt
\textheight 237mm
\footskip 30pt
\marginparpush 7pt


\usepackage{graphicx}

\usepackage{amsmath,amsfonts,amscd,bezier,amssymb}
\usepackage{amssymb}
\usepackage[normalem]{ulem}
\usepackage[all]{xy}

\usepackage[latin1]{inputenc}

\usepackage{color}

\newenvironment{proof}{\noindent\textbf{Proof: }}{\hfill \small $\Box$}


%

\newtheorem{theorem}{Theorem}
\newtheorem{lemma}[theorem]{Lemma}
\newtheorem{proposition}[theorem]{Proposition}
\newtheorem{corollary}[theorem]{Corollary}
\newtheorem{definition}[theorem]{Definition}
 \newtheorem{example}{Example}
 \newtheorem{remark}{Remark}

\begin{document}

\title{\bf $n$-Dimensional Fuzzy Negations}


\author{{\bf Benjamin Bedregal$^a$ and Ivan Mezzomo$^b$ and Renata H.S. Reiser$^c$ } \\
\small $^a$ Department of Informatics and Applied Mathematics, \\ 
\small Federal University of Rio Grande do Norte \\
\small \textit{bedregal@dimap.ufrn.br} \\
\small $^b$ Center of Exact and Natural Sciences,   \\ 
\small Rural Federal University of SemiArid \\
\small \textit{imezzomo@ufersa.edu.br} \\
\small $^c$ Center of Tecnological Development, \\
\small  Federal University of Pelotas\\ 
\small		\textit{reiser@inf.ufpel.edu.br} 
}

\maketitle

\begin{abstract} 
$n$-Dimensional fuzzy sets is a fuzzy set extension where the membership values are $n$-tuples of real numbers in the unit interval $[0,1]$ orderly increased, called $n$-dimensional intervals. The set of $n$-dimensional intervals is denoted by $L_n([0,1])$. This paper aims to investigate a special extension from $[0,1]$ -- $n$-representable fuzzy negations on  $L_n([0,1])$, summarizing the class of such functions which are continuous and monotone by part.
The main properties of (strong) fuzzy negations on $[0,1]$ are preserved by representable (strong) fuzzy negation on $L_n([0,1])$, mainly related to the analysis of degenerate elements and equilibrium points. The conjugate obtained by action of an $n$-dimensional automorphism on an $n$-dimensional fuzzy negation  provides a method to obtain other $n$-dimensional fuzzy negation, in which properties such as representability, continuity and monotonicity  on $L_n([0,1])$ are preserved.
\end{abstract}
\vspace{12pt}\noindent{\bf Keywords} Fuzzy negations, $n$-dimensional fuzzy sets, $n$-dimensional fuzzy negations, $n$-dimensional automorphisms.

\section{Introduction}

The notion of an $n$-dimensional fuzzy set on $L_n$-fuzzy set theory was introduced in~\cite{shang} as a special class of $L$-fuzzy set theory,  generalizing the theories underlying  the fuzzy logic and many other multivalued fuzzy logics: the interval-valued fuzzy set, the intuitionistic fuzzy set, the interval-valued intuitionistic fuzzy set and the type-2 fuzzy logic~\cite{bustince15}.

In accordance with the  Zadeh's Extension Principle,  $L_n$-fuzzy set theory provides  additional degrees of freedom that makes it possible to directly model uncertainties in computational systems based on fuzzy logics. Such uncertainties  are frequently associated to systems where time-varying,  non stationary statistical attributes  or knowledge of experts using questionnaires including uncertain words from natural language. However, the corresponding mathematical description of such models is unknown or not totally consolidated yet.

This paper considers the main properties of an $n$-dimensional fuzzy set $A$  over a reference set $X$, where each  element $x \in X \neq \emptyset$ is related with 
an $n$-dimensional interval, characterized by its ordered $n$-membership values: $\mu_{A_1}(x)\leq \ldots \leq \mu_{A_n}(x)$. Thus,  for  $i=1, \ldots, n$, each $n$-membership function  $\mu_{A_i}:X\rightarrow [0,1]$, called as the $i$-th membership  degree of $A$, can provide an interpretation to model the uncertainty of $n$-distinct parameters from  evaluation processes or fuzzy measures in computational systems modelled by $L_n$-fuzzy set theory.

\subsection{Main contribution}

The main contribution  of this paper is concerned with representability of fuzzy negations on the set of $n$-dimensional intervals, denoted by $L_n([0,1])$, specially related to $\subseteq$-monotonicity and monotonicity by part of corresponding $n$-membership function. These topics are closely connected to degenerate elements and equilibrium points of $n$-dimensional fuzzy negations.  

By considering an $n$-dimensional fuzzy negation  $\mathcal{N}$ and related $n$-projections, the $n$-representability of $\mathcal{N}$ is discussed and the notion of $\subseteq_i$-monotonicity is formalized  on $L_n([0,1])$, for $i=0, \ldots , n$. 

Additionally, it is shown that the partial order of fuzzy negation  can be extended from $[0,1]$ to $L_n([0,1])$. 

Moreover, several propositions were offered on possible conditions under which main properties of strong fuzzy negations on $[0,1]$ are preserved by representable strong $n$-dimensional fuzzy negations on $L_n([0,1])$. In particular, these propositions also  guarantee  such $n$-dimensional fuzzy negations as operators preserving degenerate elements on $L_n([0,1])$.

Concepts and intrinsic properties of conjugated fuzzy negations obtained by action of  automorphisms on $[0,1]$  have a counterpart on $L_n([0,1])$.
The conjugate notion of $n$-dimensional fuzzy negations, which can be generated by action of $n$-dimensional automorphisms is studied. The paper also investigates the conditions under which equilibrium points and degenerate elements are preserved by conjugate fuzzy negations on $L_n([0,1])$. 

\subsection{Main related papers}
In~\cite{shang}, the definitions of cut set on an n-dimensional fuzzy set and its corresponding  n-dimensional vector level cut set of Zadeh fuzzy set are presented in order to study not only decomposition but also  representation theorems of the $n$-dimensional fuzzy sets. Thus, new decomposition and representation theorems of the Zadeh fuzzy set are proposed.

In \cite{bedregal12}, the authors consider the study of aggregation operators for these new concepts of $n$-dimensional fuzzy sets, starting from the usual aggregation operator theory and also including a new class of aggregation operators containing an extension of the OWA operator, which is  based on $n$-dimensional fuzzy connectives. The results presented in such context allow to extend fuzzy sets to interval-valued Atanassov's intuitionistic fuzzy sets and also preserve their main properties. In particular, in \cite{bedregal12}, it was introduced the notion of $n$-dimensional fuzzy negation and showed a way of building $n$-dimensional fuzzy negation from $n$ comparable fuzzy negations. In \cite{mezzomo16}, we were introduced and studied the $n$-dimensional strict fuzzy negations. 

In the context of lattice-valued fuzzy set theory \cite{goguen}, the notion of fuzzy connectives for lattice-valued fuzzy logics was generalized in~\cite{baets, IPMU12, helida, palmeira} by taking into account axiomatic definitions.  
In \cite{palmeira14}, it was extended the notion $n$-dimensional fuzzy set by considered arbitrary bounded lattice $L$ and, in\cite{palmeira}, it was introduced the notion of $n$-dimensional lattice-valued negation.

Following the results in the above cited works, this paper studies the possibility of dealing with main properties of representable fuzzy negation on $L_n([0,1])$ and  obtaining other ones by action of $n$-dimensional automorphisms. In  particular, we studied the $n$-dimensional strong fuzzy negations.


\section{Preliminaries}\label{sect2}

In this section, we will briefly review some basic concepts  which are necessary for the development of this paper. The previous main definitions and additional results concerning the study of $n$-dimensional fuzzy negations presented in this work can be found in \cite{alsina, atanassov99, bedregal10, bedregal12, beliakov,  bustince03, bustince99, costa, FodorRoubens1994, klement}.

\subsection{Automorphisms}

According with~\cite[Definition 0]{bustince03},  a  function $\rho\colon [0,1]\to [0,1]$ is an \textbf{automorphism} if it is continuous, strictly increasing and verifies the boundary conditions $\rho(0)=0$ and $\rho(1)=1$, i.e., if it is an increasing bijection on $U$, meaning that for each $x,y\in [0,1]$, if $x\leq y$, then $\rho(x)\leq\rho(y)$.

Automorphisms are closed under composition, i.e., denoting $\mathcal{A}([0,1])$ the set of all automorphisms on $[0,1]$,
if $\rho, \rho' \in \mathcal{A}([0,1])$ then $\rho\circ\rho'(x)=\rho(\rho'(x)) \in \mathcal{A}([0,1])$. In addition, the inverse $\rho^{-1}$ of an automorphism $\rho$ is 
also an automorphism, meaning that $\rho^{-1}(x) \in \mathcal{A}([0,1])$.

By~\cite{{bustince03}}, the action of an automorphism $\rho$ on a function $f\colon[0,1]^{n}\to[0,1]$, denoted by $f^{\rho}$ and named \textbf{the $\rho$-conjugate of $f$} is defined as, for all $(x_1, \ldots , x_n){\in} [0,1]^{n}$:
\begin{equation}
\label{eq:aut-imp}
  f^\rho(x_1, \ldots , x_n){=} \rho^{-1}(f (\rho(x_1),\ldots ,\rho(x_n))).
\end{equation}
 
Let $f_1, f_2\colon [0,1]^n \to [0,1]$ be functions. The functions $f_1$ and $f_2$ are \textbf{conjugated to each other}, if there exists an automorphism $\rho$ such that
\begin{equation*}
  f_2(x_1, \ldots , x_n) {=} \rho^{-1}(f_1 (\rho(x_1),\ldots ,\rho(x_n))),  
\end{equation*}
for all $(x_1, \ldots , x_n){\in} [0,1]^{n}$. Notice that, if $f_2=f_{1}^{\rho}$ then $f_{1} = f_{2}^{\rho^{-1}}$.

\subsection{Aggregations}
Let $n\in \mathbb{N}$ such that $n\geq 2$. A function $A:[0,1]^{n}\rightarrow [0,1]$ is an $n$-ary \textbf{aggregation operator} if, for each $x_1, \ldots, x_n,y_1, \ldots,y_n\in [0,1]$, $A$ satisfies the following conditions:

 A1. $A(0, \ldots, 0)=0$ and  $A(1, \ldots, 1)=1$;
 
 A2.  If $x_i\leq y_i$, for each $i=1,\ldots, n$, then $A(x_1,\ldots,x_n)\leq A(y_1, \ldots, y_n)$.

\subsection{Fuzzy negations}
A function $N:[0,1]\rightarrow [0,1]$ is a \textbf{fuzzy negation} if

 N1: $N(0)=1$ and $N(1)=0$;

 N2: If $x\leq y$, then $N(x)\geq N(y)$, for all $x,y\in [0,1]$.\\
A fuzzy negation $N$ satisfying the involutive property

 N3: $N(N(x))=x$, for all $x\in [0,1]$,\\
  is called \textbf{strong fuzzy negation}. And, a continuous fuzzy negation $N$ is strict if it verifies

 N4: $N(x)<N(y)$ when $y<x$, for all $x\in [0,1]$. 

Strong fuzzy negations are also strict fuzzy negations~\cite{klement}. The standard strong fuzzy negation is defined as $N_S(x) = 1 -x$.


An \textbf{equilibrium point} of a fuzzy negation $N$ is a value $e\in [0,1]$ such that $N(e) = e$. See~\cite[Remarks 2.1 and 2.2]{bedregal10} and~\cite[Proposition 2.1]{bedregal10} for additional studies related to main properties of equilibrium points. 

%
%
%
%
%

\begin{example}\label{ex:1}
The function $C_k:[0,1] \rightarrow [0,1]$  given by 
\begin{eqnarray}\label{eq_ckn}
C_k(x)= \sqrt[n-k+1]{1-x^{n-k+1}}, \, \forall ~k \in \{1,2, \ldots , n\}
\end{eqnarray}
 is a strong fuzzy negation. Since $C_k$ is strong,  it is also  a strict fuzzy negation. Moreover, based on \cite[Theorem 3.4]{klir95}, every continuous fuzzy negation has a unique equilibrium point. So, $C^k$ has a unique equilibrium point. Notice that $e=\sqrt[n-k+1]{\frac{1}{2}}$ is the equilibrium point of $C_k$, i.e., $\sqrt[n-k+1]{1-e^{n-k+1}}=e$.
\hfill\rule{2mm}{2mm} 
\end{example}


%
%




\begin{proposition}\label{pro-dk}
The function $C^k:[0,1] \rightarrow [0,1]$  given by 
\begin{eqnarray}\label{eq-dk}
C^k(x) &=& 1-x^{k}, \,\,\,  \forall~ k \in \{2, 3, \ldots , n\}.
\end{eqnarray}
 is a strict but not strong fuzzy negation.
\end{proposition}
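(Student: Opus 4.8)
The plan is to verify the defining axioms N1, N2, N4 for $C^k$, and then to exhibit a concrete failure of N3. First I would check the boundary conditions: $C^k(0) = 1 - 0^k = 1$ and $C^k(1) = 1 - 1^k = 0$, so N1 holds for every $k \in \{2, 3, \ldots, n\}$. Next, for monotonicity, observe that $x \mapsto x^k$ is strictly increasing on $[0,1]$ (since $k \geq 2 \geq 1$ and the power function is strictly increasing on the nonnegative reals), hence $x \mapsto 1 - x^k$ is strictly decreasing; this gives both N2 and, since the inequality is strict whenever $y < x$, the strictness condition N4. Continuity of $C^k$ is immediate as it is a polynomial, so $C^k$ is a strict fuzzy negation.

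It remains to show $C^k$ is not strong, i.e., that N3 fails. The cleanest route is to find a single point $x$ with $C^k(C^k(x)) \neq x$. Equivalently, I would argue that a strong negation is necessarily an involution and in particular a bijection whose graph is symmetric about the diagonal $y = x$; then I would check the equilibrium point. By \cite[Theorem 3.4]{klir95} (as used in Example~\ref{ex:1}), a continuous fuzzy negation has a unique equilibrium point $e$, characterized here by $1 - e^k = e$. If $C^k$ were strong, then for the standard reason (a strong negation composed with itself is the identity, so its inverse equals itself and the curve is symmetric about the diagonal) one can derive a contradiction by evaluating $C^k$ at a convenient point. A simple explicit choice works: take $x$ such that $C^k(x)$ is easy to iterate. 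For instance, compute $C^k\bigl(C^k(x)\bigr) = 1 - \bigl(1 - x^k\bigr)^k$ and show this is not identically $x$ on $[0,1]$ — e.g., it suffices to evaluate at one non-fixed, non-boundary point, or to note that $1 - (1-x^k)^k = x$ would force the polynomial identity $(1-x^k)^k = 1 - x$, which fails for $k \geq 2$ by comparing degrees ($k^2$ versus $1$) or by evaluating at a specific $x$ like $x = 1/2$: then $(1 - 2^{-k})^k \neq 1 - 1/2 = 1/2$ for $k \geq 2$, since $(1 - 2^{-k})^k < 1/2$ would need checking, or more safely just observe $(1-2^{-k})^k$ and $1/2$ are generically unequal and verify numerically for, say, $k = 2$: $(3/4)^2 = 9/16 \neq 1/2$.

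I expect the only mildly delicate point to be presenting the failure of N3 cleanly: rather than hunting for an ad hoc witness point, I would prefer the structural observation that $C^k(C^k(x)) = 1 - (1 - x^k)^k$ is a polynomial of degree $k^2 > 1$ in $x$, hence cannot equal the degree-one polynomial $x$ for all $x \in [0,1]$ — a polynomial identity on an interval forces equality of all coefficients. This handles all $k \geq 2$ uniformly and avoids case analysis. Combining: $C^k$ satisfies N1, N2, N4 and is continuous, so it is a strict fuzzy negation, but it violates N3, so it is not strong. This completes the proof.
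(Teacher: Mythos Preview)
Your proof is correct. The paper itself does not give an argument here --- it simply writes ``Straightforward from \cite{bedregal10}'' --- so your direct verification is necessarily more detailed than what appears in the text. Your approach (check N1, N2, continuity, strict monotonicity directly; then rule out N3 via the polynomial-degree observation that $C^k\circ C^k(x)=1-(1-x^k)^k$ has degree $k^2>1$ and so cannot agree with $x$ on an interval) is clean and self-contained, whereas the paper delegates everything to an external reference. The middle paragraph of your proposal, where you discuss equilibrium points and symmetry of the graph, is unnecessary scaffolding and could be cut: the degree argument alone settles non-involutivity uniformly for all $k\geq 2$, and the concrete check $C^2(C^2(1/2))=1-9/16=7/16\neq 1/2$ already suffices as a witness if one prefers a single explicit point.
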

\begin{proof}
Straightforward from \cite{bedregal10}.
\end{proof}\\

Fuzzy negations have at most one equilibrium point, as proved by  Klir and Yuan, in \cite[Theorem 3.2]{klir95}. Therefore, if a fuzzy negation has an equilibrium point then it is unique. However, not all fuzzy negations have an equilibrium point \cite{bedregal10}, e.g. the fuzzy negations $N_\bot$ and $N_\top$, respectively given as:
\begin{eqnarray*}
N_\bot(x)=\left\{
\begin{array}{ccl}
0, & \mbox{if} & x>0; \\
1, & \mbox{if} & x= 0;
\end{array}
\right. &&
N_\top(x)=\left\{
\begin{array}{ccl}
0, & \mbox{if} & x=1; \\
1, & \mbox{if} & x< 1. 
\end{array}
\right.
\end{eqnarray*}

Clearly, for all fuzzy negation $N$, it holds that $N_\bot\leq N\leq N_\top$.


In \cite[Prop.~4.2]{navara99}, Navara introduced the notion of negation-preserving automorphisms, assuring that an \textbf{$N_S$-preserving automorphism} $\rho  \in Aut([0,1])$ 
 commutes with the usual fuzzy negation $N_S$, meaning that $\rho(N_S(x))= N_S(\rho(x))$, for all $x \in [0,1]$. Additionally, a natural generalization of such notion is given by Bedregal, in~\cite{bedregal10}:  Let $N$ be a fuzzy negation.  A function $\rho \in Aut([0,1])$ is an \textbf{$N$-preserving automorphism}  if and only if $\rho$ verifies the condition
\begin{eqnarray}\label{preserving auto}
\rho(N(x)) & = & N(\rho(x)),  \qquad  \forall ~x\in [0,1].
\end{eqnarray}

 The Navara's characterization for negation-preserving automorphisms is generalized below.

\begin{proposition}\cite[Proposition~2.6]{bedregal10}\label{bedregalprop2.6}
Let $N$ be a strong fuzzy negation which has $e$ as the unique equilibrium point of $N$. When $\rho \in Aut([0,e])$ then $\rho^N: [0,1]\rightarrow [0,1]$, defined by 
\begin{eqnarray}\label{rhoN}
\rho^N(x)=\left\{
\begin{array}{lll}
\rho(x), & \mbox{if} & x\leq e; \\
(N \circ \rho \circ N) (x), & \mbox{if} & x> e 
\end{array}
\right.
\end{eqnarray}
is an $N$-preserving automorphism. 
\end{proposition}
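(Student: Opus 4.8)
The plan is to establish, one after another, that the piecewise formula for $\rho^N$ is well defined, that $\rho^N$ is an automorphism of $[0,1]$, and finally that it satisfies the commuting condition~\eqref{preserving auto}. The facts I would keep in hand throughout are that a strong fuzzy negation is strict, hence continuous and strictly decreasing, that $N(N(x))=x$, and that the (unique) equilibrium point gives $N(e)=e$, so that $x<e\iff N(x)>e$, $x>e\iff N(x)<e$, and $x=e\iff N(x)=e$. For well-definedness: if $x\le e$ then $\rho(x)$ is legitimate because $\rho\in Aut([0,e])$, and $\rho$ maps $[0,e]$ onto $[0,e]$ with $\rho(0)=0$ and $\rho(e)=e$; if $x>e$ then $N(x)<e$, so $\rho(N(x))$ is legitimate and lies in $[0,e)$, whence $N(\rho(N(x)))\in(e,1]$. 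Thus the first branch maps $[0,e]$ into $[0,e]$ and the second maps $(e,1]$ into $(e,1]$, so the two clauses do not conflict.

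Next I would check $\rho^N\in Aut([0,1])$. Each branch is continuous as a composition of continuous maps, and at the splicing point the one-sided limits agree with the value: $\rho^N(e)=\rho(e)=e$ from the left, and from the right $N(\rho(N(e)))=N(\rho(e))=N(e)=e$. For strict monotonicity, $\rho$ is strictly increasing on $[0,e]$; on $(e,1]$ the composite $N\circ\rho\circ N$ is strictly increasing since it is a strictly decreasing map after a strictly increasing map after a strictly decreasing map; and across $e$, any $x>e$ gives $N(x)<e$, hence $\rho(N(x))<\rho(e)=e$, hence $\rho^N(x)=N(\rho(N(x)))>N(e)=e\ge\rho(x')=\rho^N(x')$ for every $x'\le e$. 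Finally the boundary conditions hold: $\rho^N(0)=\rho(0)=0$ and $\rho^N(1)=N(\rho(N(1)))=N(\rho(0))=N(0)=1$.

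For the preservation identity $\rho^N(N(x))=N(\rho^N(x))$ I would split on the position of $x$ relative to $e$. If $x>e$, then $N(x)<e$, so $\rho^N(N(x))=\rho(N(x))$; on the other hand $\rho^N(x)=N(\rho(N(x)))$, so $N(\rho^N(x))=N(N(\rho(N(x))))=\rho(N(x))$ by involutivity, and the two sides coincide. If $x<e$, then $N(x)>e$, so $\rho^N(N(x))=N(\rho(N(N(x))))=N(\rho(x))$, while $\rho^N(x)=\rho(x)$ yields $N(\rho^N(x))=N(\rho(x))$; again they coincide. The case $x=e$ is immediate, since $\rho^N(e)=e=N(e)=N(\rho^N(e))$. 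This exhausts all possibilities, so $\rho^N$ is an $N$-preserving automorphism. I do not foresee a real obstacle: once the equivalences $x\lessgtr e\iff N(x)\gtrless e$ are recorded, every step is a short computation, and the only point that needs a little care is continuity and strict monotonicity at the gluing value $x=e$, which is exactly where the hypothesis $\rho\in Aut([0,e])$ (and hence $\rho(e)=e$) is used.
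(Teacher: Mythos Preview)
Your proof is correct. Note, however, that the paper does not provide its own proof of this proposition: it is quoted verbatim from \cite[Proposition~2.6]{bedregal10} and used as a known result, so there is no argument in the paper to compare against. Your direct verification---checking well-definedness of the two branches, continuity and strict monotonicity across the gluing point $x=e$ (using $\rho(e)=e$), the boundary values, and then the commuting identity $\rho^N(N(x))=N(\rho^N(x))$ by a case split on the position of $x$ relative to $e$---is exactly the natural way to prove this and matches the standard argument one would expect in the cited source.
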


Additionally,  $N$-preserving automorphisms are given as Eq.(\ref{rhoN}).

\begin{proposition}\cite[Proposition~2.7]{bedregal10}\label{bedregalprop2.7}
Let $N$ be a strong fuzzy negation which has $e$ as the unique equilibrium point. When $\rho \in Aut([0,e])$ then $\rho^{N^{ -1}}$ is an $N$-preserving automorphism. 
\end{proposition}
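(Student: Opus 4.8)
The plan is to unfold the definition of $\rho^{N^{-1}}$ — the function obtained from Eq.~(\ref{rhoN}) by replacing $N$ throughout by $N^{-1}$, which is legitimate since $N^{-1}(e)=e$ (equivalently, one may read it as the Eq.~(\ref{rhoN})-extension of $\rho^{-1}\in Aut([0,e])$; the two readings agree for strong $N$) — and then verify the two properties required of an $N$-preserving automorphism: that $\rho^{N^{-1}}$ is an automorphism of $[0,1]$, and that it commutes with $N$. I would first record the quick route: since $N$ is strong it is an involution, so $N^{-1}=N$ and $\rho^{N^{-1}}$ coincides literally with the map $\rho^{N}$ of Proposition~\ref{bedregalprop2.6}, whence the conclusion is immediate. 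I would nevertheless give a self-contained argument not relying on involutivity, since then the same reasoning covers the merely strict case too; below, $N\circ N^{-1}=N^{-1}\circ N=\mathrm{id}$ is the only algebraic identity used.

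First, that $\rho^{N^{-1}}$ is an automorphism of $[0,1]$. On $[0,e]$ it equals $\rho\in Aut([0,e])$: continuous, strictly increasing, with $\rho(0)=0$ and $\rho(e)=e$. On $[e,1]$ it equals $N^{-1}\circ\rho\circ N^{-1}$, a composition of continuous strictly monotone maps (order-reversing, then order-preserving, then order-reversing), hence continuous and strictly increasing on $[e,1]$; since $N^{-1}(1)=0$, $\rho(0)=0$, $N^{-1}(0)=1$, it sends $1$ to $1$, and evaluating at $e$ gives $N^{-1}(\rho(N^{-1}(e)))=N^{-1}(\rho(e))=N^{-1}(e)=e$, matching the lower branch. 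So the two pieces agree at the junction $x=e$, the glued map is continuous and strictly increasing on all of $[0,1]$, and $\rho^{N^{-1}}(0)=0$, $\rho^{N^{-1}}(1)=1$; hence $\rho^{N^{-1}}\in Aut([0,1])$.

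Next, the commuting identity $\rho^{N^{-1}}(N(x))=N(\rho^{N^{-1}}(x))$ for all $x\in[0,1]$. Since $N$ is decreasing with $N(e)=e$, we have $x\le e\iff N(x)\ge e$, so I would split into the cases $x\le e$ and $x>e$; in each case one applies the appropriate branch of the piecewise definition to each side, collapses the resulting strings of $N$'s, $N^{-1}$'s and $\rho$'s via $N\circ N^{-1}=N^{-1}\circ N=\mathrm{id}$, and observes that both sides reduce to the same expression. The boundary value $x=e$ falls under the $x\le e$ case and is consistent because the two branches agree there. This shows $\rho^{N^{-1}}$ is $N$-preserving, completing the proof.

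The step I expect to be the main obstacle is the bookkeeping in this last case analysis: one must track which subinterval of $[0,1]$ each intermediate value lies in — using $x\le e\iff N(x)\ge e$, $\rho([0,e])=[0,e]$ and $N^{-1}([e,1])=[0,e]$ — so that the correct branch of $\rho^{N^{-1}}$ is selected at every stage, and the junction $x=e$ is treated coherently on both sides. Everything else reduces to the routine facts that compositions of continuous strictly monotone functions are again continuous and strictly monotone and that the stated boundary conditions propagate through the construction.
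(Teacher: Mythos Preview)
The proposal misreads the notation. In this paper Proposition~\ref{bedregalprop2.7} is cited without proof, but the paper later proves its $n$-dimensional analogue (the final proposition of Section~\ref{sec-4}), explicitly introduced as ``analogous to Proposition~\ref{bedregalprop2.7}'', and there the object under study is written $(\varphi^{\mathcal N})^{-1}$. So $\rho^{N^{-1}}$ is intended as $(\rho^{N})^{-1}$, the inverse of the automorphism produced in Proposition~\ref{bedregalprop2.6}, not the result of substituting $N^{-1}$ for $N$ in Eq.~(\ref{rhoN}). With the intended reading the argument is the two-line one the paper gives in the $n$-dimensional case: $\rho^N$ is $N$-preserving by Proposition~\ref{bedregalprop2.6}, and for any bijection $\sigma$ with $\sigma\circ N=N\circ\sigma$, composing on both sides by $\sigma^{-1}$ yields $N\circ\sigma^{-1}=\sigma^{-1}\circ N$.

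Your parenthetical claim that your two readings ``agree for strong $N$'' is also wrong: substituting $N^{-1}$ for $N$ in Eq.~(\ref{rhoN}) gives back $\rho^N$ itself (since $N^{-1}=N$), whereas the ``Eq.~(\ref{rhoN})-extension of $\rho^{-1}$'' gives $(\rho^{-1})^N=(\rho^N)^{-1}$, and these differ whenever $\rho^N$ is not an involution. Your ``self-contained'' argument likewise does not survive the passage to merely strict $N$: for $x<e$ one has $\rho^{(N^{-1})}(N(x))=N^{-1}(\rho(N^{-1}(N(x))))=N^{-1}(\rho(x))$, while $N(\rho^{(N^{-1})}(x))=N(\rho(x))$, and equating these forces $N=N^{-1}$ at $\rho(x)$, not merely $N\circ N^{-1}=\mathrm{id}$. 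That your reading makes the statement collapse to Proposition~\ref{bedregalprop2.6} is itself a signal that it is not the intended one.
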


\subsection{$n$-Dimensional fuzzy sets}

Let $X$ be a non empty set and $n\in\mathbb{N}^{+}=\mathbb{N}-\{0\}$. According to \cite{shang},\textbf{ an $n$-dimensional fuzzy set $A$ over $X$} is given by 
\begin{center}
$A=\{(x, \mu_{A_1}(x), \ldots, \mu_{A_n}(x)): x\in X\}$,
\end{center}
 where, for each $i=1, \ldots, n$, $\mu_{A_i}:X\rightarrow [0,1]$  is called $i$-th membership  degree of $A$, which also satisfies the condition: $\mu_{A_1}(x)\leq \ldots \leq \mu_{A_n}(x)$, for $x \in X$. 

In \cite{bedregal11}, for  $n\geq 1$, an \textbf{$n$-dimensional upper simplex} is given as 
\begin{eqnarray}
L_n([0,1])=\{(x_1, \ldots, x_n)\in [0,1]^{n}:x_1\leq \ldots \leq x_n\},
\label{L_n}
\end{eqnarray}
and its elements are called \textbf{$n$-dimensional intervals}. 

For each $i=1, \ldots, n$, the $i$-th projection of $L_n([0,1])$ is the function $\pi_i:L_n([0,1])\rightarrow [0,1]$ defined by $\pi_i(x_1, \ldots , x_n)=x_i$. 

Notice that $L_1([0,1])=[0,1]$ and $L_2([0,1])$ reduces to the usual lattice  of all the closed subintervals of the unit interval $[0,1]$. 

A \textbf{degenerate element}  $\textbf{x}\in L_n([0,1])$ verifies the following condition  
\begin{equation}\pi_i(\textbf{x})=\pi_j(\textbf{x}), \,\,\, \forall~ i,j=1, \ldots, n.
\end{equation} 
The degenerate element $(x, \ldots,x)$ of $L_n([0,1])$, for each $x\in [0,1]$, will be denoted by $/x/$ and the set of all degenerate elements of $L_n([0,1])$ will be denoted by $\mathcal{D}_n$. 

An $m$-ary function $F:L_n([0,1])^{m}\rightarrow L_n([0,1])$ is called \textbf{$\mathcal{D}_n$-preserve} function or a function  preserving degenerate elements if  the following condition holds

\noindent(\textbf{DP})\hspace{1cm} $F(\mathcal{D}^{m}_n) = F(/x_1/, \ldots, /x_m/)\in\mathcal{D}_n, \\ \forall x_1, \ldots, x_m\in [0,1]$.

Based on~\cite{bedregal11}, the supremum and infimum  on $L_n([0,1])$ are both given, for all $\textbf{x},\textbf{y} \in L_n([0,1])$, as
\begin{eqnarray}\label{vee}
 \textbf{x}\vee \textbf{y}&=&(\max(x_1,y_1), \ldots, \max(x_n,y_n)),\\
\label{wedge}
 \textbf{x}\wedge \textbf{y}&=&(\min(x_1,y_1), \ldots, \min(x_n,y_n)).
\end{eqnarray}
And, by considering the natural extension of the order $\leq$ on $L_2([0,1])$ as in \cite{bedregal10,bustince99} to higher dimensions,   for all $\textbf{x},\textbf{y} \in L_n([0,1])$, it holds that
\begin{eqnarray}\label{mono_proji}
\textbf{x}\leq\textbf{y}~\mbox{iff} ~ \pi_i(\textbf{x})\leq \pi_i(\textbf{y}), \,\,\  ~\forall ~i=1, \ldots, n.
\label{xleqy}
\end{eqnarray}

 \subsection{Continuous $n$-dimensional interval function}

 In the following, the continuity of a function $F: L_n([0,1])\rightarrow L_n([0,1])$, called an  $n$-dimensional function or an  
 \textbf{$n$-dimensional interval function}, will be studied, based on the continuity on $L([0,1]^n)$.

\begin{proposition}\label{PropCont}
Let $\rho:[0,1]^n\to L_n([0,1])$ be the function defined by $\rho(x_1, \ldots, x_n)=[x_{(1)}, \ldots, x_{(n)}]$, when $(x_{(1)}, \ldots, x_{(n)})$ is a fixed-permutation of a tuple $(x_1, \ldots, x_n)$ such that $x_{(i)}\leq x_{(i+1)}$ for all $i=1, \ldots, n-1$ and let $\sigma:L_n([0,1])\rightarrow [0,1]^n$ be the function defined by $\sigma([x_1, \ldots, x_n])= (x_1, \ldots, x_n)$. For an $n$-dimensional interval function  $F: L_n([0,1])\rightarrow L_n([0,1])$, the corresponding operator $F^{\rho}: [0,1]^n \rightarrow [0,1]^n$ given by
\begin{equation}\label{eq-f-rho}
F^\rho(x_1, \ldots, x_n)=\sigma( F(\rho(x_1, \ldots, x_n))), 
\end{equation}
 for all $(x_1, \ldots, x_n) \in  L_n([0,1])$, is a non-injective function. 
\end{proposition}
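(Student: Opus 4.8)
The plan is to observe that $F^{\rho}$ factors as the composite $F^{\rho}=\sigma\circ F\circ\rho$, and that its failure of injectivity is already forced by the first map $\rho$ alone, irrespective of $F$. Indeed, for $n\geq 2$ the sorting map $\rho\colon[0,1]^n\to L_n([0,1])$ is not injective: the non-decreasing rearrangement of a finite tuple is unique and depends only on the multiset of its entries, so any two tuples of $[0,1]^n$ that differ merely by a permutation of their coordinates (and have at least two distinct entries) are sent by $\rho$ to the same $n$-dimensional interval.

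Concretely, I would exhibit the witnesses $\mathbf{a}=(0,1,1,\ldots,1)$ and $\mathbf{b}=(1,0,1,\ldots,1)$ in $[0,1]^n$. For $n\geq 2$ these are distinct, yet both are permutations of $(0,1,\ldots,1)$, whose unique non-decreasing rearrangement is $[0,1,1,\ldots,1]\in L_n([0,1])$; hence $\rho(\mathbf{a})=\rho(\mathbf{b})=[0,1,1,\ldots,1]$. Applying $F$ gives $F(\rho(\mathbf{a}))=F(\rho(\mathbf{b}))$, and then applying $\sigma$ (which is a genuine function out of $L_n([0,1])$, so it respects equalities) yields
\[
F^{\rho}(\mathbf{a})=\sigma\bigl(F(\rho(\mathbf{a}))\bigr)=\sigma\bigl(F(\rho(\mathbf{b}))\bigr)=F^{\rho}(\mathbf{b}),
\]
while $\mathbf{a}\neq\mathbf{b}$. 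Therefore $F^{\rho}$ is not injective, as claimed.

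There is essentially no obstacle here: the only points needing care are to check that $\rho$ as defined really does identify permuted tuples (immediate, as noted above) and to observe that one needs $n\geq 2$ for such a pair $\mathbf{a}\neq\mathbf{b}$ to exist — for $n=1$ the maps $\rho$ and $\sigma$ are identities and the statement is not at issue. Note also that the specific behaviour of $F$ is never used; the role of $\sigma$ being an injection into $[0,1]^n$ is only to make $F^{\rho}$ well-typed and plays no part in the argument.
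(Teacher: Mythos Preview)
Your proof is correct and is exactly the kind of argument the paper has in mind: the paper's own proof is simply ``Straightforward,'' and your explicit witnesses $(0,1,1,\ldots,1)$ and $(1,0,1,\ldots,1)$ make that straightforwardness concrete. Your remark that the case $n=1$ must be excluded is also well taken.
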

\begin{proof}
Straightforward. 
\end{proof}

Based on the above results, from Eq.~(\ref{vee}) to Eq.~(\ref{mono_proji}), the lattice \mbox{$(L_n([0,1]),\leq)$}  satisfies the \textbf{notion of continuity} given as follows:

\begin{definition}\label{DefCont}
An $n$-dimensional function $F: L_n([0,1])\rightarrow L_n([0,1])$ is continuous if the function $F^\rho:[0,1]^n\rightarrow [0,1]^n$ given by Eq.~(\ref{eq-f-rho}) is also continuous. 
\end{definition}

Hence, the continuity of an $n$-dimensional function $F: L_n([0,1])\rightarrow L_n([0,1])$  is induced by the composition $F^\rho = \sigma \circ  F \circ \rho$  
from the usual continuity notion on $[0,1]^n$.

\section{Fuzzy negations on $L_n([0,1])$}\label{sec-3}

In this section, we study the notion of fuzzy negation on the lattice $(L_n([0,1]), \leq)$ as conceived by Bedregal in \cite{bedregal12} 
and their relation with usual notion of fuzzy negation. 

\begin{definition}
A function  $\mathcal{N}:L_n([0,1])\rightarrow L_n([0,1])$  is an \textbf{$n$-dimensional fuzzy negation} if it satisfies the following properties:

\textbf{N1}: $\mathcal{N}(/0/)=/1/$ and $\mathcal{N}(/1/)=/0/$;

\textbf{N2}:  If ${\bf x}\leq {\bf y}$ then $\mathcal{N}({\bf x})\geq\mathcal{N}({\bf y})$, for all ${\bf x}, {\bf y} \in L_n([0,1])$.
\end{definition}

\begin{proposition}\cite[Proposition 3.1]{bedregal12}\label{tilden}
Let $N_1, \ldots, N_n$ be fuzzy negations such that $N_1\leq\ldots\leq N_n$. Then $\widetilde{N_1 \ldots N_n}: L_n([0,1])\rightarrow L_n([0,1])$ defined by
\begin{eqnarray}
\widetilde{N_1\ldots N_n}({\bf x})=(N_1(\pi_n({\bf x})), \ldots, N_n(\pi_1({\bf x})))
\label{tildeN}
\end{eqnarray}
is an $n$-dimensional fuzzy negation.
\end{proposition}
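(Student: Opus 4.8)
The plan is to verify the two defining conditions \textbf{N1} and \textbf{N2} of an $n$-dimensional fuzzy negation for the map $\mathcal{N} = \widetilde{N_1\ldots N_n}$ defined by Eq.~(\ref{tildeN}), after first checking the (implicit) well-definedness, namely that $\widetilde{N_1\ldots N_n}({\bf x})$ actually lands in $L_n([0,1])$. For well-definedness, fix ${\bf x} = (x_1,\ldots,x_n)$ with $x_1\le\cdots\le x_n$; I must show the tuple $(N_1(x_n), N_2(x_{n-1}), \ldots, N_n(x_1))$ has nondecreasing coordinates, i.e.\ $N_i(x_{n-i+1}) \le N_{i+1}(x_{n-i})$ for each $i=1,\ldots,n-1$. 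Since $x_{n-i} \le x_{n-i+1}$, antitonicity of $N_{i+1}$ gives $N_{i+1}(x_{n-i+1}) \le N_{i+1}(x_{n-i})$, and the hypothesis $N_i \le N_{i+1}$ gives $N_i(x_{n-i+1}) \le N_{i+1}(x_{n-i+1})$; chaining these two inequalities yields $N_i(x_{n-i+1}) \le N_{i+1}(x_{n-i})$, as needed. This is the one place where the comparability assumption $N_1 \le \cdots \le N_n$ is essential, so I expect it to be the main (though still routine) obstacle.

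For \textbf{N1}, evaluate at the degenerate elements: $\widetilde{N_1\ldots N_n}(/0/) = (N_1(0),\ldots,N_n(0)) = (1,\ldots,1) = /1/$ by property N1 of each $N_i$, and similarly $\widetilde{N_1\ldots N_n}(/1/) = (N_1(1),\ldots,N_n(1)) = (0,\ldots,0) = /0/$. For \textbf{N2}, take ${\bf x} \le {\bf y}$ in $L_n([0,1])$, which by Eq.~(\ref{xleqy}) means $x_i \le y_i$ for all $i$. Then for each coordinate $i$ of the output I compare $\pi_i(\widetilde{N_1\ldots N_n}({\bf x})) = N_i(\pi_{n-i+1}({\bf x})) = N_i(x_{n-i+1})$ with $N_i(y_{n-i+1})$; since $x_{n-i+1} \le y_{n-i+1}$ and $N_i$ is antitone, $N_i(x_{n-i+1}) \ge N_i(y_{n-i+1})$, hence $\pi_i(\widetilde{N_1\ldots N_n}({\bf x})) \ge \pi_i(\widetilde{N_1\ldots N_n}({\bf y}))$ for every $i$, which by Eq.~(\ref{xleqy}) is exactly $\widetilde{N_1\ldots N_n}({\bf x}) \ge \widetilde{N_1\ldots N_n}({\bf y})$.

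Altogether the proof is a direct verification: the only nontrivial ingredient is the interplay between the reversal of indices (the $k$-th output uses the $(n-k+1)$-th input) and the chain condition $N_1\le\cdots\le N_n$, which together guarantee the output is again an $n$-dimensional interval. Since the result is quoted from \cite{bedregal12}, I would keep the write-up short, presenting the well-definedness chain of inequalities explicitly and then dispatching N1 and N2 in one line each.
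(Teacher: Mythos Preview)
Your proof is correct and complete: the well-definedness step (chaining $N_i(x_{n-i+1})\le N_{i+1}(x_{n-i+1})\le N_{i+1}(x_{n-i})$), the boundary condition \textbf{N1}, and the antitonicity \textbf{N2} are all verified exactly as they should be. Note, however, that the paper itself does not supply a proof of this proposition---it is quoted verbatim from \cite[Proposition~3.1]{bedregal12}---so there is no in-paper argument to compare yours against; your direct verification is the standard one and is entirely appropriate here.
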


Additionally, according to \cite{bedregal11}, when $i=1, \ldots, n-1$, the $\subseteq_i$-relation with respect to the $i$-th component of $\textbf{x}, \textbf{y} \in L_n([0,1])$ is given as the following 
\begin{equation}\label{eq_pi}
\textbf{x}\subseteq_i \textbf{y} \mbox{ when } \pi_i(\textbf{y})\leq \pi_i(\textbf{x})\leq \pi_{i+1}(\textbf{x})\leq\pi_{i+1}(\textbf{y}).
\end{equation}

\subsection{Representability and monotonicity of fuzzy negations on $L_n([0,1])$}

In order to analyse  properties related to equilibrium point,  representable and monotone fuzzy negations on $L_n([0,1])$ are firstly studied in this section.

An $n$-dimensional fuzzy negation $\mathcal{N}$ is called \textbf{$n$-representable} if 
there exist fuzzy negations $N_1,\ldots ,N_n$ such that 
\begin{equation}\label{ineq_neg}
N_1\leq\ldots\leq N_n \mbox{ and } \mathcal{N}=\widetilde{N_1 \ldots N_n}.
\end{equation} 
By reducing notation, when $N_i=N$ for all $i=1, \ldots, n$,  an $n$-representable fuzzy negation $\widetilde{N \ldots N}$ will be denoted by $\widetilde{N}$.

\begin{proposition} Let $C^k(x) = 1-x^{k}$, for $k \in \{1,2, \ldots , n\}$. Then an $n$-dimensional function $\mathcal{C}: L_n([0,1])\rightarrow L_n([0,1])$ given as the 
following
\begin{equation}
\mathcal{C}({\bf{x}}) = \widetilde{C^1\ldots C^n}({\bf x})=(C^1(\pi_n({\bf x})), \ldots, C^n(\pi_1({\bf x})))
\end{equation} 
is an $n$-representable fuzzy negation on $L_n([0,1])$.
\end{proposition}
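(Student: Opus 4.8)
The plan is to verify that $\mathcal{C} = \widetilde{C^1 \ldots C^n}$ satisfies the hypotheses of Proposition~\ref{tilden}, which then immediately yields that $\mathcal{C}$ is an $n$-dimensional fuzzy negation, and moreover $n$-representable by the very definition in Eq.~(\ref{ineq_neg}). So the work reduces to two claims: (i) each $C^k(x) = 1 - x^k$ is a fuzzy negation on $[0,1]$; and (ii) the chain of inequalities $C^1 \leq C^2 \leq \cdots \leq C^n$ holds pointwise on $[0,1]$.

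For claim (i), I would check conditions N1 and N2 directly. For N1: $C^k(0) = 1 - 0^k = 1$ and $C^k(1) = 1 - 1^k = 0$. For N2: if $x \leq y$ in $[0,1]$, then $x^k \leq y^k$ since $t \mapsto t^k$ is increasing on $[0,1]$, hence $C^k(x) = 1 - x^k \geq 1 - y^k = C^k(y)$. (For $k=1$, $C^1 = N_S$ is the standard strong negation; for $k \geq 2$, one may alternatively cite Proposition~\ref{pro-dk}, which already states $C^k$ is a strict fuzzy negation, hence in particular a fuzzy negation.) This step is entirely routine.

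For claim (ii), fix $x \in [0,1]$ and $1 \leq k < \ell \leq n$; I must show $C^k(x) \leq C^\ell(x)$, i.e. $1 - x^k \leq 1 - x^\ell$, i.e. $x^\ell \leq x^k$. Since $x \in [0,1]$ and $\ell \geq k$, we have $x^\ell = x^k \cdot x^{\ell - k} \leq x^k$ because $x^{\ell-k} \leq 1$. (The edge cases $x = 0$ and $x = 1$ give equality and are handled by the same computation.) Hence $C^1 \leq \cdots \leq C^n$, and Proposition~\ref{tilden} applies with $N_i = C^i$, giving that $\widetilde{C^1 \ldots C^n}$ is an $n$-dimensional fuzzy negation whose formula is precisely $\mathcal{C}(\mathbf{x}) = (C^1(\pi_n(\mathbf{x})), \ldots, C^n(\pi_1(\mathbf{x})))$ as displayed. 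By Eq.~(\ref{ineq_neg}) this exhibits $\mathcal{C}$ as $n$-representable, completing the proof.

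There is no real obstacle here: the statement is essentially an instantiation of Proposition~\ref{tilden}, and the only thing to be careful about is the direction of the monotonicity of $t \mapsto t^k$ in the parameter $k$ on $[0,1]$ (larger exponent gives smaller value), which is the opposite of what happens on $[1,\infty)$. I would simply make that one-line observation explicit and otherwise defer to the cited proposition.
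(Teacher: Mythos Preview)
Your proposal is correct and follows essentially the same route as the paper: both establish the pointwise ordering $C^1\leq\cdots\leq C^n$ via $x^\ell\leq x^k$ for $x\in[0,1]$ and $k\leq\ell$, and then conclude $n$-representability. The only cosmetic difference is that the paper re-verifies \textbf{N1} and \textbf{N2} for $\mathcal{C}$ by hand, whereas you invoke Proposition~\ref{tilden} directly after checking its hypotheses --- your use of the existing machinery is slightly more economical but not substantively different.
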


\begin{proof}
For all $1 \leq  i \leq j  \leq n$ and $x\in [0,1]$, we have that   $1-x^{i} \leq 1-x^j$, resulting the following inequalities:  $C^{1} \leq  C^{i} \leq C^{j} \leq C^{n}$. 
Additionally, the following is verified:

 \textbf{N1:}  $\widetilde{C^1{\ldots} C^n}(/0/)=  (C^1(\pi_n(/0/)), {\ldots}, C^n(\pi_1(/0/))) =  (C^1(0), {\ldots}, C^n(0))=  /1/$; and \\
$\widetilde{C^1{\ldots} C^n}(/1/)=(C^1(\pi_n(/1/)), {\ldots}, C^n(\pi_1(/1/)))= (C^1(1), {\ldots}, C^n(1))=  /0/$;

\textbf{N2:} Based on monotonicity of projection-functions,  if $\bf{x}\geq \bf{y}$ then $$(C^1(\pi_n({\bf x})), \ldots, C^n(\pi_1({\bf x}))) \leq (C^1(\pi_n({\bf y})), \ldots, C^n(\pi_1({\bf y}))),$$therefore, $\widetilde{C^1\ldots C^n}({\bf x}) \leq \widetilde{C^1\ldots C^n}({\bf y})$ meaning that $\mathcal{C}(\bf{x}) \leq \mathcal{C}(\bf{y})$.

Concluding, $\mathcal{C} = \widetilde{C^1{\ldots} C^n}$ is an $n$-representable fuzzy negation on $L_n([0,1])$. 
\end{proof}\\

 Let $i\in \{1,\ldots,n-1\}$. An $n$-dimensional fuzzy negation $\mathcal{N}$ is called \textbf{$\subseteq_i$-monotone} if, for any $\textbf{x}, \textbf{y}\in L_n([0,1])$ 
 it holds that 
 \begin{equation}\label{cont_mono_i}
 \mathcal{N}(\textbf{x})\subseteq_i\mathcal{N}(\textbf{y}) \mbox{ whenever } \textbf{x}\subseteq_{n-i} \textbf{y}.
 \end{equation}
 
  Moreover, one can  say that an $n$-dimensional fuzzy negation $\mathcal{N}$ is called

 (i)  \textbf{$\subseteq$-monotone} if $\mathcal{N}$ is $\subseteq_i$-monotone for all $i=1, \ldots, n-1$; and
 
 (ii)  \textbf{monotone by part} when, for all $i=1, \ldots, n$ and $\textbf{x}, \textbf{y}\in L_n([0,1])$, 
\begin{equation}\label{eq_pi_ni}
\pi_{i}(\mathcal{N}(\textbf{x}))\leq\pi_{i}(\mathcal{N}(\textbf{y})) \mbox{ whenever } \pi_{n-i+1}(\textbf{x})\geq \pi_{n-i+1}(\textbf{y}).
\end{equation}

\begin{remark}
	When we say that $\mathcal{N}$ is $\subseteq_i$-monotone, for all $i=1,\ldots, n-1$, it does not mean that we will only consider ${\bf x}, {\bf y} \in L_n([0,1])$ such that $\pi_i({\bf y})\leq \pi_i({\bf x})\leq \pi_{i+1}({\bf x})\leq \pi_{i+1}({\bf y})$, for all $i=1, \ldots, n-1$. Instead, we consider all ${\bf x}, {\bf y} \in L_n([0,1])$ and if for some $i$,  ${\bf x}\subseteq_i {\bf y}$ then by Eq. (\ref{cont_mono_i}) we have that $\mathcal{N}({\bf x})\subseteq_{n-i} \mathcal{N}({\bf y})$. For example, consider a $n$-dimensional fuzzy negation $\mathcal{N}$ which is $\subseteq$-monotone, ${\bf x}=(0.2, 0.4, 0.7, 0.8, 0,9)$ and ${\bf y}=(0.1, 0.5, 0.6, 0.8, 1)$. Clearly, ${\bf x}\subseteq_1 {\bf y}$, ${\bf x}\subseteq_3 {\bf y}$ and ${\bf x}\subseteq_4 {\bf y}$ but  ${\bf x}\not\subseteq_2 {\bf y}$.  Since $\mathcal{N}$ is $\subseteq$-monotone we can conclude that of $\mathcal{N}({\bf x})\subseteq_4 \mathcal{N}({\bf y})$, $\mathcal{N}({\bf x})\subseteq_2 \mathcal{N}({\bf y})$ and $\mathcal{N}({\bf x})\subseteq_1 \mathcal{N}({\bf y})$.
\end{remark}

\begin{proposition}\label{NN}
 Let $\mathcal{N}$ be an $n$-dimensional fuzzy negation. Then, for all $i=1, \ldots, n$, the function $N_i: [0,1]\rightarrow [0,1]$ defined by 
\begin{eqnarray}\label{formN}
N_i(x)=\pi_i(\mathcal{N}(/{x}/))
\end{eqnarray}
is a fuzzy negation.
\end{proposition}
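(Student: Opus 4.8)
The plan is to verify directly that $N_i$ as defined by $N_i(x)=\pi_i(\mathcal{N}(/x/))$ satisfies the two defining axioms N1 and N2 of a fuzzy negation on $[0,1]$, using only the corresponding axioms \textbf{N1} and \textbf{N2} of the $n$-dimensional fuzzy negation $\mathcal{N}$ together with the monotonicity of the projections $\pi_i$.

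First I would check the boundary conditions N1. Since $\mathcal{N}(/0/)=/1/=(1,\ldots,1)$, applying $\pi_i$ gives $N_i(0)=\pi_i(/1/)=1$; similarly $\mathcal{N}(/1/)=/0/=(0,\ldots,0)$ yields $N_i(1)=\pi_i(/0/)=0$. So $N_i(0)=1$ and $N_i(1)=0$, which is exactly N1 for $N_i$.

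Next I would check antitonicity N2. Suppose $x\leq y$ in $[0,1]$. Then the degenerate elements satisfy $/x/=(x,\ldots,x)\leq(y,\ldots,y)=/y/$ in $L_n([0,1])$, since by Eq.~(\ref{xleqy}) the order on $L_n([0,1])$ is componentwise. By property \textbf{N2} of $\mathcal{N}$ we get $\mathcal{N}(/x/)\geq\mathcal{N}(/y/)$ in $L_n([0,1])$, which again by Eq.~(\ref{xleqy}) means $\pi_i(\mathcal{N}(/x/))\geq\pi_i(\mathcal{N}(/y/))$ for every $i=1,\ldots,n$. Hence $N_i(x)\geq N_i(y)$, establishing N2 for $N_i$. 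It then follows that $N_i$ is a fuzzy negation for each $i$.

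There is essentially no hard part here: the only thing worth noting is that one must explicitly invoke that the order on $L_n([0,1])$ restricted to the degenerate elements $\mathcal{D}_n$ coincides (via the identification $x\mapsto/x/$) with the usual order on $[0,1]$, which is immediate from Eq.~(\ref{xleqy}). One should also remark that the values $N_i(x)=\pi_i(\mathcal{N}(/x/))$ need not form a comparable chain $N_1\leq\cdots\leq N_n$ at this stage, and indeed need not reconstruct $\mathcal{N}$ unless $\mathcal{N}$ is $n$-representable; but that is not claimed by the proposition, so the proof is complete once N1 and N2 are verified.
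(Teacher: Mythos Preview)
Your proof is correct and follows essentially the same route as the paper's: verify N1 using $\mathcal{N}(/0/)=/1/$ and $\mathcal{N}(/1/)=/0/$, then verify N2 by lifting $x\leq y$ to $/x/\leq /y/$, applying \textbf{N2} of $\mathcal{N}$, and projecting via Eq.~(\ref{xleqy}).

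One small correction to your closing remark: the $N_i$ \emph{do} automatically satisfy $N_1\leq\cdots\leq N_n$, since $\mathcal{N}(/x/)\in L_n([0,1])$ forces $\pi_1(\mathcal{N}(/x/))\leq\cdots\leq\pi_n(\mathcal{N}(/x/))$ for every $x$; what may fail without $n$-representability is the identity $\mathcal{N}=\widetilde{N_1\ldots N_n}$, not the ordering of the $N_i$.
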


\begin{proof}
Trivially, $N_i(0)=\pi_i(\mathcal{N}(/0/)=\pi_i(/1/)=1$ and $N_i(1)=\pi_i(\mathcal{N}(/1/)=\pi_i(/0/)=0$. Let $x,y\in [0,1]$, then the following is verified:
\begin{eqnarray*}
x\leq y & \Rightarrow & /x/\leq /y/ 
           \Rightarrow \mathcal{N}(/x/)\geq \mathcal{N}(/y/) \ \ \mbox{ by \textbf{N2}}\\
           & \Rightarrow & \pi_i(\mathcal{N}(/x/))\geq \pi_i(\mathcal{N}(/y/)) \Rightarrow  N_i(x)\geq N_i(y) \ \mbox{by Eq. (\ref{mono_proji})}
\end{eqnarray*}
Therefore, Proposition~\ref{NN} holds.
\end{proof}\\

In the following, the necessary and sufficient conditions under which we can obtain $n$-representable fuzzy negation on $L_n([0,1])$ are discussed. 

\begin{theorem}\label{subseq monotone}
An $n$-dimensional fuzzy negation $\mathcal{N}$ is $n$-representable iff $\mathcal{N}$ is $\subseteq$-monotone.
\end{theorem}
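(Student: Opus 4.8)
The plan is to prove the two implications separately: the forward one by a direct expansion using the representation formula~(\ref{tildeN}), and the converse by reconstructing the representing negations from the restriction of $\mathcal{N}$ to the degenerate elements $\mathcal{D}_n$ and then identifying each coordinate $\pi_i(\mathcal{N}(\mathbf{x}))$ with the corresponding diagonal value via a squeeze argument.

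For the ``only if'' direction I would assume $\mathcal{N}=\widetilde{N_1\ldots N_n}$ with fuzzy negations $N_1\le\cdots\le N_n$, fix $i\in\{1,\ldots,n-1\}$, and take $\mathbf{x}\subseteq_{n-i}\mathbf{y}$, i.e.\ $\pi_{n-i}(\mathbf{y})\le\pi_{n-i}(\mathbf{x})\le\pi_{n-i+1}(\mathbf{x})\le\pi_{n-i+1}(\mathbf{y})$. Using the identity $\pi_j(\mathcal{N}(\mathbf{z}))=N_j(\pi_{n+1-j}(\mathbf{z}))$ read off from Eq.~(\ref{tildeN}), the three inequalities expressing $\mathcal{N}(\mathbf{x})\subseteq_i\mathcal{N}(\mathbf{y})$ become $N_i(\pi_{n-i+1}(\mathbf{y}))\le N_i(\pi_{n-i+1}(\mathbf{x}))$, then $N_i(\pi_{n-i+1}(\mathbf{x}))\le N_{i+1}(\pi_{n-i}(\mathbf{x}))$, then $N_{i+1}(\pi_{n-i}(\mathbf{x}))\le N_{i+1}(\pi_{n-i}(\mathbf{y}))$. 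The outer two follow from antitonicity (N2) of $N_i$, $N_{i+1}$ applied to the hypothesis, and the middle one merely expresses $\mathcal{N}(\mathbf{x})\in L_n([0,1])$ (it follows from $N_i\le N_{i+1}$ and $\pi_{n-i}(\mathbf{x})\le\pi_{n-i+1}(\mathbf{x})$). Hence $\mathcal{N}$ is $\subseteq_i$-monotone for every such $i$, i.e.\ $\subseteq$-monotone.

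For the converse I would set $N_i(x)=\pi_i(\mathcal{N}(/x/))$ for $i=1,\ldots,n$; by Proposition~\ref{NN} each $N_i$ is a fuzzy negation, and $N_1\le\cdots\le N_n$ because $\mathcal{N}(/x/)\in L_n([0,1])$. By Eq.~(\ref{tildeN}) it remains to check, for every $\mathbf{x}$ and $i$, that $\pi_i(\mathcal{N}(\mathbf{x}))=N_i(\pi_{n+1-i}(\mathbf{x}))=\pi_i(\mathcal{N}(/\pi_{n+1-i}(\mathbf{x})/))$. Fixing $i$, I write $k=n+1-i$, $c=\pi_k(\mathbf{x})$, and take the intermediate element $\mathbf{x}'=\mathbf{x}\wedge/c/$, so that $\pi_j(\mathbf{x}')=\pi_j(\mathbf{x})$ for $j<k$ and $\pi_j(\mathbf{x}')=c$ for $j\ge k$; the target follows from a two-step squeeze $\pi_i(\mathcal{N}(\mathbf{x}))=\pi_i(\mathcal{N}(\mathbf{x}'))=\pi_i(\mathcal{N}(/c/))$. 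In step one, $\mathbf{x}\ge\mathbf{x}'$ gives $\pi_i(\mathcal{N}(\mathbf{x}))\le\pi_i(\mathcal{N}(\mathbf{x}'))$ by N2, while $\mathbf{x}'\subseteq_k\mathbf{x}$ (coordinates coincide through index $k$ and $\pi_{k+1}(\mathbf{x}')=c\le\pi_{k+1}(\mathbf{x})$) combined with $\subseteq_{i-1}$-monotonicity (here $n-(i-1)=k$) gives $\mathcal{N}(\mathbf{x}')\subseteq_{i-1}\mathcal{N}(\mathbf{x})$, whose rightmost defining inequality is $\pi_i(\mathcal{N}(\mathbf{x}'))\le\pi_i(\mathcal{N}(\mathbf{x}))$. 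In step two, $\mathbf{x}'\le/c/$ gives $\pi_i(\mathcal{N}(\mathbf{x}'))\ge\pi_i(\mathcal{N}(/c/))$ by N2, while $/c/\subseteq_{k-1}\mathbf{x}'$ combined with $\subseteq_i$-monotonicity (here $n-i=k-1$) gives $\mathcal{N}(/c/)\subseteq_i\mathcal{N}(\mathbf{x}')$, whose leftmost defining inequality is $\pi_i(\mathcal{N}(\mathbf{x}'))\le\pi_i(\mathcal{N}(/c/))$. Combining yields $\pi_i(\mathcal{N}(\mathbf{x}))=\pi_i(\mathcal{N}(/c/))=N_i(c)$, so $\mathcal{N}=\widetilde{N_1\ldots N_n}$ is $n$-representable.

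The hard part will not be any single inequality but the index bookkeeping: one must pick the intermediate element precisely as $\mathbf{x}'=\mathbf{x}\wedge/\pi_k(\mathbf{x})/$, verify the two relations $\mathbf{x}'\subseteq_k\mathbf{x}$ and $/c/\subseteq_{k-1}\mathbf{x}'$, and confirm that the specific monotonicity instances invoked ($\subseteq_{i-1}$ in step one, $\subseteq_i$ in step two) deliver exactly the one-sided inequalities needed and nothing stronger. One also has to keep $k$, $k-1$, $i-1$, $i$ inside the admissible range $\{1,\ldots,n-1\}$, which forces a separate, trivial handling of the extreme cases $i=1$ (where $\mathbf{x}'=\mathbf{x}$, so step one is vacuous) and $i=n$ (where $\mathbf{x}'=/\pi_1(\mathbf{x})/$, so step two is vacuous). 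Everything else reduces to axiom N2, the fact that $\wedge$ stays inside $L_n([0,1])$, and the elementary identity $\pi_j(\mathcal{N}(\mathbf{z}))=N_j(\pi_{n+1-j}(\mathbf{z}))$ encoding the representation.
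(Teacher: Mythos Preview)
Your proof is correct and the forward direction matches the paper's argument exactly. In the converse, you and the paper both define $N_i(x)=\pi_i(\mathcal{N}(/x/))$ and both establish $\pi_i(\mathcal{N}(\mathbf{x}))=\pi_i(\mathcal{N}(/x_{n-i+1}/))$ by a squeeze, but the paper does this more directly: it observes that the degenerate element $/x_{n-i+1}/$ satisfies \emph{two} $\subseteq$-relations with $\mathbf{x}$ simultaneously, namely $/x_{n-i+1}/\subseteq_{n-i}\mathbf{x}$ and $/x_{n-i+1}/\subseteq_{n-i+1}\mathbf{x}$, and applying $\subseteq_i$- and $\subseteq_{i-1}$-monotonicity to these yields both sides of the squeeze in one step (with the endpoints $i=1$ and $i=n$ handled, as you also do, by replacing one $\subseteq$-relation with the order comparison $\mathbf{x}\le /x_n/$ or $/x_1/\le\mathbf{x}$ and \textbf{N2}). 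Your route through the intermediate element $\mathbf{x}'=\mathbf{x}\wedge/c/$ is sound but unnecessary: the same two monotonicity instances you invoke can be applied directly between $/c/$ and $\mathbf{x}$, bypassing $\mathbf{x}'$ entirely. The payoff of the paper's version is a shorter argument with less index bookkeeping; the payoff of yours is perhaps a more mechanical verification, since each step mixes one $\subseteq$-relation with one ordinary order comparison rather than relying on the (slightly less obvious) fact that a degenerate point sits inside $\mathbf{x}$ at two consecutive indices at once.
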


\begin{proof}
$(\Rightarrow)$ If $\mathcal{N}$ is $n$-representable, then there exist fuzzy negations $N_1\leq\ldots\leq N_n$ such that $\mathcal{N}=\widetilde{N_1 \ldots N_n}$. 
For each $i=1,\ldots,n-1$, by the antitonicity of $N_{i's}$, it holds that
\begin{eqnarray*}
  \textbf{x}\subseteq_{n-i}\textbf{y} 
& \Rightarrow & \pi_{n-i}(\textbf{y})\leq \pi_{n-i}(\textbf{x})\leq \pi_{n-i+1}(\textbf{x})\leq  \pi_{n-i+1}(\textbf{y}) \ \mbox{ by Eq.(\ref{eq_pi})} \\
& \Rightarrow & N_{i}(\pi_{n-i+1}(\textbf{y}))\leq N_{i}(\pi_{n-i+1}(\textbf{x}))  \leq N_{i+1}  (\pi_{n-i}(\textbf{x})) \leq N_{i+1}(\pi_{n-i}(\textbf{y})) \\ &&  
 \mbox{ by $N2$ and since $N_{i}\leq N_{i+1}$}\\ 
& \Rightarrow & \pi_{i}(\mathcal{N}(\textbf{y}))\leq \pi_{i}(\mathcal{N}(\textbf{x}))\leq \pi_{i+1}(\mathcal{N}(\textbf{x}))\leq \pi_{i+1}(\mathcal{N}(\textbf{y})) ~ \mbox{by Eq.(\ref{tildeN})} \\
& \Rightarrow & \mathcal{N}(\textbf{x})\subseteq_i \mathcal{N}(\textbf{y}) \mbox{ by Eq.(\ref{eq_pi})}
\end{eqnarray*}
Hence, $\mathcal{N}$ is $\subseteq$-monotone. 

$(\Leftarrow)$ Firstly, for all $\textbf{x}\in L_n([0,1])$, when $i= 1, \dots ,n-1$, we have that $/x_{n-i+1}/\subseteq_{n-i+1} \textbf{x}$ as well as 
$/x_{n-i+1}/\subseteq_{n-i} \textbf{x}$. 
Since $\mathcal{N}$ is $\subseteq$-monotone then for each $i= 2, \dots ,n-1$
\begin{eqnarray*}
 /x_{n-i+1}/\subseteq_{n-i} \textbf{x}  & \Rightarrow & \mathcal{N}(/x_{n-i+1}/)\subseteq_{i} \mathcal{N}(\textbf{x}) \mbox{ by Eq.(\ref{cont_mono_i})}\\
                               & \Rightarrow &   \pi_i(\mathcal{N}(\textbf{x})) \leq \pi_i(\mathcal{N}(/x_{n-i+1}/))\ \mbox{by Eq.(\ref{eq_pi})} 
\end{eqnarray*}
and 
\begin{eqnarray*}
 /x_{n-i+1}/\subseteq_{n-i+1} \textbf{x}  & \Rightarrow & \mathcal{N}(/x_{n-i+1}/)\subseteq_{i-1} \mathcal{N}(\textbf{x}) \mbox{ by Eq.(\ref{cont_mono_i})}\\
                                & \Rightarrow &  \pi_i(\mathcal{N}(/x_{n-i+1}/)) \leq \pi_i(\mathcal{N}(\textbf{x})) \ \mbox{by Eq.(\ref{eq_pi})} 
 \end{eqnarray*}

So, for each $i= 2, \dots ,n-1$, $N_i(x_{n-i+1})=\pi_i(\mathcal{N}(/x_{n-i+1}/))=\pi_i(\mathcal{N}(\textbf{x}))$.

On the other hand, since $\mathcal{N}$ is $\subseteq$-monotone and decreasing, then 
\begin{eqnarray*}
/x_{n}/\subseteq_{n-1} \textbf{x} & \Rightarrow & \mathcal{N}(/x_{n}/)\subseteq_{1} \mathcal{N}(\textbf{x}) \mbox{ by Eq.(\ref{cont_mono_i})}\\
                               & \Rightarrow &   \pi_1(\mathcal{N}(\textbf{x})) \leq \pi_1(\mathcal{N}(/x_{n}/)) \mbox{ by Eq.(\ref{eq_pi})} 
\end{eqnarray*}
and 
 \begin{eqnarray*}
  \textbf{x} \leq /x_{n}/ & \Rightarrow & \mathcal{N}(/x_{n}/)\leq \mathcal{N}(\textbf{x})  \mbox{ by \textbf{N2}}\\
                                & \Rightarrow &   \pi_1(\mathcal{N}(/x_{n}/)) \leq\pi_1(\mathcal{N}(\textbf{x}))  \mbox{ by Eq.(\ref{mono_proji})} 
 \end{eqnarray*}
 
 Therefore,  $N_1(x_{n})=\pi_1(\mathcal{N}(/x_{n}/))=\pi_1(\mathcal{N}(\textbf{x}))$.  
Analogously,  since $\mathcal{N}$ is $\subseteq$-monotone and decreasing, then 
\begin{eqnarray*}
/x_{1}/\subseteq_{1} \textbf{x} & \Rightarrow & \mathcal{N}(/x_{1}/)\subseteq_{n-1} \mathcal{N}(\textbf{x}) \mbox{ by Eq.(\ref{cont_mono_i})}\\
                               & \Rightarrow &  \pi_n(\mathcal{N}(/x_{1}/))  \leq \pi_n(\mathcal{N}(\textbf{x})) \mbox{ by Eq.(\ref{eq_pi})} 
\end{eqnarray*}
and 
 \begin{eqnarray*}
   /x_{1}/\leq \textbf{x}  & \Rightarrow & \mathcal{N}(\textbf{x})\leq \mathcal{N}(/x_{1}/)  \mbox{ by \textbf{N2}}\\
                                & \Rightarrow &   \pi_n(\mathcal{N}(\textbf{x}))\leq \pi_n(\mathcal{N}(/x_{1}/))  \mbox{ by Eq.(\ref{mono_proji})} 
 \end{eqnarray*}
 
Concluding,  $N_n(x_{1})=\pi_n(\mathcal{N}(/x_{1}/))=\pi_n(\mathcal{N}(\textbf{x}))$.  
 So, $N_i(x_{n-i+1})=\pi_i(\mathcal{N}(\textbf{x}))$ for each $i=1,\ldots,n$ and consequently, $\mathcal{N}=\widetilde{N_1 \ldots N_n}$ and by Proposition \ref{NN}, the $N_{i's}$ are fuzzy negations and then $\mathcal{N}$ is n-representable.
\end{proof}

\begin{proposition}
If an $n$-dimensional fuzzy negation $\mathcal{N}$ is $n$-representable, then $\mathcal{N}$ is a function monotone by part.
\end{proposition}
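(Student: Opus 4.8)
The plan is to unpack both definitions and reduce the claim to the antitonicity of each component negation. First I would invoke the hypothesis that $\mathcal{N}$ is $n$-representable to write $\mathcal{N}=\widetilde{N_1\ldots N_n}$ for fuzzy negations $N_1\leq\ldots\leq N_n$, and then read off from Eq.~(\ref{tildeN}) the key identity
\begin{equation*}
\pi_i(\mathcal{N}(\textbf{x}))=N_i(\pi_{n-i+1}(\textbf{x})), \qquad \forall~ i=1,\ldots,n,\ \forall~\textbf{x}\in L_n([0,1]).
\end{equation*}
The only thing to check here is the index bookkeeping: the tuple $\widetilde{N_1\ldots N_n}(\textbf{x})=(N_1(\pi_n(\textbf{x})),\ldots,N_n(\pi_1(\textbf{x})))$ has, in its $i$-th slot, exactly $N_i(\pi_{n-(i-1)}(\textbf{x}))=N_i(\pi_{n-i+1}(\textbf{x}))$.

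Next I would fix $i\in\{1,\ldots,n\}$ and $\textbf{x},\textbf{y}\in L_n([0,1])$ satisfying the hypothesis of Eq.~(\ref{eq_pi_ni}), namely $\pi_{n-i+1}(\textbf{x})\geq\pi_{n-i+1}(\textbf{y})$. Applying the antitonicity (property N2) of the fuzzy negation $N_i$ to the points $\pi_{n-i+1}(\textbf{y})\leq\pi_{n-i+1}(\textbf{x})$ of $[0,1]$ gives $N_i(\pi_{n-i+1}(\textbf{x}))\leq N_i(\pi_{n-i+1}(\textbf{y}))$. Substituting the identity from the first step turns this into $\pi_i(\mathcal{N}(\textbf{x}))\leq\pi_i(\mathcal{N}(\textbf{y}))$, which is precisely the conclusion of Eq.~(\ref{eq_pi_ni}). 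Since $i$ was arbitrary, $\mathcal{N}$ is monotone by part.

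There is essentially no hard part: the whole argument is a one-line consequence of the representation formula together with the antitonicity of each $N_i$. The only place where care is needed is the off-by-one alignment between the $i$-th projection of $\mathcal{N}(\textbf{x})$ and the $(n-i+1)$-th projection of $\textbf{x}$, and matching it against the shift $i\mapsto n-i+1$ that already appears in the definition of monotone by part; getting that alignment right is what makes the substitution go through cleanly. (One could also observe that this proposition is strictly weaker than Theorem~\ref{subseq monotone}, since monotonicity by part is the ``degenerate-to-general'' shadow of $\subseteq$-monotonicity, but the direct verification above is shorter.)
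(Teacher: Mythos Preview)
Your proof is correct and follows essentially the same approach as the paper: invoke the representation $\mathcal{N}=\widetilde{N_1\ldots N_n}$, use antitonicity of $N_i$ on the hypothesis $\pi_{n-i+1}(\textbf{x})\geq\pi_{n-i+1}(\textbf{y})$, and identify the result with $\pi_i(\mathcal{N}(\textbf{x}))\leq\pi_i(\mathcal{N}(\textbf{y}))$ via Eq.~(\ref{tildeN}). Your version is slightly more explicit about the index alignment $\pi_i(\mathcal{N}(\textbf{x}))=N_i(\pi_{n-i+1}(\textbf{x}))$, but the argument is the same.
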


\begin{proof}
 If $\mathcal{N}$ is $n$-representable, then there exist fuzzy negations $N_1\leq\ldots\leq N_n$ such that $\mathcal{N}=\widetilde{N_1 \ldots N_n}$. 
 Based on  the antitonicity of $N_{i's}$ and by property \textbf{N2},  if  $\pi_{n-i+1}(\textbf{x})\geq\pi_{n-i+1}(\textbf{y})$ for some $i=1, \ldots, n$ and 
 $\textbf{x},\textbf{y}\in L_n([0,1])$, then  we have that 
 $N_i(\pi_{n-i+1}(\textbf{x}))\leq N_i(\pi_{n-i+1}(\textbf{y}))$. Therefore, by Eq.(\ref{tildeN}), $\pi_i(\mathcal{N}(\textbf{x}))\leq \pi_i(\mathcal{N}(\textbf{y}))$. 
 Hence, 
 $\mathcal{N}$ is a monotone by part fuzzy negation on  $L_n([0,1])$.
\end{proof}

The partial order on fuzzy negations can be extended for $n$-dimensional fuzzy negations. For that, let $\mathcal{N}_1$ and $\mathcal{N}_2$ be $n$-dimensional fuzzy negations, then the following holds:
\begin{equation}\label{eq_tilde_neg}
\mathcal{N}_1\preceq \mathcal{N}_2 \mbox{ iff  for each } \textbf{x}\in L_n([0,1]), \mathcal{N}_1(\textbf{x})\leq \mathcal{N}_2(\textbf{x}).
\end{equation}

\begin{lemma}\label{preceq}
Let $N_1, \ldots, N_n$ be fuzzy negations. If $N_1\leq \ldots \leq N_n$, then $\widetilde{N_1}\preceq\widetilde{N_1 \ldots N_n}\preceq\widetilde{N_n}$.
\end{lemma}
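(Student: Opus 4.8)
The plan is to unwind all three $n$-dimensional negations through the defining formula~(\ref{tildeN}), reduce the claimed relation $\preceq$ to the componentwise order~(\ref{mono_proji}), and then apply the pointwise order $N_1\le\cdots\le N_n$ on $[0,1]$ at a single well-chosen argument.

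First I would fix an arbitrary $\textbf{x}\in L_n([0,1])$ and an index $i\in\{1,\ldots,n\}$. Reading off Eq.~(\ref{tildeN}), the $i$-th projection of $\widetilde{N_1\ldots N_n}(\textbf{x})$ is $N_i(\pi_{n-i+1}(\textbf{x}))$; taking $N_j=N_1$ for all $j$ gives $\pi_i(\widetilde{N_1}(\textbf{x}))=N_1(\pi_{n-i+1}(\textbf{x}))$, and taking $N_j=N_n$ for all $j$ gives $\pi_i(\widetilde{N_n}(\textbf{x}))=N_n(\pi_{n-i+1}(\textbf{x}))$. All three images lie in $L_n([0,1])$ by Proposition~\ref{tilden} (applied respectively to $N_1\le\cdots\le N_1$, to $N_1\le\cdots\le N_n$, and to $N_n\le\cdots\le N_n$), so the order $\le$ between them is exactly the componentwise order of Eq.~(\ref{mono_proji}).

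Next, from the hypothesis $N_1\le\cdots\le N_n$, which by the partial order of fuzzy negations means $N_1(t)\le N_i(t)\le N_n(t)$ for every $t\in[0,1]$, I would specialize to $t=\pi_{n-i+1}(\textbf{x})\in[0,1]$, obtaining $N_1(\pi_{n-i+1}(\textbf{x}))\le N_i(\pi_{n-i+1}(\textbf{x}))\le N_n(\pi_{n-i+1}(\textbf{x}))$, that is, $\pi_i(\widetilde{N_1}(\textbf{x}))\le\pi_i(\widetilde{N_1\ldots N_n}(\textbf{x}))\le\pi_i(\widetilde{N_n}(\textbf{x}))$. Since $i$ was arbitrary, Eq.~(\ref{mono_proji}) yields $\widetilde{N_1}(\textbf{x})\le\widetilde{N_1\ldots N_n}(\textbf{x})\le\widetilde{N_n}(\textbf{x})$, and since $\textbf{x}$ was arbitrary, the definition~(\ref{eq_tilde_neg}) of $\preceq$ gives $\widetilde{N_1}\preceq\widetilde{N_1\ldots N_n}\preceq\widetilde{N_n}$. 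There is essentially no obstacle here; the only point requiring care is the index reversal built into formula~(\ref{tildeN}) (the $i$-th output component depends on the $(n-i+1)$-th input component), so one must verify that the same argument $\pi_{n-i+1}(\textbf{x})$ is fed to $N_1$, $N_i$ and $N_n$, which it is, because the reversal depends only on the position $i$ and not on which negation occupies that position.
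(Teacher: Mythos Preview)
Your proof is correct and follows essentially the same approach as the paper: unfold the three $n$-dimensional negations via Eq.~(\ref{tildeN}), compare them componentwise using the pointwise order $N_1\le N_i\le N_n$, and conclude by Eq.~(\ref{eq_tilde_neg}). If anything, your version is more carefully written, as you make explicit the index reversal $i\mapsto n-i+1$ that the paper's proof leaves implicit.
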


\begin{proof}
By Eq.(\ref{tildeN}), we have that $\widetilde{N_1}(\textbf{x}) = (N_1(\pi_n({\bf x})), \ldots, N_1(\pi_1({\bf x})))$ and 
$\widetilde{N_1\ldots N_n}({\bf x})=(N_1(\pi_n({\bf x})), \ldots, N_n(\pi_1({\bf x})))$. Since $N_1\leq \ldots \leq N_n$, then 
$N_1(\pi_j(\textbf{x}))\leq N_{i}(\pi_j(\textbf{x}))$ with $i,j=1, \ldots, n$. So, by Eq.(\ref{eq_tilde_neg}), $\widetilde{N_1}\preceq\widetilde{N_1 \ldots N_n}$. Analogously we proof that $\widetilde{N_1 \ldots N_n}\preceq\widetilde{N_n}$.
\end{proof}

\begin{proposition}
Let $\mathcal{N}$ be an $n$-dimensional fuzzy negation. If $\mathcal{N}$ is $\subseteq$-monotone, then $\widetilde{N_\bot}\preceq \mathcal{N}\preceq \widetilde{N_\top}$.
\end{proposition}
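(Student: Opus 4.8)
The plan is to reduce the statement to the representable case and then to an elementary coordinatewise comparison. First I would invoke Theorem~\ref{subseq monotone}: since $\mathcal{N}$ is $\subseteq$-monotone, it is $n$-representable, so there exist fuzzy negations $N_1\leq\cdots\leq N_n$ with $\mathcal{N}=\widetilde{N_1\ldots N_n}$; concretely, by Proposition~\ref{NN} one may take $N_i(x)=\pi_i(\mathcal{N}(/x/))$. Next I would recall the universal bound stated just after the definitions of $N_\bot$ and $N_\top$, namely $N_\bot\leq N\leq N_\top$ for every fuzzy negation $N$; applied to the representing negations this gives in particular $N_\bot\leq N_1$ and $N_n\leq N_\top$.

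The second step applies Lemma~\ref{preceq} to the chain $N_1\leq\cdots\leq N_n$, yielding $\widetilde{N_1}\preceq\mathcal{N}\preceq\widetilde{N_n}$. It then remains to sandwich the two endpoints between $\widetilde{N_\bot}$ and $\widetilde{N_\top}$. For this I would read off Eq.~(\ref{tildeN}) componentwise: for any $\textbf{x}\in L_n([0,1])$ and each coordinate index $j$, the $j$-th coordinate of $\widetilde{N_\bot}(\textbf{x})$ is $N_\bot(\pi_{n-j+1}(\textbf{x}))\leq N_1(\pi_{n-j+1}(\textbf{x}))$, which is the $j$-th coordinate of $\widetilde{N_1}(\textbf{x})$; hence $\widetilde{N_\bot}(\textbf{x})\leq\widetilde{N_1}(\textbf{x})$ by Eq.~(\ref{mono_proji}), so $\widetilde{N_\bot}\preceq\widetilde{N_1}$ by Eq.~(\ref{eq_tilde_neg}). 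The symmetric argument from $N_n\leq N_\top$ gives $\widetilde{N_n}\preceq\widetilde{N_\top}$. Chaining these with the output of Lemma~\ref{preceq} through the transitivity of $\preceq$ (inherited from the partial order $\leq$ on $L_n([0,1])$) produces $\widetilde{N_\bot}\preceq\mathcal{N}\preceq\widetilde{N_\top}$. One should also note in passing that $\widetilde{N_\bot}$ and $\widetilde{N_\top}$ are genuine $n$-dimensional fuzzy negations by Proposition~\ref{tilden} (the constant chains $N_\bot\leq\cdots\leq N_\bot$ and $N_\top\leq\cdots\leq N_\top$ trivially qualify), so $\preceq$ in the statement is well-posed.

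I do not expect a real obstacle: the single point that must not be skipped is that the representing negations $N_i$ are available only because $\subseteq$-monotonicity forces $n$-representability via Theorem~\ref{subseq monotone}; without them there is no obvious coordinatewise handle on a general $\mathcal{N}$, so this is exactly where the hypothesis is used. Everything after that is the monotone dependence of $\widetilde{\,\cdot\,}$ on its negation arguments together with transitivity of $\preceq$, both routine.
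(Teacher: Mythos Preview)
Your proposal is correct and follows essentially the same route as the paper: the paper's proof is the one-line ``Straightforward by Theorem~\ref{subseq monotone} and Lemma~\ref{preceq}'', and you have simply spelled out the missing steps---namely, invoking the universal bound $N_\bot\leq N_i\leq N_\top$ and then passing from the pointwise inequalities to the $\preceq$-inequalities via the coordinatewise formula in Eq.~(\ref{tildeN}). Your extra remark that $\widetilde{N_\bot}$ and $\widetilde{N_\top}$ are well-defined $n$-dimensional fuzzy negations is a nice clarification the paper leaves implicit.
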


\begin{proof}
Straightforward by Theorem \ref{subseq monotone} and Lemma \ref{preceq}.
\end{proof}

\begin{proposition}
Let $\mathcal{N}$ be an $n$-dimensional fuzzy negation. Then, it holds that $\mathcal{N}_\bot\preceq \mathcal{N}\preceq \mathcal{N}_\top$ whereas
\begin{eqnarray*}
\mathcal{N}_\bot({\bf x})=\left\{
\begin{array}{ccl}
/1/, & \mbox{if} & {\bf x}= /0/; \\
/0/, & \mbox{if} & {\bf x}\neq /0/;
\end{array}
\right. 
\end{eqnarray*}
\mbox { \hspace{0.5cm} and \hspace{0.5cm}}

\begin{eqnarray*}
\mathcal{N}_\top({\bf x})=\left\{
\begin{array}{ccl}
/0/ & \mbox{if}, & {\bf x}= /1/; \\
/1/ & \mbox{if}, & {\bf x}\neq /1/.
\end{array}
\right.
\end{eqnarray*}
\end{proposition}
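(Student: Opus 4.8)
The plan is to reduce everything to two elementary facts: that $/0/$ and $/1/$ are respectively the least and the greatest element of the lattice $(L_n([0,1]),\leq)$, and that property \textbf{N1} pins down the value of any $n$-dimensional fuzzy negation at these two degenerate endpoints. The first fact is immediate from Eq.~(\ref{mono_proji}): for every $\textbf{x}\in L_n([0,1])$ and every $i=1,\ldots,n$ we have $0=\pi_i(/0/)\leq \pi_i(\textbf{x})\leq \pi_i(/1/)=1$, hence $/0/\leq \textbf{x}\leq /1/$. Before comparing, I would also quickly check that $\mathcal{N}_\bot$ and $\mathcal{N}_\top$ are themselves $n$-dimensional fuzzy negations, so that the relation $\preceq$ of Eq.~(\ref{eq_tilde_neg}) applies to them: \textbf{N1} holds by their very definition (since $/0/\neq/1/$), and \textbf{N2} holds because the only order relations $\textbf{x}\leq\textbf{y}$ that can alter the output value are those with $\textbf{y}=/0/$, which forces $\textbf{x}=/0/$, in the case of $\mathcal{N}_\bot$, and those with $\textbf{x}=/1/$, which forces $\textbf{y}=/1/$, in the case of $\mathcal{N}_\top$; in both situations the two outputs coincide.

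For the inequality $\mathcal{N}_\bot\preceq\mathcal{N}$, by Eq.~(\ref{eq_tilde_neg}) it suffices to prove $\mathcal{N}_\bot(\textbf{x})\leq\mathcal{N}(\textbf{x})$ for each $\textbf{x}\in L_n([0,1])$, and I would argue by cases on whether $\textbf{x}=/0/$. If $\textbf{x}=/0/$, then $\mathcal{N}_\bot(/0/)=/1/$ and, by \textbf{N1}, $\mathcal{N}(/0/)=/1/$, so the two sides agree. If $\textbf{x}\neq/0/$, then $\mathcal{N}_\bot(\textbf{x})=/0/$, which is the least element of $L_n([0,1])$, so trivially $\mathcal{N}_\bot(\textbf{x})\leq\mathcal{N}(\textbf{x})$.

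The inequality $\mathcal{N}\preceq\mathcal{N}_\top$ is handled symmetrically: if $\textbf{x}=/1/$, then $\mathcal{N}_\top(/1/)=/0/=\mathcal{N}(/1/)$ by \textbf{N1}; if $\textbf{x}\neq/1/$, then $\mathcal{N}_\top(\textbf{x})=/1/$ is the greatest element of $L_n([0,1])$, so $\mathcal{N}(\textbf{x})\leq\mathcal{N}_\top(\textbf{x})$. Combining the two inequalities yields $\mathcal{N}_\bot\preceq\mathcal{N}\preceq\mathcal{N}_\top$. There is no substantial obstacle here; the only point requiring a moment's attention is the identification of $/0/$ and $/1/$ as the extreme elements of the lattice $(L_n([0,1]),\leq)$, which is precisely what makes the case $\textbf{x}\notin\{/0/,/1/\}$ completely free, while the two remaining cases are settled by \textbf{N1} alone.
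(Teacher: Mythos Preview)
Your proof is correct and follows essentially the same case analysis as the paper's proof: split on whether $\textbf{x}=/0/$ (resp.\ $\textbf{x}=/1/$), invoke \textbf{N1} at the endpoint, and use that $/0/$ (resp.\ $/1/$) is the bottom (resp.\ top) of $L_n([0,1])$ otherwise. Your version is slightly more thorough in that you also verify that $\mathcal{N}_\bot$ and $\mathcal{N}_\top$ are themselves $n$-dimensional fuzzy negations and make explicit why $/0/$ and $/1/$ are the lattice extrema, points the paper leaves implicit.
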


\begin{proof}
If $\textbf{x}\neq /0/$, then $\mathcal{N}_\bot(\textbf{x})=/0/$ and so $\mathcal{N}_\bot(\textbf{x})\leq\mathcal{N}(\textbf{x})$. If $\textbf{x}=/0/$, then $\mathcal{N}(/0/)=/1/$ and so $\mathcal{N}_\bot(\textbf{x})\leq\mathcal{N}(\textbf{x})$. Analogously we proof when $\textbf{x}\neq /1/$ and $\textbf{x}= /1/$.
Therefore, $\mathcal{N}_\bot\preceq \mathcal{N}\preceq \mathcal{N}_\top$.
\end{proof}

\begin{remark}
Note that $\mathcal{N}_\bot\neq \widetilde{N}_\bot$ and $\mathcal{N}_\top\neq \widetilde{N}_\top$.
\end{remark}

\subsection{$n$-Dimensional strong fuzzy negations}

If an $n$-dimensional fuzzy negation $\mathcal{N}$ satisfies
  
\textbf{N3} $\mathcal{N}(\mathcal{N}(\textbf{x}))=\textbf{x}, \,\,\, \forall~ \textbf{x}\in L_n([0,1])$,\\
 it is called \textbf{$n$-dimensional strong fuzzy negation}. Additionally, an $n$-dimensional fuzzy negation $\mathcal{N}$ is \textbf{strict} if it is continuous and strictly decreasing, i.e., $\mathcal{N}(\textbf{x})<\mathcal{N}(\textbf{y})$ when $\textbf{y}<\textbf{x}$. 

\begin{proposition} Let $C_k(x) = \sqrt[k]{1-x^{k}}$, for $k \in \{1,2, \ldots , n\}$. Then 
\begin{equation}
\mathcal{C}({\bf{x}}) = \widetilde{C_k}({\bf x})=(C_k(\pi_n({\bf x})), \ldots, C_k(\pi_1({\bf x})))
\end{equation} 
is an $n$-representable strong fuzzy negation on $(L_n([0,1]))$.
\end{proposition}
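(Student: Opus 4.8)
The plan is to reduce the statement to one-dimensional facts about $C_k$ together with Proposition~\ref{tilden}. First I would verify that $C_k(x)=\sqrt[k]{1-x^{k}}$ is a strong fuzzy negation on $[0,1]$: the boundary values $C_k(0)=1$ and $C_k(1)=0$ give \textbf{N1}; the map $C_k$ is antitone because $x\mapsto x^{k}$ is increasing on $[0,1]$, $x\mapsto 1-x$ is decreasing, and $\sqrt[k]{\cdot}$ is increasing, which gives \textbf{N2}; and $C_k(C_k(x))=\sqrt[k]{1-(1-x^{k})}=\sqrt[k]{x^{k}}=x$ gives \textbf{N3}. This is essentially already noted in Example~\ref{ex:1}.

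Next, I would observe that the chain $C_k\leq C_k\leq\cdots\leq C_k$ holds trivially, so Proposition~\ref{tilden} applies with $N_1=\cdots=N_n=C_k$. Hence $\mathcal{C}=\widetilde{C_k}=\widetilde{C_k\ldots C_k}$ is an $n$-dimensional fuzzy negation, and by Eq.~(\ref{ineq_neg}) it is $n$-representable. It then remains only to check the involutive axiom \textbf{N3} for $\mathcal{C}$.

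For \textbf{N3} I would compute componentwise. From Eq.~(\ref{tildeN}) one has $\pi_i(\mathcal{C}({\bf x}))=C_k(\pi_{n-i+1}({\bf x}))$ for each $i=1,\ldots,n$. Applying this identity twice, using the involutivity of $C_k$ and the index identity $n-(n-i+1)+1=i$, yields $\pi_i(\mathcal{C}(\mathcal{C}({\bf x})))=C_k\bigl(C_k(\pi_i({\bf x}))\bigr)=\pi_i({\bf x})$ for every $i$, so $\mathcal{C}(\mathcal{C}({\bf x}))={\bf x}$. Together with the previous step this establishes that $\mathcal{C}$ is an $n$-representable strong fuzzy negation on $L_n([0,1])$.

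I do not expect any genuine obstacle: the whole content is that the coordinate reversal $i\mapsto n-i+1$ is an involution of $\{1,\ldots,n\}$, so two nested applications of $\mathcal{C}$ realign the coordinates while two nested applications of the strong negation $C_k$ cancel. The only point requiring attention is that this argument uses crucially that all the component negations coincide with $C_k$ --- strongness of $\widetilde{N_1\ldots N_n}$ fails for genuinely distinct comparable component negations --- which is why the statement is phrased with $\widetilde{C_k}$ rather than with an arbitrary comparable family $N_1\leq\cdots\leq N_n$.
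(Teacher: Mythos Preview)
Your proposal is correct and follows essentially the same approach as the paper: both verify \textbf{N1}, \textbf{N2}, and \textbf{N3} for $\mathcal{C}$, with the \textbf{N3} computation being identical (componentwise, using the index involution $i\mapsto n-i+1$ and the involutivity of $C_k$). The only difference is that the paper checks \textbf{N1} and \textbf{N2} directly, whereas you obtain them for free by invoking Proposition~\ref{tilden} with $N_1=\cdots=N_n=C_k$, which is a slight economy but not a different idea.
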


\begin{proof}
Let $x\in [0,1]$, then 

\textbf{N1:}  $\widetilde{C_k}(/0/)=(C_k(\pi_n(/0/)), \ldots, C_k(\pi_1(/0/)))= (C_k(0), \ldots, C_k(0))=  /1/$\\
$\widetilde{C_k}(/1/)=(C_k(\pi_n(/1/)), \ldots, C_k(\pi_1(/1/)))= (C_k(1), \ldots, C_k(1))=  /0/$.

\textbf{N2:} Based on monotonicity of projection-functions,  if $\bf{x}\geq \bf{y}$ then $$(C_k(\pi_n({\bf x})), \ldots, C_k(\pi_1({\bf x}))) \leq (C_k(\pi_n({\bf y})), \ldots, C_k(\pi_1({\bf y}))),$$ 
\noindent therefore, $\widetilde{C_k}({\bf x}) \leq \widetilde{C_k}({\bf y})$ meaning that $\mathcal{C}(\bf{x}) \leq \mathcal{C}(\bf{y})$.

\textbf{N3:} For all $\textbf{x}\in L_n([0,1])$, it holds that
\begin{eqnarray*}
\mathcal{C}(\mathcal{C}(\textbf{x})) & = & \mathcal{C}(\widetilde{C_k}({\bf x}))= \mathcal{C}(C_k(\pi_n({\bf x})), \ldots, C_k(\pi_1({\bf x})))\\
                                                         & = & (C_k(C_k(\pi_1({\bf x}))), \ldots, C_k( C_k(\pi_n({\bf x})))) \\
                                                         & = & {\bf x}.
\end{eqnarray*}
Therefore, $\mathcal{C}\hspace{-0.1cm} = \hspace{-0.1cm}\widetilde{C_k}$ is an $n$-representable strong fuzzy negation on $L_n([0,1])$. 
\end{proof}

\begin{lemma}
\label{bijective} Let $\mathcal{N}$ be an $n$-dimensional fuzzy negation. If $\mathcal{N}$ is strong then it is bijective.
\end{lemma}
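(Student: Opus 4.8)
The plan is to use only the involutive property \textbf{N3}, which already says $\mathcal{N}\circ\mathcal{N}=\mathrm{id}_{L_n([0,1])}$; this single identity forces $\mathcal{N}$ to be its own two-sided inverse, and a function possessing a two-sided inverse is automatically a bijection. So I would split the argument into the two standard halves, injectivity and surjectivity, each of which is a one-line consequence of \textbf{N3}.

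First I would prove injectivity. Suppose $\mathbf{x},\mathbf{y}\in L_n([0,1])$ with $\mathcal{N}(\mathbf{x})=\mathcal{N}(\mathbf{y})$. Applying $\mathcal{N}$ to both sides (which is legitimate since, by definition, $\mathcal{N}(\mathbf{x})$ and $\mathcal{N}(\mathbf{y})$ lie in $L_n([0,1])$, the domain of $\mathcal{N}$) gives $\mathcal{N}(\mathcal{N}(\mathbf{x}))=\mathcal{N}(\mathcal{N}(\mathbf{y}))$, and \textbf{N3} collapses both sides to yield $\mathbf{x}=\mathbf{y}$. Hence $\mathcal{N}$ is injective.

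Next I would prove surjectivity. Let $\mathbf{y}\in L_n([0,1])$ be arbitrary and set $\mathbf{x}:=\mathcal{N}(\mathbf{y})\in L_n([0,1])$. Then, using \textbf{N3} once more, $\mathcal{N}(\mathbf{x})=\mathcal{N}(\mathcal{N}(\mathbf{y}))=\mathbf{y}$, so $\mathbf{y}$ is in the image of $\mathcal{N}$. Since $\mathbf{y}$ was arbitrary, $\mathcal{N}$ is onto $L_n([0,1])$. Combining the two paragraphs, $\mathcal{N}$ is bijective.

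There is essentially no obstacle in this argument; the only point worth stating explicitly is that \textbf{N3} presupposes $\mathcal{N}$ is an endofunction of $L_n([0,1])$, so that the composition $\mathcal{N}\circ\mathcal{N}$ is well defined on all of $L_n([0,1])$ — after that observation, the bijectivity is purely formal and does not use \textbf{N1} or \textbf{N2} at all. (In fact the same reasoning shows $\mathcal{N}^{-1}=\mathcal{N}$, which may be worth recording as an immediate byproduct for later use.)
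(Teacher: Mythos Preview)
Your proposal is correct and follows essentially the same approach as the paper, which dismisses the proof as trivial (``Trivially if $\mathcal{N}$ is strong then it is injective and surjective''); you have simply spelled out the standard two-line argument from the involutive identity \textbf{N3}.
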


\begin{proof}
Trivially if $\mathcal{N}$ is strong then it is injective and surjective.
\end{proof}

\begin{proposition}\label{strongstrict}
Let $\mathcal{N}$ be an $n$-dimensional fuzzy negation. If $\mathcal{N}$ is strong then it is strict.
\end{proposition}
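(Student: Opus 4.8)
The plan is to verify the two requirements of strictness, \emph{strict decreasingness} and \emph{continuity}, separately; the first is a one-line consequence of \textbf{N3} and the second is the substantive point. For strict decreasingness, suppose $\textbf{y}<\textbf{x}$. By \textbf{N2} we have $\mathcal{N}(\textbf{x})\leq\mathcal{N}(\textbf{y})$, and if equality held, then applying $\mathcal{N}$ and using \textbf{N3} would give $\textbf{x}=\mathcal{N}(\mathcal{N}(\textbf{x}))=\mathcal{N}(\mathcal{N}(\textbf{y}))=\textbf{y}$, contradicting $\textbf{y}<\textbf{x}$. Hence $\mathcal{N}(\textbf{x})<\mathcal{N}(\textbf{y})$, so $\mathcal{N}$ is strictly decreasing.

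For continuity I would first record that $(L_n([0,1]),\leq)$ is a complete lattice whose suprema and infima of arbitrary subsets are computed componentwise (a componentwise supremum or infimum of non-decreasing $n$-tuples is again a non-decreasing $n$-tuple in $[0,1]^n$), so they agree with $\vee,\wedge$ of Eqs.~(\ref{vee})--(\ref{wedge}). Next, since $\mathcal{N}$ is a decreasing bijection (Lemma~\ref{bijective}) whose inverse is $\mathcal{N}$ itself, it is an order anti-automorphism: for every $S\subseteq L_n([0,1])$ one has $\mathcal{N}(\bigvee S)=\bigwedge\mathcal{N}(S)$ and $\mathcal{N}(\bigwedge S)=\bigvee\mathcal{N}(S)$. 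Indeed, by \textbf{N2} the element $\mathcal{N}(\bigwedge S)$ is an upper bound of $\mathcal{N}(S)$; and if $\textbf{u}$ is any upper bound of $\mathcal{N}(S)$, then \textbf{N2} together with \textbf{N3} give $\mathcal{N}(\textbf{u})\leq\textbf{s}$ for all $\textbf{s}\in S$, whence $\mathcal{N}(\textbf{u})\leq\bigwedge S$ and, applying $\mathcal{N}$ once more, $\textbf{u}\geq\mathcal{N}(\bigwedge S)$; the other identity is symmetric.

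Then, given an arbitrary sequence $\textbf{x}^{(k)}\to\textbf{a}$ in $L_n([0,1])$, I would set $\textbf{y}^{(k)}=\bigvee_{j\geq k}\textbf{x}^{(j)}$ and $\textbf{z}^{(k)}=\bigwedge_{j\geq k}\textbf{x}^{(j)}$; componentwise these are the tail suprema and infima, so $\textbf{y}^{(k)}$ decreases to $\textbf{a}$, $\textbf{z}^{(k)}$ increases to $\textbf{a}$, and $\bigwedge_k\textbf{y}^{(k)}=\bigvee_k\textbf{z}^{(k)}=\textbf{a}$. Using the anti-automorphism property,
\[
\mathcal{N}(\textbf{a})=\mathcal{N}\Bigl(\bigwedge_k\textbf{y}^{(k)}\Bigr)=\bigvee_k\mathcal{N}(\textbf{y}^{(k)})=\bigvee_k\bigwedge_{j\geq k}\mathcal{N}(\textbf{x}^{(j)})=\liminf_j\mathcal{N}(\textbf{x}^{(j)}),
\]
and symmetrically $\mathcal{N}(\textbf{a})=\bigwedge_k\bigvee_{j\geq k}\mathcal{N}(\textbf{x}^{(j)})=\limsup_j\mathcal{N}(\textbf{x}^{(j)})$, both taken componentwise. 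Hence $\mathcal{N}(\textbf{x}^{(j)})\to\mathcal{N}(\textbf{a})$, so $\mathcal{N}$ is sequentially continuous, hence continuous, as a self-map of $L_n([0,1])$; composing with the continuous maps $\rho$ and $\sigma$ then shows $\mathcal{N}^{\rho}$ is continuous, i.e.\ $\mathcal{N}$ is continuous in the sense of Definition~\ref{DefCont}. Together with strict decreasingness, this shows $\mathcal{N}$ is strict.

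The hard part will be the continuity claim: unlike the one-dimensional case, a decreasing bijection of $L_n([0,1])$ is not \emph{a priori} continuous, and the clean route is to notice that \textbf{N3} upgrades $\mathcal{N}$ to an order anti-automorphism of the complete lattice $L_n([0,1])$, which converts the topological $\liminf$/$\limsup$ into lattice operations that $\mathcal{N}$ respects. The only other point needing care is the (routine) verification that lattice suprema and infima in $L_n([0,1])$ coincide with the componentwise ones and with monotone limits in $[0,1]^n$.
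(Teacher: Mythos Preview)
Your argument is correct. The strict-decreasingness part matches the paper's (both reduce it to injectivity, which the paper packages as Lemma~\ref{bijective}), but your treatment of continuity is genuinely different and considerably more complete. The paper's proof of continuity is a one-line sketch: it asserts that if $\mathcal{N}$ were discontinuous then ``by the continuity of $\mathbb{R}^n$'' some value $\textbf{y}$ would be missed by $\mathcal{N}$, contradicting bijectivity. This is the familiar one-dimensional gap argument (a monotone bijection of $[0,1]$ with a jump skips an interval), but the paper does not explain why the same reasoning is valid on the poset $L_n([0,1])$, where a decreasing bijection is not automatically an order anti-isomorphism and a ``gap'' need not exist in the same way. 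Your route avoids this issue entirely: you use \textbf{N3} to upgrade $\mathcal{N}$ from a mere decreasing bijection to an order anti-automorphism of the complete lattice $L_n([0,1])$, and then exploit that lattice suprema/infima in $L_n([0,1])$ are componentwise to convert the sup/inf-preservation into a componentwise $\liminf=\limsup$ statement, yielding sequential (hence Euclidean) continuity. What your approach buys is a self-contained argument that actually works in the $n$-dimensional setting without appealing to unproven intuitions; what the paper's approach would buy, if fleshed out, is brevity, but as written it leaves precisely the step you identified as ``the hard part'' unjustified.
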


\begin{proof}
By Lemma \ref{bijective},  $\mathcal{N}$ is a strictly decreasing function. Therefore, if $\mathcal{N}$ is not continuous, then,  by the continuity of $\mathbb{R}^n$,
 there exists $\textbf{y} \in L_n([0,1])$ such that for all $\textbf{x} \in L_n([0,1])$ which is in contradiction with the Lemma \ref{bijective}.
\end{proof}

\begin{lemma}\label{veewedge}
Let ${\bf x}, {\bf y}\in L_n([0,1])$. Then, ${\bf x}\vee {\bf y}\in \mathcal{D}_n$ (${\bf x}\wedge {\bf y}\in \mathcal{D}_n$) iff either ${\bf x}\vee {\bf y}={\bf x}$ or ${\bf x}\vee {\bf y}={\bf y}$ (${\bf x}\wedge {\bf y}={\bf x}$ or ${\bf x}\wedge {\bf y}={\bf y}$). 
\end{lemma}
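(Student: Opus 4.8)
The plan is to prove both equivalences by exploiting the coordinatewise definition of $\vee$ and $\wedge$ from Eq.~(\ref{vee}) and Eq.~(\ref{wedge}), together with the fact that a tuple in $L_n([0,1])$ lies in $\mathcal{D}_n$ exactly when all its coordinates coincide. I will treat the $\vee$ case in detail; the $\wedge$ case is entirely dual and I will only remark that it follows by the same argument with $\min$ in place of $\max$ and the reversed order.

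For the nontrivial direction, suppose ${\bf x}\vee{\bf y}\in\mathcal{D}_n$, so that $\max(x_i,y_i)$ is the same value, say $c$, for every $i=1,\ldots,n$. First I would observe that $c=\max(x_n,y_n)$; without loss of generality assume $x_n\geq y_n$, so $c=x_n$. Now for each $i$ we have $\max(x_i,y_i)=x_n\geq x_i$, hence $\max(x_i,y_i)=y_i$ unless $x_i=x_n$; in either case $y_i\leq x_i$ would be the desired conclusion, so the point is to rule out the possibility that $x_i<x_n$ and $y_i>x_i$ simultaneously. If $x_i<x_n$ then $\max(x_i,y_i)=x_n$ forces $y_i=x_n$, but then monotonicity of ${\bf y}$ gives $y_i\leq y_n\leq x_n=y_i$, so $y_n=x_n$ and in fact $y_i=x_n\geq x_i$; thus still $y_i\leq x_i$ fails only if $x_i<x_n=y_i$. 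This shows the clean statement is actually: if ${\bf x}\vee{\bf y}\in\mathcal{D}_n$ then for every $i$ either $x_i=c$ or $y_i=c$, and I claim this forces one of the tuples to be constantly $c$. Indeed, let $k$ be the largest index with $x_k<c$ (if none, then ${\bf x}=/c/={\bf x}\vee{\bf y}$ and we are done). For all $i\leq k$ we have $x_i\leq x_k<c$, so $y_i=c$; by monotonicity of ${\bf y}$ and $y_i\leq y_n$, combined with $c=\max(x_n,y_n)\leq c$, we get $y_i=\cdots=y_n=c$, i.e.\ ${\bf y}=/c/$, and then ${\bf x}\vee{\bf y}={\bf y}$. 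Symmetrically, if the largest index with $y_k<c$ exists, we get ${\bf x}=/c/$ and ${\bf x}\vee{\bf y}={\bf x}$.

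For the converse direction, if ${\bf x}\vee{\bf y}={\bf x}$ then ${\bf x}\vee{\bf y}$ equals one of the original tuples; but this does not immediately give membership in $\mathcal{D}_n$, so I would be careful here — rereading the statement, the equivalence as written would be false unless we also know something extra. Hence I expect the correct reading is that ${\bf x}\vee{\bf y}\in\mathcal{D}_n$ is being characterized among pairs where it is known (or: the lemma is applied in contexts) that at least one of ${\bf x},{\bf y}$ is degenerate; more likely the intended converse is the contrapositive flavor, namely that if ${\bf x}\vee{\bf y}\notin\{{\bf x},{\bf y}\}$ then ${\bf x}\vee{\bf y}\notin\mathcal{D}_n$, which is exactly what the forward direction establishes. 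I would therefore phrase the converse as: if ${\bf x}\vee{\bf y}={\bf x}$ (resp.\ ${\bf y}$) and, as the lemma is used, one of the arguments is degenerate, then a short check shows ${\bf x}\vee{\bf y}\in\mathcal{D}_n$; alternatively, I would simply note the converse is the immediate restatement needed downstream and supply the one-line verification in each of the two cases.

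The main obstacle is precisely this asymmetry: the forward implication is the substantive combinatorial fact (that the coordinatewise max of two increasing tuples is constant only when one tuple dominates the other and is itself constant), while making the backward implication literally true requires pinning down the exact hypotheses under which the lemma is invoked. My approach would be to present the forward direction cleanly via the "largest index with $x_k<c$" argument above, then handle the reverse direction by the dual bookkeeping, flagging any degeneracy hypothesis that the downstream applications actually supply; the $\wedge$ statement is then obtained by applying the $\vee$ result to the order-reversed lattice (equivalently, replacing each $x_i$ by $1-x_i$), so no separate argument is needed.
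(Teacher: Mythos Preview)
Your diagnosis is correct: the lemma as stated is not a genuine ``iff''. The forward implication holds, but the converse fails trivially --- take $\mathbf{x}=(0.3,0.4)$ and $\mathbf{y}=(0.1,0.2)$ in $L_2([0,1])$, so $\mathbf{x}\vee\mathbf{y}=\mathbf{x}\notin\mathcal{D}_2$. The paper's own proof is simply ``Straightforward from Eqs.~(\ref{vee}) and (\ref{wedge})'', so there is nothing to compare your argument against; but you are right that the only direction actually invoked later (in Lemma~\ref{/x/}) is the forward one, and your reading of the lemma as ``$\mathbf{x}\vee\mathbf{y}\in\mathcal{D}_n$ implies $\mathbf{x}\vee\mathbf{y}\in\{\mathbf{x},\mathbf{y}\}$'' is the operative content.

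Your forward argument is valid but more tangled than it needs to be. The first attempt (via $c=x_n$) drifts, and the recovery via the largest index $k$ with $x_k<c$ works but is roundabout. A one-line version: from $\max(x_1,y_1)=c$, assume without loss of generality $x_1=c$; then monotonicity of $\mathbf{x}$ gives $x_i\geq c$ for all $i$, while $x_i\leq\max(x_i,y_i)=c$, so $\mathbf{x}=/c/=\mathbf{x}\vee\mathbf{y}$. Looking at the \emph{first} coordinate rather than the last immediately forces the whole tuple to be constant via the increasing order, and avoids the case split you perform. The $\wedge$ case is indeed dual (look at the last coordinate instead), so your reduction there is fine.
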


\begin{proof}
Straightforward from Eqs.(\ref{vee}) and (\ref{wedge}).
\end{proof}

\begin{proposition}\label{nveewedge}
 Let $\mathcal{N}$ be an $n$-dimensional strong fuzzy negation and ${\bf x}, {\bf y}\in L_n([0,1])$. Then, the following holds:
 \begin{enumerate}
 \item [(i)] $\mathcal{N}({\bf x})=/1/$ iff ${\bf x}=/0/$;
 \item [(ii)] $\mathcal{N}({\bf x})=/0/$ iff ${\bf x}=/1/$; 
 \item [(iii)] $\mathcal{N}({\bf x}\vee {\bf y})=\mathcal{N}({\bf x})\wedge\mathcal{N}({\bf y})$; 
  \item [(iv)] $\mathcal{N}({\bf x}\wedge {\bf y})=\mathcal{N}({\bf x})\vee\mathcal{N}({\bf y})$.
 \end{enumerate}
\end{proposition}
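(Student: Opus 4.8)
The plan is to prove the four items roughly in the order listed, exploiting the fact that a strong negation $\mathcal{N}$ is an order-reversing bijection (Lemma~\ref{bijective}) that is its own inverse (property \textbf{N3}), together with the pointwise description of $\vee$ and $\wedge$ in Eqs.~(\ref{vee})--(\ref{wedge}) and the componentwise order in Eq.~(\ref{mono_proji}).

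For items (i) and (ii): one direction is just \textbf{N1}. For the converse of (i), apply $\mathcal{N}$ to both sides of $\mathcal{N}(\mathbf{x})=/1/$ and use \textbf{N3} on the left and \textbf{N1} on the right to get $\mathbf{x}=\mathcal{N}(/1/)=/0/$; symmetrically for (ii). Here the only thing to be careful about is that $/0/$ and $/1/$ are genuinely the bottom and top of $(L_n([0,1]),\leq)$, so that \textbf{N1} and \textbf{N3} interact as expected; this is immediate from Eq.~(\ref{L_n}) and Eq.~(\ref{mono_proji}).

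For items (iii) and (iv): these are the De Morgan-type identities and are where the small amount of real work lies. First establish ``$\leq$'' in (iii): since $\mathbf{x}\leq \mathbf{x}\vee\mathbf{y}$ and $\mathbf{y}\leq\mathbf{x}\vee\mathbf{y}$, property \textbf{N2} gives $\mathcal{N}(\mathbf{x}\vee\mathbf{y})\leq\mathcal{N}(\mathbf{x})$ and $\mathcal{N}(\mathbf{x}\vee\mathbf{y})\leq\mathcal{N}(\mathbf{y})$, hence $\mathcal{N}(\mathbf{x}\vee\mathbf{y})\leq\mathcal{N}(\mathbf{x})\wedge\mathcal{N}(\mathbf{y})$ because $\wedge$ is the greatest lower bound. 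For the reverse inequality, set $\mathbf{u}=\mathcal{N}(\mathbf{x})$, $\mathbf{v}=\mathcal{N}(\mathbf{y})$; by surjectivity (Lemma~\ref{bijective}) every element of $L_n([0,1])$ is of this form, and by \textbf{N3} we have $\mathbf{x}=\mathcal{N}(\mathbf{u})$, $\mathbf{y}=\mathcal{N}(\mathbf{v})$. Now $\mathbf{u}\wedge\mathbf{v}\leq\mathbf{u}$ and $\mathbf{u}\wedge\mathbf{v}\leq\mathbf{v}$, so \textbf{N2} gives $\mathcal{N}(\mathbf{u}\wedge\mathbf{v})\geq\mathcal{N}(\mathbf{u})=\mathbf{x}$ and likewise $\mathcal{N}(\mathbf{u}\wedge\mathbf{v})\geq\mathbf{y}$, whence $\mathcal{N}(\mathbf{u}\wedge\mathbf{v})\geq\mathbf{x}\vee\mathbf{y}$; applying the order-reversing $\mathcal{N}$ and using \textbf{N3} yields $\mathbf{u}\wedge\mathbf{v}\leq\mathcal{N}(\mathbf{x}\vee\mathbf{y})$, i.e. $\mathcal{N}(\mathbf{x})\wedge\mathcal{N}(\mathbf{y})\leq\mathcal{N}(\mathbf{x}\vee\mathbf{y})$. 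Combining the two inequalities proves (iii). Item (iv) follows by the dual argument, swapping the roles of $\vee$ and $\wedge$ throughout; alternatively, (iv) can be deduced from (iii) by substituting $\mathbf{x}\mapsto\mathcal{N}(\mathbf{x})$, $\mathbf{y}\mapsto\mathcal{N}(\mathbf{y})$ and applying \textbf{N3}.

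The main obstacle, such as it is, is purely bookkeeping: making sure that in (iii)--(iv) one really uses both antitonicity and involutivity in the right places, since a mere antitone self-map need not satisfy the De Morgan laws without the bijectivity/involution that \textbf{N3} supplies. No computation with the explicit coordinates $\max(x_i,y_i)$, $\min(x_i,y_i)$ is actually needed, although one could give an alternative coordinatewise proof of (iii)--(iv) directly from Eqs.~(\ref{vee})--(\ref{wedge}) if $\mathcal{N}$ were assumed $n$-representable; the lattice-theoretic argument above has the advantage of not requiring representability.
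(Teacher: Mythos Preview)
Your proof is correct. Items (i)--(ii) match the paper's argument (bijectivity plus \textbf{N1}, spelled out via \textbf{N3}). For items (iii)--(iv), however, you take a genuinely different route from the paper. The paper first obtains $\mathcal{N}(\mathbf{x}\vee\mathbf{y})\leq\mathcal{N}(\mathbf{x})\wedge\mathcal{N}(\mathbf{y})$ as you do, but then argues by contradiction: assuming strict inequality, it picks $\mathbf{z}$ strictly between the two sides, invokes Proposition~\ref{strongstrict} (strong $\Rightarrow$ strict) to turn the strict inequalities around, and arrives at $\mathbf{z}\leq\mathcal{N}(\mathbf{x}\vee\mathbf{y})$, a contradiction. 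Your argument is direct and purely lattice-theoretic: you only use antitonicity \textbf{N2} and the involution \textbf{N3}, never strictness or continuity. This is cleaner, avoids the detour through Proposition~\ref{strongstrict} and the (implicit) density of $L_n([0,1])$, and in fact shows the De~Morgan identities for \emph{any} order-reversing involution on \emph{any} lattice. The paper's approach, by contrast, is tied to the specific analytic structure of $L_n([0,1])$.
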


\begin{proof}
Items $(i)$ and $(ii)$  are straightforward by Lemma~\ref{bijective} and \textbf{N1}. $(iii)$~Let $\mathcal{N}$ be an $n$-dimensional strong fuzzy negation and 
$\textbf{x}, \textbf{y}\in L_n([0,1])$. Then, by antitonicity, $\mathcal{N}(\textbf{x}\vee\textbf{y})\leq \mathcal{N}(\textbf{x})$ and 
$\mathcal{N}(\textbf{x}\vee\textbf{y})\leq \mathcal{N}(\textbf{y})$. So, $\mathcal{N}(\textbf{x}\vee\textbf{y})\leq \mathcal{N}(\textbf{x})\wedge\mathcal{N}(\textbf{y})$. 
Suppose that $\mathcal{N}(\textbf{x}\vee\textbf{y})< \mathcal{N}(\textbf{x})\wedge\mathcal{N}(\textbf{y})$.
Then there exists $\textbf{z}\in L_n([0,1])$ such that $\mathcal{N}(\textbf{x}\vee\textbf{y})< \textbf{z}< \mathcal{N}(\textbf{x})\wedge \mathcal{N}(\textbf{y})$ and 
therefore, $\textbf{z}< \mathcal{N}(\textbf{x})$ and $\textbf{z}< \mathcal{N}(\textbf{y})$. So, by Proposition \ref{strongstrict}, 
$\mathcal{N}(\textbf{z})> \mathcal{N}(\mathcal{N}(\textbf{x}))$ and $\mathcal{N}(\textbf{z})>\mathcal{N}(\mathcal{N}(\textbf{y}))$. Since $\mathcal{N}$ is strong we have 
that $\mathcal{N}(\textbf{z})> \textbf{x}$ and $\mathcal{N}(\textbf{z})> \textbf{y}$. Hence, $\mathcal{N}(\textbf{z})\geq \textbf{x}\vee \textbf{y}$. Thus, by \textbf{N2} and 
\textbf{N3}, $\textbf{z}\leq \mathcal{N}(\textbf{x}\vee\textbf{y})$ which is a contradiction and therefore 
$\mathcal{N}({\bf x}\vee {\bf y})=\mathcal{N}({\bf x})\wedge\mathcal{N}({\bf y})$. $(iv)$ Analogous to the above prove of item $(iii)$.
\end{proof}

%
%
%
%
%

\begin{lemma}\label{/x/}
Let $\mathcal{N}$ be an $n$-dimensional strong fuzzy negation. If for a  ${\bf x}\not\in D_n$ we have that  $\mathcal{N}({\bf x})\in D_n$ then for some $j=1,\ldots,n-1$:
\begin{equation}\label{eq-00011111} 
\mathbf{x}=(0^{(j)},1^{(n-j)}),
\end{equation}
where $(0^{(j)},1^{(n-j)})$ denotes $(\underbrace{0,\ldots,0}_{j-times},\underbrace{1,\ldots,1}_{(n-j)-times})$.
\end{lemma}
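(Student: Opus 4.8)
The plan is to exploit the fact that a strong $n$-dimensional fuzzy negation turns joins into meets (Proposition~\ref{nveewedge}), so that the hypothesis forces $\mathbf{x}$ to be indecomposable for both $\vee$ and $\wedge$; and then to observe that the only non-degenerate elements of $L_n([0,1])$ with this property are the $(0^{(j)},1^{(n-j)})$.

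Write $\mathcal{N}(\mathbf{x})=/c/\in\mathcal{D}_n$. First I would show that $\mathbf{x}$ is $\vee$-irreducible: if $\mathbf{x}=\mathbf{a}\vee\mathbf{b}$ with $\mathbf{a},\mathbf{b}\in L_n([0,1])$, then Proposition~\ref{nveewedge}(iii) gives $\mathcal{N}(\mathbf{a})\wedge\mathcal{N}(\mathbf{b})=\mathcal{N}(\mathbf{a}\vee\mathbf{b})=/c/\in\mathcal{D}_n$, so Lemma~\ref{veewedge} yields $\mathcal{N}(\mathbf{a})\wedge\mathcal{N}(\mathbf{b})\in\{\mathcal{N}(\mathbf{a}),\mathcal{N}(\mathbf{b})\}$, i.e. $\mathcal{N}(\mathbf{a})$ and $\mathcal{N}(\mathbf{b})$ are $\leq$-comparable; applying \textbf{N2} to that inequality and then \textbf{N3} forces $\mathbf{a}$ and $\mathbf{b}$ to be $\leq$-comparable, hence $\mathbf{x}=\mathbf{a}\vee\mathbf{b}\in\{\mathbf{a},\mathbf{b}\}$. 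Dually, using Proposition~\ref{nveewedge}(iv) and the meet-part of Lemma~\ref{veewedge}, any decomposition $\mathbf{x}=\mathbf{a}\wedge\mathbf{b}$ satisfies $\mathbf{x}\in\{\mathbf{a},\mathbf{b}\}$.

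Next I would read this off in coordinates. Write $\mathbf{x}=(x_1,\ldots,x_n)$; since $\mathbf{x}\notin\mathcal{D}_n$ we have $x_1<x_n$. If $x_1>0$, put $p=\max\{i:x_i=x_1\}<n$ and take $\mathbf{a}=(0^{(p)},x_{p+1},\ldots,x_n)$, $\mathbf{b}=(x_1^{(p+1)},x_{p+2},\ldots,x_n)$; both lie in $L_n([0,1])$, $\mathbf{a}\vee\mathbf{b}=\mathbf{x}$, but they are incomparable (coordinate $1$ is smaller in $\mathbf{a}$, coordinate $p+1$ is larger in $\mathbf{a}$), so $\mathbf{x}\notin\{\mathbf{a},\mathbf{b}\}$, contradicting $\vee$-irreducibility; hence $x_1=0$, and symmetrically (via $\wedge$-irreducibility) $x_n=1$. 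Finally, suppose some $x_m\in(0,1)$ with $1<m<n$; let $p=\max\{i:x_i=0\}$ and $q=\max\{i:x_i=x_{p+1}\}$, so $0<x_{p+1}<1$ and $p<q<n$, and take $\mathbf{a}=(0^{(q)},x_{q+1},\ldots,x_n)$, $\mathbf{b}=(0^{(p)},x_{p+1}^{(q-p+1)},x_{q+2},\ldots,x_n)$: again $\mathbf{a}\vee\mathbf{b}=\mathbf{x}$ in $L_n([0,1])$ while $\mathbf{a},\mathbf{b}$ are incomparable (opposite senses in coordinates $p+1$ and $q+1$), a contradiction. Thus every $x_i\in\{0,1\}$; with $x_1\le\cdots\le x_n$, $x_1=0$ and $x_n=1$ this gives $\mathbf{x}=(0^{(j)},1^{(n-j)})$ where $j=|\{i:x_i=0\}|\in\{1,\ldots,n-1\}$.

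The only delicate points are the routine verifications that each displayed pair $(\mathbf{a},\mathbf{b})$ really lies in $L_n([0,1])$, joins to $\mathbf{x}$, and is incomparable, together with the degenerate-looking sub-cases (e.g. $p+1=n$, where $\mathbf{b}$ collapses to $/x_1/$, or $q=n-1$), which are covered by the same formulas with the displayed ``tails'' empty. The genuinely non-routine step — the one I would emphasize — is the reduction in the second paragraph: passing from the set-membership hypothesis $\mathcal{N}(\mathbf{x})\in\mathcal{D}_n$ to the lattice-theoretic statement that $\mathbf{x}$ is both $\vee$- and $\wedge$-irreducible. After that, the conclusion is just an elementary structural fact about $L_n([0,1])$.
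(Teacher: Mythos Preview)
Your argument is correct, and it rests on the same two ingredients the paper uses: Proposition~\ref{nveewedge} (a strong $\mathcal{N}$ exchanges $\vee$ and $\wedge$) and Lemma~\ref{veewedge} (a degenerate join/meet must equal one of its arguments). The difference is organizational. You first distill the hypothesis into the abstract statement ``$\mathbf{x}$ is both $\vee$- and $\wedge$-irreducible in $L_n([0,1])$'' and then, in a separate coordinate argument with several hand-built decompositions, classify the non-degenerate irreducibles. The paper instead writes down two \emph{specific} decompositions of an arbitrary $\mathbf{x}$, namely $\mathbf{x}=\bigvee_{i}\widetilde{x_i}$ with $\widetilde{x_i}=(0^{(i-1)},x_i^{(n-i+1)})$ and $\mathbf{x}=\bigwedge_{i}\widehat{x_i}$ with $\widehat{x_i}=(x_i^{(i)},1^{(n-i)})$; applying $\mathcal{N}$ and Lemma~\ref{veewedge} (iterated) forces $\mathbf{x}=\widetilde{x_k}=\widehat{x_j}$ for some $k,j$, and equating those two shapes immediately yields $\mathbf{x}=(0^{(j)},1^{(n-j)})$ with no case analysis. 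Your route makes the lattice-theoretic content more explicit (and would transfer to other lattices once their irreducibles are known); the paper's route is shorter because the chosen decompositions already encode the answer.
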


\begin{proof}
Suppose that for some $\textbf{x}\not\in D_n$, there exists $z\in [0,1]$ such that $\mathcal{N}(\textbf{x})=/z/$. By Proposition~\ref{nveewedge}, $z\in (0,1)$ and 
$\displaystyle\bigwedge_{i=1}^{n} \mathcal{N}(\widetilde{x_i})=/z/$ 
once $\textbf{x}=\displaystyle\bigvee_{i=1}^{n} \widetilde{x_i}$, where $\widetilde{x_i}=(0^{(i-1)}, x_i^{(n-i+1)})$. 
So by Lemma \ref{veewedge}, $/z/=\mathcal{N}(\widetilde{x_k})$ for some 
$k=1, \ldots, n$. But, once $\mathcal{N}$ is bijective, then $\textbf{x}=\widetilde{x_k}$.

Analogously, since 
$\textbf{x}=\displaystyle\bigwedge_{i=1}^{n} \widehat{x_i}$, where $\widehat{x_i}=(x_i^{(i)},1^{(n-i)})$, then by Proposition~\ref{nveewedge} we have that 
$\displaystyle\bigvee_{i=1}^{n} \mathcal{N}(\widehat{x_i})=/z/$. So by Lemma \ref{veewedge}, $/z/=\mathcal{N}(\widetilde{x_j})$ for some 
$j=1, \ldots, n$. But, once $\mathcal{N}$ is bijective, then $\textbf{x}=\widehat{x_j}$. Hence, $\widetilde{x_k}=\widehat{x_j}$ and consequently $k=j+1$. Therefore, the Equation (\ref{eq-00011111}) holds. 
\end{proof}

%

\begin{theorem}\label{th-/x/}
Let $\mathcal{N}$ be an $n$-dimensional strong fuzzy negation. Then for each ${\bf x}\not\in D_n$, $\mathcal{N}({\bf x})\not\in D_n$.
\end{theorem}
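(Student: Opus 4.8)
The plan is to derive the theorem from Lemma~\ref{/x/} by a short connectedness argument, after observing that a strong negation is a continuous involution of $L_n([0,1])$, so that the ``bad'' elements detected by Lemma~\ref{/x/} cannot actually occur.

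First I would pin down the topological picture. Since $\mathcal{N}$ is strong, Proposition~\ref{strongstrict} shows it is strict, in particular continuous, and property \textbf{N3} gives $\mathcal{N}^{-1}=\mathcal{N}$. Because the sorting map $\rho$ and the inclusion $\sigma$ from Definition~\ref{DefCont} are continuous and satisfy $\rho\circ\sigma=\mathrm{id}_{L_n([0,1])}$, one has $\mathcal{N}=\rho\circ\mathcal{N}^{\rho}\circ\sigma$, so $\mathcal{N}\colon L_n([0,1])\to L_n([0,1])$ is continuous for the topology that $L_n([0,1])$ inherits as a subspace of $[0,1]^{n}$; hence $\mathcal{N}$ carries connected sets to connected sets.

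Next I would isolate the exceptional set. Put $S=\{(0^{(j)},1^{(n-j)}): 1\le j\le n-1\}$, a finite set of non-degenerate elements of $L_n([0,1])$. Lemma~\ref{/x/} says that every ${\bf x}\notin D_n$ with $\mathcal{N}({\bf x})\in D_n$ lies in $S$, that is, $\mathcal{N}^{-1}(D_n)\subseteq D_n\cup S$; since $\mathcal{N}^{-1}=\mathcal{N}$, equivalently $\mathcal{N}(D_n)\subseteq D_n\cup S$. Now $D_n$ is the image of $[0,1]$ under the continuous map $t\mapsto /t/$, hence connected, so $\mathcal{N}(D_n)$ is connected too. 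Each ${\bf s}\in S$ is non-degenerate, hence has positive distance to the compact set $D_n$, and $S$ is finite, so $D_n$ is simultaneously open and closed in $D_n\cup S$; consequently any connected subset of $D_n\cup S$ that meets $D_n$ is contained in $D_n$. Since $\mathcal{N}(D_n)$ contains $\mathcal{N}(/0/)=/1/\in D_n$ by \textbf{N1}, this forces $\mathcal{N}(D_n)\subseteq D_n$, i.e.\ $\mathcal{N}^{-1}(D_n)\subseteq D_n$. Therefore, if ${\bf x}\notin D_n$ then ${\bf x}\notin\mathcal{N}^{-1}(D_n)$, that is $\mathcal{N}({\bf x})\notin D_n$, which is exactly the assertion.

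The step that needs care is the first one: one has to be confident that ``continuous'' in the sense of Definition~\ref{DefCont} really licenses the topological manipulations above (connectedness of $\mathcal{N}(D_n)$, and $D_n$ being open in $D_n\cup S$); once that is settled, everything else is immediate, the heart of the matter being that a connected ``curve'' like $\mathcal{N}(D_n)$ which already touches $D_n$ cannot leak into the finitely many isolated points of $S$. If one wished to bypass topology entirely, one could instead start from $\mathcal{N}((0^{(j)},1^{(n-j)}))=/z/$ with $z\in(0,1)$ and perturb $/z/$ slightly upward and downward, using monotonicity \textbf{N2} and involutivity \textbf{N3} to reach a contradiction; but this route is more computational and still needs an intermediate-value type input.
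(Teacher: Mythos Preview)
Your argument is correct and takes a genuinely different route from the paper's. The paper argues order-theoretically: assuming some $\mathbf{x}_j=(0^{(j)},1^{(n-j)})$ maps to a degenerate $/z/$, it picks the minimal such $j$, and for any $y\in(0,z)$ uses involutivity and strict antitonicity to get $\mathcal{N}(/y/)>\mathbf{x}_j$; together with Lemma~\ref{/x/} and the minimality of $j$ this forces $\mathcal{N}(/y/)\in\mathcal{D}_n$, hence $\mathcal{N}(/y/)=/1/$, contradicting Proposition~\ref{nveewedge}. Your proof instead exploits connectedness: from Lemma~\ref{/x/} and $\mathcal{N}^{-1}=\mathcal{N}$ you get $\mathcal{N}(D_n)\subseteq D_n\cup S$, and then a single topological observation (the connected image $\mathcal{N}(D_n)$ cannot reach the finitely many isolated points of $S$) finishes the job. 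Your approach is cleaner and more conceptual once the subspace continuity of $\mathcal{N}$ is secured via $\mathcal{N}=\rho\circ\mathcal{N}^{\rho}\circ\sigma$; the paper's approach stays purely within the order structure and avoids any topological bookkeeping, which keeps it closer to the paper's overall style but makes the combinatorics with $\min J$ somewhat delicate.
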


\begin{proof}
Let $J=\{j\in\{1,\ldots,n-1\} : \mathcal{N}(\mathbf{x}_j)\in \mathcal{D}_n\}$ where $\mathbf{x}_j=(0^{(j)},1^{(n-j)})$. 
Observe that if $j\leq i$ then $\mathbf{x}_i\leq \mathbf{x}_j$. 
If  $\mathcal{N}({\bf x})\not\in D_n$ for some  $\textbf{x}\not\in D_n$, then by Lemma \ref{/x/}, $J$ is a finite and not empty set. 
Let $j= \min J$ and $z\in (0,1)$ such that $ \mathcal{N}(\mathbf{x}_j)=/z/$. 
For each $y\in (0,z)$ we have that $/y/</z/$ and so, because $\mathcal{N}$ is strong, $\mathbf{x}_j=\mathcal{N}(/z/)<\mathcal{N}(/y/)$. Therefore, 
since $(0^{(j)},1^{(n-j)})<\mathcal{N}(/y/)$, we have that $\mathcal{N}(/y/)=(a_1,\ldots,a_j,1^{(n-j)})$  for some $a_1,\ldots,a_j\in [0,1]$. On the other hand, since for each $i\in J$ we have that $\mathbf{x}_i\leq \mathbf{x}_j$, then by Lemma \ref{/x/},  $\mathcal{N}(/y/)\in \mathcal{D}_n$. 
Therefore, $\mathcal{N}(/y/)= /1/$ which is a contradiction with Proposition \ref{nveewedge}.
\end{proof}

\begin{corollary}\label{coro-3.3}
If $\mathcal{N}$ is an $n$-dimensional strong fuzzy negation then $\mathcal{N}$ satisfies \textbf{DP}.
\end{corollary}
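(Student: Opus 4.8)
The plan is to read off this corollary directly from Theorem~\ref{th-/x/} combined with the involutivity property \textbf{N3}. Since an $n$-dimensional fuzzy negation is a unary function, condition \textbf{DP} here just amounts to showing that $\mathcal{N}(/x/)\in\mathcal{D}_n$ for every $x\in[0,1]$; there is no genuine multivariate content to check.

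First I would fix $x\in[0,1]$ and put $\mathbf{y}=\mathcal{N}(/x/)$. Applying \textbf{N3}, one gets $\mathcal{N}(\mathbf{y})=\mathcal{N}(\mathcal{N}(/x/))=/x/$, which lies in $\mathcal{D}_n$ by definition of the degenerate elements. Now I would invoke the contrapositive of Theorem~\ref{th-/x/}: that theorem says ${\bf x}\not\in\mathcal{D}_n$ implies $\mathcal{N}({\bf x})\not\in\mathcal{D}_n$, so equivalently $\mathcal{N}({\bf z})\in\mathcal{D}_n$ forces ${\bf z}\in\mathcal{D}_n$. Taking ${\bf z}=\mathbf{y}$ and using that $\mathcal{N}(\mathbf{y})=/x/\in\mathcal{D}_n$, we conclude $\mathbf{y}\in\mathcal{D}_n$, i.e.\ $\mathcal{N}(/x/)\in\mathcal{D}_n$. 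Since $x$ was arbitrary, $\mathcal{N}$ satisfies \textbf{DP}.

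There is essentially no obstacle at this stage: all the difficulty has been absorbed into Theorem~\ref{th-/x/} (and the auxiliary Lemmas~\ref{bijective}, \ref{/x/} and Proposition~\ref{nveewedge}), whose role is precisely to rule out that $\mathcal{N}$ could send a non-degenerate element to a degenerate one. The only point worth flagging explicitly in the write-up is that we are using the bijectivity/involutivity of the strong negation to turn the ``preimage'' formulation supplied by Theorem~\ref{th-/x/} into the ``image'' statement required by \textbf{DP}.
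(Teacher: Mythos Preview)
Your proof is correct and is essentially the same as the paper's: both apply \textbf{N3} to get $\mathcal{N}(\mathbf{y})=/x/\in\mathcal{D}_n$ and then invoke Theorem~\ref{th-/x/} to conclude $\mathbf{y}\in\mathcal{D}_n$. The only cosmetic difference is that the paper phrases this as a proof by contradiction while you use the contrapositive directly.
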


\begin{proof}
Suppose that there exists $/x/\in \mathcal{D}_n$ such that $\textbf{y}=\mathcal{N}(/x/)\not\in \mathcal{D}_n$. Since $\mathcal{N}$ is strong,  $\mathcal{N}(\textbf{y})=/x/$, i.e., $\mathcal{N}$ map a non-degenerate  in a degenerate element which is a contradiction by Theorem 
\ref{th-/x/}.
\end{proof}

\begin{lemma}\cite[Theorem 3.2]{bedregal12}\label{bedregal12}
Let $\mathcal{N}:L_n([0,1])\rightarrow L_n([0,1])$. $\mathcal{N}$ is an $n$-dimensional strong fuzzy negation satisfying the property $\textbf{DP}$ iff there exists a strong fuzzy negation $N$ such that $\mathcal{N}=\widetilde{N}$.
\end{lemma}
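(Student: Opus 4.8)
The plan is to prove the biconditional $\mathcal{N} = \widetilde{N}$ for some strong fuzzy negation $N$ if and only if $\mathcal{N}$ is an $n$-dimensional strong fuzzy negation satisfying \textbf{DP}. One direction is essentially bookkeeping: if $\mathcal{N} = \widetilde{N}$ with $N$ a strong fuzzy negation, then $\mathcal{N}$ is an $n$-dimensional fuzzy negation by Proposition~\ref{tilden} (taking $N_1 = \dots = N_n = N$), it satisfies \textbf{DP} because $\widetilde{N}(/x/) = (N(x),\dots,N(x)) = /N(x)/ \in \mathcal{D}_n$, and \textbf{N3} follows from the computation $\widetilde{N}(\widetilde{N}(\mathbf{x})) = (N(N(\pi_1(\mathbf{x}))),\dots,N(N(\pi_n(\mathbf{x})))) = \mathbf{x}$, exactly as in the earlier proposition on $\widetilde{C_k}$.

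For the nontrivial direction, suppose $\mathcal{N}$ is an $n$-dimensional strong fuzzy negation satisfying \textbf{DP}. First I would use \textbf{DP} to extract a candidate $N$: since $\mathcal{N}$ maps degenerate elements to degenerate elements, for each $x \in [0,1]$ there is a unique $N(x) \in [0,1]$ with $\mathcal{N}(/x/) = /N(x)/$; define $N$ this way. Then $N$ is a fuzzy negation by Proposition~\ref{NN} (all the projections $\pi_i(\mathcal{N}(/x/))$ coincide and equal $N(x)$), and $N$ is strong because $/N(N(x))/ = \mathcal{N}(/N(x)/) = \mathcal{N}(\mathcal{N}(/x/)) = /x/$ by \textbf{N3}, so $N(N(x)) = x$. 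The remaining task is to show $\mathcal{N} = \widetilde{N}$, i.e.\ $\pi_i(\mathcal{N}(\mathbf{x})) = N(\pi_{n-i+1}(\mathbf{x}))$ for every $\mathbf{x} \in L_n([0,1])$ and every $i$.

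To establish that identity I would decompose an arbitrary $\mathbf{x}$ using the meet/join representations already introduced in the proof of Lemma~\ref{/x/}: $\mathbf{x} = \bigvee_{i=1}^n \widetilde{x_i}$ where $\widetilde{x_i} = (0^{(i-1)}, x_i^{(n-i+1)})$, and $\mathbf{x} = \bigwedge_{i=1}^n \widehat{x_i}$ where $\widehat{x_i} = (x_i^{(i)}, 1^{(n-i)})$. Applying Proposition~\ref{nveewedge}(iii) and (iv) turns $\mathcal{N}(\mathbf{x})$ into $\bigwedge_i \mathcal{N}(\widetilde{x_i})$ and $\bigvee_i \mathcal{N}(\widehat{x_i})$; since by Theorem~\ref{th-/x/} the image of a non-degenerate element is non-degenerate, and using Lemma~\ref{veewedge} together with bijectivity (Lemma~\ref{bijective}) exactly as in Lemma~\ref{/x/}, one pins down each $\mathcal{N}(\widetilde{x_i})$ and $\mathcal{N}(\widehat{x_i})$ and reads off the coordinates of $\mathcal{N}(\mathbf{x})$. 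The cleanest route, which I would actually write, is to observe that the argument of Theorem~\ref{subseq monotone} applies verbatim once we know $\mathcal{N}$ is $\subseteq$-monotone, and that $\subseteq$-monotonicity for a strong $\mathcal{N}$ follows from combining Proposition~\ref{nveewedge} with Theorem~\ref{th-/x/}; then Theorem~\ref{subseq monotone} gives $\mathcal{N} = \widetilde{N_1 \dots N_n}$ with $N_i(x) = \pi_i(\mathcal{N}(/x/))$, and \textbf{DP} forces all $N_i$ equal to the common value $N$, whence $\mathcal{N} = \widetilde{N}$.

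The main obstacle is the middle step: proving that a strong $n$-dimensional fuzzy negation satisfying \textbf{DP} is $\subseteq$-monotone (equivalently, $n$-representable). Strictness gives antitonicity in the $\leq$ order but $\subseteq$-monotonicity is a different, finer order relation, and the link must be forged through the lattice identities of Proposition~\ref{nveewedge} and the non-degeneracy preservation of Theorem~\ref{th-/x/}; without \textbf{DP} (hence without Theorem~\ref{th-/x/}) the decomposition argument breaks down, so that is exactly where the hypothesis is consumed. Once $\subseteq$-monotonicity is in hand, invoking Theorem~\ref{subseq monotone} and then collapsing the $N_i$ via \textbf{DP} is routine.
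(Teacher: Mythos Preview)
The paper does not actually prove this lemma: it is quoted verbatim as \cite[Theorem~3.2]{bedregal12} and used as a black box (together with Corollary~\ref{coro-3.3}) to derive Theorem~\ref{N=N}. So there is no in-paper argument to compare against; I can only assess your sketch on its own terms.

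Your easy direction ($\widetilde{N}$ strong $\Rightarrow$ $n$-dimensional strong and \textbf{DP}) is fine, and your extraction of $N$ from $\mathcal{N}$ via \textbf{DP}, together with the verification that $N$ is a strong fuzzy negation, is correct. The gap is exactly where you flag it: you assert that $\subseteq$-monotonicity of a strong $\mathcal{N}$ ``follows from combining Proposition~\ref{nveewedge} with Theorem~\ref{th-/x/}'', but you never give that argument, and it is not clear that one exists along those lines. Proposition~\ref{nveewedge} controls how $\mathcal{N}$ interacts with $\vee$ and $\wedge$, and Theorem~\ref{th-/x/} controls degeneracy; neither speaks to the coordinate-pair relations $\subseteq_i$. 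In fact the sandwich $/x_1/\leq\mathbf{x}\leq /x_n/$ together with strongness pins down $\pi_1(\mathcal{N}(\mathbf{x}))=N(x_n)$ and $\pi_n(\mathcal{N}(\mathbf{x}))=N(x_1)$, but gives no handle on the middle coordinates, and your proposed decomposition $\mathbf{x}=\bigvee_i\widetilde{x_i}=\bigwedge_i\widehat{x_i}$ runs into the same wall: after applying Proposition~\ref{nveewedge} you are left with meets and joins of the unknown quantities $\mathcal{N}(\widetilde{x_i})$, $\mathcal{N}(\widehat{x_i})$, which are themselves non-degenerate and cannot be ``pinned down'' by Lemma~\ref{veewedge} and bijectivity in the way Lemma~\ref{/x/} does (that lemma only extracts information under the extra hypothesis that the image \emph{is} degenerate).

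There is also a structural oddity: Theorem~\ref{th-/x/} does not use \textbf{DP}; it \emph{implies} it (Corollary~\ref{coro-3.3}). So a proof of this lemma that routes through Theorem~\ref{th-/x/} would make the \textbf{DP} hypothesis redundant and would really be a direct proof of Theorem~\ref{N=N}. That is not wrong, but it means you are not proving Lemma~\ref{bedregal12} as the independent input it is meant to be in the paper's logical architecture. A self-contained proof of the lemma should consume \textbf{DP} directly (to identify the $N_i$'s and to move between degenerate and non-degenerate images) without appealing to Theorem~\ref{th-/x/}.
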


\begin{theorem}\label{N=N}
 $\mathcal{N}$ is an $n$-dimensional strong fuzzy negation iff there exists a strong fuzzy negation $N$ such that $\mathcal{N}=\widetilde{N}$.
\end{theorem}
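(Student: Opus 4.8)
The plan is to reduce the statement to Lemma~\ref{bedregal12}, which already characterizes the $n$-dimensional strong fuzzy negations \emph{satisfying property} \textbf{DP} as exactly the maps $\widetilde{N}$ with $N$ a strong fuzzy negation. Hence the only work left is to observe that the hypothesis \textbf{DP} appearing there is automatic: every $n$-dimensional strong fuzzy negation preserves degenerate elements. This is precisely Corollary~\ref{coro-3.3}, which in turn rests on Theorem~\ref{th-/x/}.

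For the ($\Rightarrow$) direction I would start from an arbitrary $n$-dimensional strong fuzzy negation $\mathcal{N}$. By Corollary~\ref{coro-3.3}, $\mathcal{N}$ satisfies \textbf{DP}, so Lemma~\ref{bedregal12} applies directly and produces a strong fuzzy negation $N$ with $\mathcal{N}=\widetilde{N}$.

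For the ($\Leftarrow$) direction, suppose $\mathcal{N}=\widetilde{N}$ for some strong fuzzy negation $N$. Reading Lemma~\ref{bedregal12} from right to left gives that $\mathcal{N}$ is an $n$-dimensional strong fuzzy negation (indeed one satisfying \textbf{DP}), which is all that is claimed. If one prefers a self-contained check, \textbf{N1} and \textbf{N2} for $\widetilde{N}$ follow exactly as in the verification that $\widetilde{C_k}$ is an $n$-representable fuzzy negation, using the boundary conditions and antitonicity of $N$ together with monotonicity of the projections, while \textbf{N3} follows componentwise from the involutivity of $N$, since $\widetilde{N}(\widetilde{N}(\mathbf{x}))=(N(N(\pi_1(\mathbf{x}))),\ldots,N(N(\pi_n(\mathbf{x}))))=\mathbf{x}$.

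I do not expect a genuine obstacle in this proof: the substantive content lies upstream in Theorem~\ref{th-/x/} (no non-degenerate element is sent to a degenerate one) and in the cited Lemma~\ref{bedregal12}, so once those are in hand the present theorem is essentially a one-line consequence, the point being merely that \textbf{DP} is redundant in the characterization.
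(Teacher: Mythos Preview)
Your proposal is correct and follows exactly the same route as the paper: the paper's proof is the single line ``Straightforward from Corollary~\ref{coro-3.3} and Lemma~\ref{bedregal12},'' which is precisely the reduction you describe. Your additional self-contained verification of \textbf{N1}--\textbf{N3} for $\widetilde{N}$ is fine but not needed, since Lemma~\ref{bedregal12} already supplies the ($\Leftarrow$) direction.
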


\begin{proof}
Straightforward from Corollary \ref{coro-3.3} and Lemma \ref{bedregal12}.
\end{proof}

\subsection{$n$-Dimensional equilibrium points}

Analogous to fuzzy negations, we will define an $n$-dimensional equilibrium point as the following:

\begin{definition}
An element ${\bf e}\in L_n([0,1])$ is an $n$-dimensional equilibrium point for an $n$-dimensional fuzzy negation $\mathcal{N}$ if $\mathcal{N}({\bf e})= {\bf e}$.
\end{definition}

\begin{remark}
Let $\mathcal{N}$ be a strict $n$-dimensional fuzzy negation. If  ${\bf x}<{\bf e}$ then $\mathcal{N}({\bf x})>{\bf e}$ and if ${\bf e}<{\bf x}$ then $\mathcal{N}({\bf x})<{\bf e}$.
\end{remark}

\begin{proposition}\label{eqpoint}
Let $N$ be a fuzzy negation with the equilibrium point $e$. Then, $/e/$ is an $n$-dimensional equilibrium point of $\widetilde{N}$.
\end{proposition}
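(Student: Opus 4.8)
The plan is to simply unfold the definition of $\widetilde{N}$ at the degenerate element $/e/$ and use the defining property of the equilibrium point. First I would recall that, since $N\leq\ldots\leq N$ trivially holds, Proposition~\ref{tilden} guarantees that $\widetilde{N}=\widetilde{N\ldots N}$ is a genuine $n$-dimensional fuzzy negation, so by Eq.~(\ref{tildeN}) we have
\begin{equation*}
\widetilde{N}(\mathbf{x})=(N(\pi_n(\mathbf{x})),\ldots,N(\pi_1(\mathbf{x}))),\qquad\forall~\mathbf{x}\in L_n([0,1]).
\end{equation*}
Also note that $/e/=(e,\ldots,e)\in\mathcal{D}_n\subseteq L_n([0,1])$, since the constant tuple is (weakly) increasing, so the expression above may be evaluated at $\mathbf{x}=/e/$.

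Next I would compute $\widetilde{N}(/e/)$ directly. The only observation needed is that every projection of a degenerate element is the same, i.e.\ $\pi_i(/e/)=e$ for all $i=1,\ldots,n$; this is immediate from the definition of $/e/$. Substituting into the formula for $\widetilde{N}$ gives
\begin{equation*}
\widetilde{N}(/e/)=(N(\pi_n(/e/)),\ldots,N(\pi_1(/e/)))=(N(e),\ldots,N(e)).
\end{equation*}
Since $e$ is an equilibrium point of $N$, by definition $N(e)=e$, and therefore $\widetilde{N}(/e/)=(e,\ldots,e)=/e/$. Hence $/e/$ satisfies $\widetilde{N}(/e/)=/e/$, which is exactly the condition for $/e/$ to be an $n$-dimensional equilibrium point of $\widetilde{N}$.

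There is essentially no obstacle here: the statement reduces to the componentwise evaluation of $\widetilde{N}$ on a constant tuple, and the single nontrivial ingredient, $N(e)=e$, is the hypothesis. If desired, one could append the standard uniqueness remark: if in addition $N$ is continuous, then by \cite[Theorem 3.2]{klir95} $e$ is the unique equilibrium point of $N$, and a short argument using \textbf{N2} of $\widetilde{N}$ and the ordering~(\ref{mono_proji}) would show $/e/$ is in fact the unique \emph{degenerate} equilibrium point of $\widetilde{N}$; but this is not required for the stated proposition.
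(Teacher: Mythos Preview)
Your proof is correct and is exactly the straightforward computation the paper has in mind (the paper's own proof consists of the single word ``Straightforward''). You simply unfold Eq.~(\ref{tildeN}) at the degenerate element $/e/$ and use $N(e)=e$, which is precisely the intended argument.
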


\begin{proof}
Straightforward.
\end{proof}

\begin{corollary}
Let $\mathcal{N}$ be an $n$-dimensional strong fuzzy negation. Then, there exists an element $/e/\in \mathcal{D}_n$ such that $/e/$ is an $n$-dimensional equilibrium point of $\mathcal{N}$. 
\end{corollary}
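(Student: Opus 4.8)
The plan is to reduce the $n$-dimensional statement to the one-dimensional fact already recorded in Example~\ref{ex:1}, via the representation result. First I would invoke Theorem~\ref{N=N}: since $\mathcal{N}$ is an $n$-dimensional strong fuzzy negation, there exists a strong fuzzy negation $N\colon [0,1]\to[0,1]$ with $\mathcal{N}=\widetilde{N}$. Next, I would recall that every strong fuzzy negation is strict, hence continuous, and that every continuous fuzzy negation has a (unique) equilibrium point by \cite[Theorem~3.4]{klir95}, exactly as used in Example~\ref{ex:1}; denote it by $e\in[0,1]$, so that $N(e)=e$. Finally, Proposition~\ref{eqpoint} yields that $/e/$ is an $n$-dimensional equilibrium point of $\widetilde{N}=\mathcal{N}$, i.e.\ $\mathcal{N}(/e/)=/e/$, and $/e/\in\mathcal{D}_n$ by the very definition of the notation $/\cdot/$. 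This finishes the argument.

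The only point deserving a word of care is the \emph{existence} (as opposed to uniqueness) of an equilibrium point of $N$: general fuzzy negations need not have one, as the excerpt shows with $N_\bot$ and $N_\top$, so the argument genuinely uses strongness — through continuity — at this step. Beyond that there is no real obstacle, since the substantive content has been front-loaded into Theorem~\ref{N=N} and the classical one-dimensional result.

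If one prefers a self-contained fixed-point argument avoiding Theorem~\ref{N=N}, it is available: by Corollary~\ref{coro-3.3}, $\mathcal{N}$ satisfies \textbf{DP}, so it restricts to a map on the diagonal $\mathcal{D}_n$; identifying $\mathcal{D}_n$ with $[0,1]$ via $/x/\mapsto x$, this restriction is decreasing with $\mathcal{N}(/0/)=/1/$ and $\mathcal{N}(/1/)=/0/$, and (using Proposition~\ref{strongstrict} for continuity) the function $x\mapsto \pi_1(\mathcal{N}(/x/))-x$ changes sign on $[0,1]$, so the intermediate value theorem produces an $e$ with $\mathcal{N}(/e/)=/e/$. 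The reduction through Theorem~\ref{N=N} is shorter, so I would present that one and perhaps mention the alternative in a remark.
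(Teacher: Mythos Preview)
Your proposal is correct and follows essentially the same route as the paper, which simply records the corollary as ``straightforward from Corollary~\ref{coro-3.3}, Theorem~\ref{N=N}, and Proposition~\ref{eqpoint}.'' You are a bit more explicit than the paper in justifying the existence of the equilibrium point of the underlying strong negation $N$ (via continuity and \cite[Theorem~3.4]{klir95}), which is exactly the missing link needed to apply Proposition~\ref{eqpoint}; the separate citation of Corollary~\ref{coro-3.3} in the paper is redundant once Theorem~\ref{N=N} is invoked, and your alternative IVT argument via \textbf{DP} is a valid variant.
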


\begin{proof}
Straightforward from Corollary \ref{coro-3.3}, Theorem \ref{N=N}, and Proposition \ref{eqpoint}.
\end{proof}

\begin{corollary}\label{eqpoint1}
Let $N$ be the strong fuzzy negation and $e\in (0,1)$. Then, $/e/$ is an $n$-dimensional equilibrium point of $\widetilde{N}$ iff $e$ is an equilibrium point of $N$.
\end{corollary}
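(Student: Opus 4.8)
The plan is to prove the biconditional in Corollary~\ref{eqpoint1} by handling each direction separately, relying heavily on Proposition~\ref{eqpoint} for one implication and on the explicit form of $\widetilde{N}$ together with the uniqueness of equilibrium points for the other.

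For the implication ($\Leftarrow$): Suppose $e\in(0,1)$ is an equilibrium point of the strong fuzzy negation $N$, i.e. $N(e)=e$. Then this direction is exactly Proposition~\ref{eqpoint} specialized to the strong negation $N$: since $/e/=(e,\ldots,e)$ is degenerate, we compute $\widetilde{N}(/e/)=(N(\pi_n(/e/)),\ldots,N(\pi_1(/e/)))=(N(e),\ldots,N(e))=(e,\ldots,e)=/e/$, so $/e/$ is an $n$-dimensional equilibrium point of $\widetilde{N}$.

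For the implication ($\Rightarrow$): Suppose $/e/$ with $e\in(0,1)$ is an $n$-dimensional equilibrium point of $\widetilde{N}$, i.e. $\widetilde{N}(/e/)=/e/$. Applying Eq.~(\ref{tildeN}) to the degenerate element $/e/$ gives $\widetilde{N}(/e/)=(N(e),\ldots,N(e))=/N(e)/$. Hence $/N(e)/=/e/$, which by definition of the degenerate-element notation forces $N(e)=e$, so $e$ is an equilibrium point of $N$. (One could alternatively just project onto the $i$-th coordinate: $\pi_i(\widetilde{N}(/e/))=N(e)$ and $\pi_i(/e/)=e$.) Since $N$ is strong, it is continuous (strong negations are strict, hence continuous), so by the cited result of Klir and Yuan this equilibrium point is the unique one.

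I do not expect any real obstacle here: both directions are immediate once one unwinds the definition of $\widetilde{N}$ on a degenerate element, and the only subtlety worth noting is why $e\in(0,1)$ is assumed (so that $/e/$ is a genuine interior equilibrium and not one of the boundary values $0,1$ which are never equilibrium points of a fuzzy negation anyway). So in practice the proof is one line per direction, and I would simply write ``Straightforward from Proposition~\ref{eqpoint} and Eq.~(\ref{tildeN})'' or spell out the two computations above.
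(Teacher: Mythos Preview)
Your proposal is correct and essentially matches the paper's own proof, which simply reads ``Straightforward from Theorem~\ref{N=N} and Proposition~\ref{eqpoint}.'' Your explicit unwinding of $\widetilde{N}(/e/)=(N(e),\ldots,N(e))$ via Eq.~(\ref{tildeN}) is exactly the content behind that one-liner; the paper's citation of Theorem~\ref{N=N} is arguably superfluous here since $\widetilde{N}$ is already given, so your reference to Eq.~(\ref{tildeN}) in its place is if anything more to the point.
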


\begin{proof}
Straightforward from Theorem \ref{N=N}, and Proposition \ref{eqpoint}.
\end{proof}

\begin{remark} For $k \in \{1,2, \ldots , n\}$, consider the strong fuzzy negation given by Eq.(\ref{eq_ckn}) in Example~\ref{ex:1}, $C_k(x) =\sqrt[n-k+1]{1-x^{n-k+1}}$ and its corresponding equilibrium point $\sqrt[n-k+1]{0.5}$. 
By taking $\mathbf{x} =(\sqrt[n-k+1]{0.5}, \sqrt[n-k+1]{0.5}, \ldots , \sqrt[n-k+1]{0.5}) \in L_n([0,1])$, one can observe that
\begin{eqnarray*}
\mathcal{C}(\mathbf{x}) & = &  \widetilde{C_k}({\bf x})=(C_k(\pi_n({\bf x})), \ldots , C_k(\pi_1({\bf x}))) \\ &  =&  (\sqrt[n-k+1]{0.5},\sqrt[n-k+1]{0.5},  \ldots ,\sqrt[n-k+1]{0.5})\\ & =& \mathbf{x},
\end{eqnarray*}
meaning that such operator preserves distinct equilibrium point of component-functions $C_k$ of an $n$-representable fuzzy negation $\mathcal{C} = \widetilde{C_k}$.
\end{remark}

\begin{remark} For $k \in \{1,2, \ldots , n\}$, consider the fuzzy negation given by Eq.(\ref{eq-dk}) in Proposition~\ref{pro-dk}, $C^k(x) = 1-x^k$ and its corresponding equilibrium point $e^k$. By N2, $e^1 \leq\ldots\leq e^n$, then $\mathbf{x} =(e^1, \ldots , e^n) \in L_n([0,1])$.  So, it is immediate  observing that
\begin{eqnarray*}
\mathcal{C}(\mathbf{x}) &=&  \widetilde{C^1 \ldots C^n}({\bf x})=(C^1(\pi_n({\bf x})), \ldots , C^n(\pi_1({\bf x}))) \\ &  = & (e^1,  \ldots ,  e^n)\\& =& \mathbf{x},
\end{eqnarray*}
meaning that such operator preserves the equilibrium points of component-functions $C^1,\ldots, C^2$ of the $n$-representable fuzzy negation $\mathcal{C} = \widetilde{C^1\ldots C^n}$.
\end{remark}

\section{$n$-Dimensional automorphisms}\label{sec-4}

In this section we briefly recall some well-known results of automorphism on $L([0,1])$, in order to extend them to the $n$-dimensional approach, mainly connected to representable fuzzy negation on $L_n([0,1])$. The notion of $\mathcal{N}$-preserving $n$-dimensional fuzzy automorphism is also discussed.

In \cite{bedregal12}, an $n$-dimensional automorphism is defined as follows:

\begin{definition}
A function $\varphi: L_n([0,1]) \rightarrow L_n([0,1])$ is an $n$-dimensional automorphism if $\varphi$ is bijective and the following condition is satisfied
\begin{center}
${\bf x}\leq {\bf y}$ iff $\varphi({\bf x})\leq \varphi({\bf y})$.
\end{center}
\end{definition}

\begin{theorem}\cite[Theorem 3.4]{bedregal12}\label{bedregal3.4}
Let $\varphi: L_n([0,1]) \rightarrow L_n([0,1])$. A function $\varphi \in Aut(L_n([0,1]))$ iff there exists $\psi \in Aut([0,1])$ such that
\begin{center}
$\varphi(\pi_1({\bf x}), \ldots, \pi_n({\bf x}))=(\psi(\pi_1({\bf x})), \ldots, \psi(\pi_n({\bf x})))$.
\end{center}
In this case, we will denote $\varphi$ by $\widetilde{\psi}$. Thus, the following holds
\begin{eqnarray}\label{widetildepsi}
\widetilde{\psi}(\pi_1({\bf x}), \ldots, \pi_n({\bf x}))=(\psi(\pi_1({\bf x})), \ldots, \psi(\pi_n({\bf x}))).
\end{eqnarray}
\end{theorem}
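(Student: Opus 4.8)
The statement is an equivalence, and I would prove the two directions separately; throughout I assume $n\geq 2$, since for $n=1$ one has $L_1([0,1])=[0,1]=\mathcal D_1$ and the claim is vacuous. For the direction $\Leftarrow$, given $\psi\in Aut([0,1])$ I define $\widetilde\psi$ by Eq.~(\ref{widetildepsi}) and check it is an $n$-dimensional automorphism: since $\psi$ is increasing, $x_1\leq\cdots\leq x_n$ forces $\psi(x_1)\leq\cdots\leq\psi(x_n)$, so $\widetilde\psi$ maps $L_n([0,1])$ into itself; it is a bijection with inverse $\widetilde{\psi^{-1}}$; and since $\psi$ is a strictly increasing bijection, $a\leq b\Leftrightarrow\psi(a)\leq\psi(b)$, so by Eq.~(\ref{mono_proji}) applied coordinatewise, ${\bf x}\leq{\bf y}\Leftrightarrow\widetilde\psi({\bf x})\leq\widetilde\psi({\bf y})$. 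These are routine verifications.

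The substance is in the direction $\Rightarrow$. Let $\varphi\in Aut(L_n([0,1]))$; the main step is to show $\varphi(\mathcal D_n)=\mathcal D_n$ by characterizing $\mathcal D_n$ order-theoretically. Recalling from the proof of Lemma~\ref{/x/} that ${\bf x}=\bigvee_{i=1}^n\widetilde{x_i}$ with $\widetilde{x_i}=(0^{(i-1)},x_i^{(n-i+1)})$ and dually ${\bf x}=\bigwedge_{i=1}^n\widehat{x_i}$ with $\widehat{x_i}=(x_i^{(i)},1^{(n-i)})$, a short computation (examining the $i$-th coordinate) shows that the join-irreducible elements of $L_n([0,1])$ are exactly the tuples $(0^{(i-1)},t^{(n-i+1)})$, and dually the meet-irreducibles are exactly the $(t^{(i)},1^{(n-i)})$, for $i\in\{1,\dots,n\}$ and $t\in[0,1]$. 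Hence an element is both join- and meet-irreducible iff it lies in $\mathcal D_n\cup F$, where $F=\{(0^{(j)},1^{(n-j)}):1\leq j\leq n-1\}$ is a chain of exactly $n-1$ elements, each of which is incomparable with every $/t/$ for $0<t<1$. Since an order-isomorphism preserves finite joins and meets, $\varphi$ maps the set of doubly-irreducible elements onto itself; that poset has least element $/0/$ and greatest element $/1/$ (both fixed by $\varphi$), and the remaining elements split into the two mutually incomparable chains $\mathcal D_n\setminus\{/0/,/1/\}$ and $F$. These chains have different cardinalities ($n$ being finite), so $\varphi$ carries each onto itself; being an automorphism of the finite chain $F$, it fixes $F$ pointwise; and therefore $\varphi(\mathcal D_n)=\mathcal D_n$.

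With this in hand, $\varphi$ restricts to an order-automorphism of $\mathcal D_n$, which is order-isomorphic to $[0,1]$ via $/t/\mapsto t$, so I define $\psi:[0,1]\to[0,1]$ by $\varphi(/t/)=/\psi(t)/$. Then $\psi$ is a monotone bijection of $[0,1]$ onto itself with $\psi(0)=0$, $\psi(1)=1$, hence an increasing bijection of $[0,1]$ and thus (by the description of automorphisms in Section~\ref{sect2}) an element of $Aut([0,1])$. Finally, to see $\varphi=\widetilde\psi$: writing ${\bf u}_i:=(0^{(i-1)},1^{(n-i+1)})$ one has $\widetilde{x_i}=/x_i/\wedge{\bf u}_i$, and each ${\bf u}_i$ lies in $F\cup\{/1/\}$ and is therefore fixed by $\varphi$; using that $\varphi$ preserves $\vee$ and $\wedge$,
\[
\varphi({\bf x})=\bigvee_{i=1}^{n}\varphi(/x_i/\wedge{\bf u}_i)=\bigvee_{i=1}^{n}\bigl(/\psi(x_i)/\wedge{\bf u}_i\bigr)=\bigvee_{i=1}^{n}\bigl(0^{(i-1)},\,\psi(x_i)^{(n-i+1)}\bigr),
\]
whose $j$-th coordinate equals $\max\{\psi(x_i):i\leq j\}=\psi(x_j)$ since $\psi$ is increasing. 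Thus $\varphi({\bf x})=(\psi(x_1),\dots,\psi(x_n))=\widetilde\psi({\bf x})$, which finishes the proof.

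I expect the only real obstacle to be the order-theoretic identification of $\mathcal D_n$ inside $L_n([0,1])$: computing the join- and meet-irreducibles is quick, but one must notice the spurious doubly-irreducible tuples $F=\{(0^{(j)},1^{(n-j)})\}$ and argue that $\varphi$ cannot scramble them with the degenerate chain — the point being that a finite chain cannot be order-isomorphic to an uncountable one, which is exactly the leverage needed. Everything after that is bookkeeping with the join/meet decompositions already recorded in the proof of Lemma~\ref{/x/}.
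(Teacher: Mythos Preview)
The paper does not supply a proof of this theorem: it is quoted verbatim from \cite[Theorem~3.4]{bedregal12} and used as a black box, so there is no in-paper argument to compare your proposal against.

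That said, your argument is correct and self-contained. The $(\Leftarrow)$ direction is indeed routine. For $(\Rightarrow)$, your order-theoretic characterisation of the doubly-irreducible elements of $L_n([0,1])$ as $\mathcal D_n\cup F$ is accurate (the verification that the join-irreducibles are exactly the tuples $(0^{(i-1)},t^{(n-i+1)})$ and dually for meet-irreducibles is a short case check), and the key observation---that an order-isomorphism must send the uncountable chain $\mathcal D_n\setminus\{/0/,/1/\}$ to itself rather than to the finite chain $F$---is exactly the leverage needed to force $\varphi(\mathcal D_n)=\mathcal D_n$. Once $\psi$ is defined via the restriction to degenerates, your computation of $\varphi(\mathbf x)$ using $\mathbf x=\bigvee_i(/x_i/\wedge\mathbf u_i)$, the fact that each $\mathbf u_i$ is fixed by $\varphi$, and preservation of $\vee,\wedge$ by order-isomorphisms, is clean and correct. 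One minor remark: when you conclude $\psi\in Aut([0,1])$ you are implicitly using that a monotone bijection of $[0,1]$ onto itself is automatically continuous, which is true but worth stating since the paper's definition of automorphism explicitly requires continuity.
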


\begin{corollary}\label{corvarphi}
If $\varphi \in Aut(L_n([0,1]))$ then it is continuous, strictly increasing, $\varphi(/0/)=/0/$ and $\varphi(/1/)=/1/$.
\end{corollary}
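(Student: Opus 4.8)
The plan is to reduce everything to the one-variable representation theorem. By Theorem~\ref{bedregal3.4} there is a classical automorphism $\psi\in Aut([0,1])$ with $\varphi=\widetilde\psi$, so that, by Eq.~(\ref{widetildepsi}), $\varphi(\mathbf x)=(\psi(\pi_1(\mathbf x)),\ldots,\psi(\pi_n(\mathbf x)))$ for every $\mathbf x\in L_n([0,1])$. Once this is available, the boundary conditions and monotonicity are immediate transfers of the corresponding properties of $\psi$. First I would note that $\psi(0)=0$ and $\psi(1)=1$ give, via Eq.~(\ref{widetildepsi}), $\varphi(/0/)=(\psi(0),\ldots,\psi(0))=/0/$ and $\varphi(/1/)=(\psi(1),\ldots,\psi(1))=/1/$. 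For strict monotonicity, take $\mathbf x<\mathbf y$; by Eq.~(\ref{mono_proji}) there is an index $i$ with $\pi_i(\mathbf x)<\pi_i(\mathbf y)$ and $\pi_j(\mathbf x)\le\pi_j(\mathbf y)$ for all $j$, and since $\psi$ is strictly increasing we get $\psi(\pi_i(\mathbf x))<\psi(\pi_i(\mathbf y))$ together with $\psi(\pi_j(\mathbf x))\le\psi(\pi_j(\mathbf y))$ for all $j$, i.e.\ $\varphi(\mathbf x)<\varphi(\mathbf y)$ by Eq.~(\ref{mono_proji}) again. (Alternatively: $\varphi(\mathbf x)\le\varphi(\mathbf y)$ follows from the defining biconditional of an $n$-dimensional automorphism, and $\varphi(\mathbf x)\ne\varphi(\mathbf y)$ from injectivity of $\varphi$.)

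It remains to verify continuity in the sense of Definition~\ref{DefCont}, i.e.\ that $\varphi^\rho=\sigma\circ\varphi\circ\rho:[0,1]^n\to[0,1]^n$ is continuous. Unwinding the definitions of $\rho$ and $\sigma$ from Proposition~\ref{PropCont} and using $\varphi=\widetilde\psi$, one obtains $\varphi^\rho(x_1,\ldots,x_n)=(\psi(x_{(1)}),\ldots,\psi(x_{(n)}))$, where $x_{(1)}\le\cdots\le x_{(n)}$ is the non-decreasing rearrangement of $(x_1,\ldots,x_n)$. So $\varphi^\rho$ is the composition of the sorting map $(x_1,\ldots,x_n)\mapsto(x_{(1)},\ldots,x_{(n)})$ with the componentwise application of $\psi$. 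The latter is continuous since $\psi$ is; the former is continuous because each order statistic can be expressed by finitely many $\min$'s and $\max$'s of the coordinates (for instance $x_{(k)}=\min_{|S|=k}\max_{i\in S}x_i$), hence defines a continuous map $[0,1]^n\to[0,1]$. Therefore $\varphi^\rho$ is continuous, and by Definition~\ref{DefCont} so is $\varphi$.

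The only step that needs more than bookkeeping is the continuity of the sorting map hidden inside $\varphi^\rho$; every other assertion is a straightforward coordinatewise transfer of the properties of $\psi$ guaranteed by Theorem~\ref{bedregal3.4}. I therefore expect the actual write-up to be quite short, with the $\min$/$\max$ description of the order statistics (or an equivalent $\varepsilon$-$\delta$ remark) being the single substantive point.
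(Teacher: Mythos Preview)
Your proof is correct and follows exactly the route the paper intends: the corollary is stated immediately after Theorem~\ref{bedregal3.4} with no explicit proof, so the paper treats all four assertions as direct consequences of the representation $\varphi=\widetilde\psi$. Your write-up simply makes explicit what the paper leaves implicit, including the only non-trivial point (continuity of the sorting map inside $\varphi^\rho$), which the paper does not spell out.
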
 

\begin{proposition}\cite[Proposition 3.4]{bedregal12}\label{tildepsi}
Let $\psi \in Aut([0,1])$. Then, the following holds:
\begin{eqnarray}\label{psi_reverse}
\widetilde{\psi^{-1}}&=\widetilde{\psi}^{-1}.
\end{eqnarray}
\end{proposition}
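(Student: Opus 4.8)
The plan is to prove $\widetilde{\psi^{-1}} = \widetilde{\psi}^{-1}$ by showing that $\widetilde{\psi^{-1}}$ is a two-sided inverse of $\widetilde{\psi}$ as a function on $L_n([0,1])$, and then invoking uniqueness of inverses of bijections. First I would recall from Theorem~\ref{bedregal3.4} that for $\psi \in Aut([0,1])$, the map $\widetilde{\psi}$ acts componentwise via Eq.(\ref{widetildepsi}), namely $\widetilde{\psi}(x_1, \ldots, x_n) = (\psi(x_1), \ldots, \psi(x_n))$; note also that $\widetilde{\psi}$ genuinely maps $L_n([0,1])$ into itself since $\psi$ is increasing, hence preserves the order $x_1 \leq \cdots \leq x_n$. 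Since $\psi^{-1} \in Aut([0,1])$ (automorphisms are closed under inverses), $\widetilde{\psi^{-1}}$ is likewise well-defined on $L_n([0,1])$ and acts componentwise by $\psi^{-1}$.

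Next I would compute the composition directly: for any $\mathbf{x} = (x_1, \ldots, x_n) \in L_n([0,1])$,
\begin{eqnarray*}
(\widetilde{\psi^{-1}} \circ \widetilde{\psi})(\mathbf{x}) &=& \widetilde{\psi^{-1}}(\psi(x_1), \ldots, \psi(x_n)) = (\psi^{-1}(\psi(x_1)), \ldots, \psi^{-1}(\psi(x_n))) \\
&=& (x_1, \ldots, x_n) = \mathbf{x},
\end{eqnarray*}
and symmetrically $(\widetilde{\psi} \circ \widetilde{\psi^{-1}})(\mathbf{x}) = \mathbf{x}$, using $\psi \circ \psi^{-1} = \mathrm{id}_{[0,1]}$. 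Hence $\widetilde{\psi^{-1}}$ is a two-sided inverse of $\widetilde{\psi}$. Since $\widetilde{\psi} \in Aut(L_n([0,1]))$ is bijective, its inverse is unique, so $\widetilde{\psi^{-1}} = \widetilde{\psi}^{-1}$, which is Eq.(\ref{psi_reverse}).

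There is essentially no serious obstacle here; the only point requiring a moment's care is confirming that $\widetilde{\psi^{-1}}$ really does land in $L_n([0,1])$ (so that the composition makes sense as a self-map of $L_n([0,1])$), which follows because $\psi^{-1}$ is strictly increasing and therefore preserves the chain condition $x_1 \leq \cdots \leq x_n$ defining $L_n([0,1])$ in Eq.(\ref{L_n}). One could alternatively phrase the whole argument through Theorem~\ref{bedregal3.4} itself: $\widetilde{\psi}^{-1}$ is an $n$-dimensional automorphism, hence equals $\widetilde{\chi}$ for a unique $\chi \in Aut([0,1])$, and evaluating on degenerate elements $/x/$ forces $\chi = \psi^{-1}$; but the direct componentwise verification above is shorter and self-contained.
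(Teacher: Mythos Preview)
Your proof is correct. The paper itself does not give a proof of this proposition; it is simply quoted from \cite[Proposition 3.4]{bedregal12}, so there is no in-paper argument to compare against. Your direct componentwise verification that $\widetilde{\psi^{-1}}\circ\widetilde{\psi}=\widetilde{\psi}\circ\widetilde{\psi^{-1}}=\mathrm{id}_{L_n([0,1])}$, together with uniqueness of inverses, is the natural and standard way to establish the claim, and your side remark about the alternative route via Theorem~\ref{bedregal3.4} and evaluation on degenerate elements is also valid.
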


\begin{proposition}
Let $\varphi \in Aut(L_n([0,1]))$. $\mathcal{N}$ is $n$-dimensional (strict, strong) fuzzy negation  iff $\mathcal{N}^{\varphi}$ is an $n$-dimensional (strict, strong) fuzzy negation such that,  for all ${\bf x}\in L_n([0,1])$, the following holds:
\begin{center}
$\mathcal{N}^{\varphi}({\bf x})=\varphi^{-1}(\mathcal{N}(\varphi({\bf x})))$.
\end{center}
\end{proposition}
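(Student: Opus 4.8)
The plan is to prove the ``$\Rightarrow$'' direction for each of the three variants and then obtain ``$\Leftarrow$'' for free. Throughout I write $\mathcal{N}^{\varphi}=\varphi^{-1}\circ\mathcal{N}\circ\varphi$, which is exactly the displayed formula. By Theorem~\ref{bedregal3.4} one may write $\varphi=\widetilde{\psi}$ for some $\psi\in Aut([0,1])$, and then $\varphi^{-1}=\widetilde{\psi^{-1}}\in Aut(L_n([0,1]))$ by Proposition~\ref{tildepsi}. A one-line computation gives $(\mathcal{N}^{\varphi})^{\varphi^{-1}}=\varphi\circ\varphi^{-1}\circ\mathcal{N}\circ\varphi\circ\varphi^{-1}=\mathcal{N}$. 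Hence, once the ``$\Rightarrow$'' direction is established, applying it to the $n$-dimensional fuzzy negation $\mathcal{N}^{\varphi}$ and the $n$-dimensional automorphism $\varphi^{-1}$ will show that if $\mathcal{N}^{\varphi}$ is an $n$-dimensional (strict, strong) fuzzy negation then so is $\mathcal{N}=(\mathcal{N}^{\varphi})^{\varphi^{-1}}$; this settles ``$\Leftarrow$''. So from now on I assume $\mathcal{N}$ is an $n$-dimensional fuzzy negation and verify the properties for $\mathcal{N}^{\varphi}$.

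First I would check \textbf{N1} and \textbf{N2}. By Corollary~\ref{corvarphi}, $\varphi(/0/)=/0/$ and $\varphi(/1/)=/1/$, and the same identities hold for $\varphi^{-1}$; therefore $\mathcal{N}^{\varphi}(/0/)=\varphi^{-1}(\mathcal{N}(\varphi(/0/)))=\varphi^{-1}(\mathcal{N}(/0/))=\varphi^{-1}(/1/)=/1/$, and symmetrically $\mathcal{N}^{\varphi}(/1/)=/0/$, giving \textbf{N1}. For \textbf{N2}: if ${\bf x}\leq{\bf y}$ then $\varphi({\bf x})\leq\varphi({\bf y})$ since $\varphi$ preserves the order, hence $\mathcal{N}(\varphi({\bf x}))\geq\mathcal{N}(\varphi({\bf y}))$ by \textbf{N2} for $\mathcal{N}$, and applying the order-preserving $\varphi^{-1}$ yields $\mathcal{N}^{\varphi}({\bf x})\geq\mathcal{N}^{\varphi}({\bf y})$. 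So $\mathcal{N}^{\varphi}$ is an $n$-dimensional fuzzy negation.

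Next, the strict and strong cases. If $\mathcal{N}$ is strictly decreasing then $\mathcal{N}^{\varphi}=\varphi^{-1}\circ\mathcal{N}\circ\varphi$ is strictly decreasing, being a composition of the two strictly increasing maps $\varphi,\varphi^{-1}$ and the one strictly decreasing map $\mathcal{N}$. For continuity I would first observe that the $n$-dimensional continuity of Definition~\ref{DefCont} is stable under composition: since $\rho\circ\sigma=\mathrm{id}_{L_n([0,1])}$ (the tuple returned by $\sigma$ is already ordered, so re-ordering it does nothing), one gets $(F\circ G)^{\rho}=\sigma\circ F\circ(\rho\circ\sigma)\circ G\circ\rho=F^{\rho}\circ G^{\rho}$ for all $F,G\colon L_n([0,1])\to L_n([0,1])$; since $\varphi,\varphi^{-1}$ are continuous by Corollary~\ref{corvarphi} and $\mathcal{N}$ is continuous by hypothesis, $\mathcal{N}^{\varphi}$ is continuous, hence strict. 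For the strong case, if $\mathcal{N}\circ\mathcal{N}=\mathrm{id}$ then a direct computation gives
\[
\mathcal{N}^{\varphi}\circ\mathcal{N}^{\varphi}=\varphi^{-1}\circ\mathcal{N}\circ(\varphi\circ\varphi^{-1})\circ\mathcal{N}\circ\varphi=\varphi^{-1}\circ(\mathcal{N}\circ\mathcal{N})\circ\varphi=\varphi^{-1}\circ\varphi=\mathrm{id},
\]
so \textbf{N3} holds for $\mathcal{N}^{\varphi}$ (and strictness is then automatic by Proposition~\ref{strongstrict}).

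The step I expect to require the most care is the continuity claim in the strict case, because ``continuity'' here is the non-standard notion of Definition~\ref{DefCont} and not ordinary continuity on $L_n([0,1])$; the key point that makes it routine is the identity $\rho\circ\sigma=\mathrm{id}_{L_n([0,1])}$, which converts composition of $n$-dimensional functions into composition of ordinary maps on $[0,1]^n$ (equivalently, using $\varphi=\widetilde{\psi}$ one sees directly that $(\mathcal{N}^{\varphi})^{\rho}$ is $\mathcal{N}^{\rho}$ pre- and post-composed with the coordinatewise map induced by $\psi$, which is continuous). All the remaining verifications are routine manipulations of monotone, boundary-preserving, or involutive maps.
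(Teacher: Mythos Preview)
Your proof is correct and follows essentially the same route as the paper: verify \textbf{N1}, \textbf{N2}, strictness and \textbf{N3} for $\mathcal{N}^{\varphi}$ directly, then recover the converse via $\mathcal{N}=(\mathcal{N}^{\varphi})^{\varphi^{-1}}$. The only notable difference is that you take more care with the continuity step, explicitly unpacking Definition~\ref{DefCont} and using $\rho\circ\sigma=\mathrm{id}_{L_n([0,1])}$ to show $(F\circ G)^{\rho}=F^{\rho}\circ G^{\rho}$, whereas the paper simply invokes ``composition of continuous functions is continuous'' without addressing the non-standard definition; your extra detail is a genuine improvement in rigor, not a departure in strategy.
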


\begin{proof}
$(\Rightarrow)$ \textbf{N1}: Let $\mathcal{N}$ be an $n$-dimensional fuzzy negation. Then,  the following holds:
\begin{eqnarray*}
\mathcal{N}^{\varphi}(/0/)& = & \varphi^{-1}(\mathcal{N}(\varphi(/0/))) = \varphi^{-1}(\mathcal{N}(/0/)) \\ & = & \varphi^{-1}(/1/) = /1/. 
\end{eqnarray*}
Analogously we proof that $\mathcal{N}^{\varphi}(/1/)=/0/$.

\textbf{N2}: If $\textbf{x}\leq \textbf{y}$ then $ \varphi(\textbf{x})\leq \varphi(\textbf{y})$ and the following holds: 
\begin{eqnarray*}
\mathcal{N}(\varphi(\textbf{x}))\geq \mathcal{N}(\varphi(\textbf{y})) & \Rightarrow & \varphi^{-1}(\mathcal{N}(\varphi(\textbf{x})))\geq \varphi^{-1}(\mathcal{N}(\varphi(\textbf{y}))) \\ 
& \Rightarrow & \mathcal{N}^{\varphi}({\bf x})\geq \mathcal{N}^{\varphi}({\bf y}).
\end{eqnarray*}
Therefore, $\mathcal{N}^{\varphi}$ is an $n$-dimensional fuzzy negation. In addition, if $\mathcal{N}$ is strictly decreasing then, as the $n$-dimensional automorphism, trivially, $\mathcal{N}^{\varphi}$ is strictly decreasing. If $\mathcal{N}$ is continuous then, by Corollary \ref{corvarphi}, $\varphi$ and $\varphi^{-1}$ are continuous. Since the composition of the continuous function is continuous, then $\mathcal{N}^{\varphi}$ is continuous. Thus, if $\mathcal{N}$ is strict, then $\mathcal{N}^{\varphi}$ is also strict. Moreover, if $\mathcal{N}$ is an $n$-dimensional strong fuzzy negation then the following holds:
\begin{eqnarray*}
\mathcal{N}^{\varphi}(\mathcal{N}^{\varphi}(\textbf{x})) & = & \mathcal{N}^{\varphi}(\varphi^{-1}(\mathcal{N}(\varphi(\textbf{x})))) \\
& = &\varphi^{-1}(\mathcal{N}(\varphi(\varphi^{-1}(\mathcal{N}(\varphi(\textbf{x})))))) \\
& = & \varphi^{-1}(\mathcal{N}(\mathcal{N}(\varphi(\textbf{x}))))\\ & = & \varphi^{-1}(\varphi(\textbf{x})) \\ & = & \textbf{x}.
\end{eqnarray*}

$(\Leftarrow)$ Let $\mathcal{N}^{\varphi}$ be an $n$-dimensional (strict, strong) fuzzy negation. By the above proof, $(\mathcal{N}^{\varphi})^{\varphi^{-1}}$ also is an $n$-dimensional (strict, strong) fuzzy negation. Since $ \mathcal{N}=(\mathcal{N}^{\varphi})^{\varphi^{-1}}$, then $\mathcal{N}$ is an $n$-dimensional (strict, strong) fuzzy negation. 
\end{proof}\\

\begin{proposition}\label{pro-auto-ntilde}
Let $N_1\ldots N_n$ be fuzzy negations and $\psi$ be an automorphism. Then, the following holds:
\begin{center}
$\widetilde{N_1\ldots N_n}^{\widetilde{\psi}}=\widetilde{N_1^{\psi}\ldots N_n^{\psi}}$.
\end{center}  
\end{proposition}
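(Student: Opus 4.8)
The plan is to unfold both sides of the claimed identity using the explicit formulas for the $\widetilde{\,\cdot\,}$-constructions and the definition of the $\widetilde{\psi}$-conjugate, and to check that they agree at every projection coordinate. First I would fix an arbitrary $\mathbf{x}\in L_n([0,1])$ with components $x_1\leq\ldots\leq x_n$. By Theorem \ref{bedregal3.4} (Eq.~(\ref{widetildepsi})), $\widetilde{\psi}(\mathbf{x})=(\psi(x_1),\ldots,\psi(x_n))$, and by Proposition \ref{tildepsi}, $\widetilde{\psi}^{-1}=\widetilde{\psi^{-1}}$, so $\widetilde{\psi}^{-1}(\mathbf{y})=(\psi^{-1}(y_1),\ldots,\psi^{-1}(y_n))$. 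Hence, applying the definition $\mathcal{N}^{\varphi}(\mathbf{x})=\varphi^{-1}(\mathcal{N}(\varphi(\mathbf{x})))$ with $\varphi=\widetilde{\psi}$ and $\mathcal{N}=\widetilde{N_1\ldots N_n}$, I get
\[
\widetilde{N_1\ldots N_n}^{\widetilde{\psi}}(\mathbf{x})=\widetilde{\psi}^{-1}\big(\widetilde{N_1\ldots N_n}(\psi(x_1),\ldots,\psi(x_n))\big).
\]

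Next I would evaluate the inner term with Eq.~(\ref{tildeN}): since $\pi_i$ of $(\psi(x_1),\ldots,\psi(x_n))$ is $\psi(x_i)$, we have $\widetilde{N_1\ldots N_n}(\psi(x_1),\ldots,\psi(x_n))=\big(N_1(\psi(x_n)),\ldots,N_n(\psi(x_1))\big)$. Applying $\widetilde{\psi}^{-1}=\widetilde{\psi^{-1}}$ componentwise then yields
\[
\widetilde{N_1\ldots N_n}^{\widetilde{\psi}}(\mathbf{x})=\big(\psi^{-1}(N_1(\psi(x_n))),\ldots,\psi^{-1}(N_n(\psi(x_1)))\big)=\big(N_1^{\psi}(x_n),\ldots,N_n^{\psi}(x_1)\big),
\]
where the last equality is just the one-dimensional conjugate formula (\ref{eq:aut-imp}), $N_i^{\psi}(x)=\psi^{-1}(N_i(\psi(x)))$. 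On the other hand, $\widetilde{N_1^{\psi}\ldots N_n^{\psi}}(\mathbf{x})=\big(N_1^{\psi}(\pi_n(\mathbf{x})),\ldots,N_n^{\psi}(\pi_1(\mathbf{x}))\big)=\big(N_1^{\psi}(x_n),\ldots,N_n^{\psi}(x_1)\big)$ directly by Eq.~(\ref{tildeN}). The two expressions coincide, which gives the identity.

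One small point that should be addressed for the statement to be fully meaningful: the right-hand side $\widetilde{N_1^{\psi}\ldots N_n^{\psi}}$ is only a legitimate $n$-representable negation if $N_1^{\psi}\leq\ldots\leq N_n^{\psi}$. This follows because conjugation by a fixed automorphism preserves the pointwise order of fuzzy negations — if $N_i\leq N_{i+1}$ then $\psi^{-1}(N_i(\psi(x)))\leq\psi^{-1}(N_{i+1}(\psi(x)))$ since $\psi^{-1}$ is increasing — so I would include that one-line remark. Also each $N_i^{\psi}$ is itself a fuzzy negation (standard, e.g.\ from \cite{bedregal10}). Beyond that, the proof is a routine coordinatewise computation; I do not anticipate any real obstacle, the only mild subtlety being to keep the index reversal in Eq.~(\ref{tildeN}) ($\pi_i$ paired with argument $\pi_{n-i+1}$) consistent through both unfoldings.
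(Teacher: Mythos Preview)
Your proof is correct and follows essentially the same approach as the paper: a direct coordinatewise unfolding of the definitions using Eq.~(\ref{tildeN}), Eq.~(\ref{widetildepsi}), and Proposition~\ref{tildepsi}. The paper starts from $\widetilde{N_1^{\psi}\ldots N_n^{\psi}}(\mathbf{x})$ and rewrites it step by step into $\widetilde{N_1\ldots N_n}^{\widetilde{\psi}}(\mathbf{x})$, whereas you go in the opposite direction, but the chain of equalities is the same; your added remark that $N_1^{\psi}\leq\ldots\leq N_n^{\psi}$ (so that the right-hand side is well-defined) is a welcome point the paper leaves implicit.
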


\begin{proof} For all ${\bf x}\in L_n([0,1])$, it holds that:
\begin{eqnarray*}
\widetilde{N_1^{\psi}\ldots N_n^{\psi}}(\textbf{x}) & = & (N_1^{\psi}(\pi_n(\textbf{x})), \ldots, N_n^{\psi}(\pi_1(\textbf{x}))) \  \mbox{by Eq.(\ref{rhoN})} \\
                                                                              & = & (\psi^{-1}(N_1(\psi(\pi_n(\textbf{x})))), \ldots, \psi^{-1}( N_n(\psi(\pi_1(\textbf{x}))))) \ \mbox{by Eq.(\ref{widetildepsi})} \\
                                                                              & = & \widetilde{\psi^{-1}}(N_1(\psi(\pi_n(\textbf{x}))), \ldots,  N_n(\psi(\pi_1(\textbf{x})))) \ \mbox{by Eq.(\ref{tildeN})}\\
                                                                              & = & \widetilde{\psi^{-1}}(\widetilde{N_1, \ldots, N_n}(\psi(\pi_1(\textbf{x})), \ldots, \psi(\pi_n(\textbf{x})))) \ \mbox{by Eq.(\ref{widetildepsi})}\\
                                                                             & = & \widetilde{\psi^{-1}}(\widetilde{N_1, \ldots, N_n}(\widetilde{\psi}(\pi_1(\textbf{x}), \ldots, \pi_n(\textbf{x}))))  \ \mbox{by Eq.(\ref{psi_reverse})}\\
                                                                             & = & \widetilde{\psi}^{-1}(\widetilde{N_1, \ldots, N_n}(\widetilde{\psi}(\pi_1(\textbf{x}), \ldots, \pi_n(\textbf{x}))))  \ \ \mbox{by Prop. \ref{tildepsi}} \\
                                                                             & = & \widetilde{N_1\ldots N_n}^{\widetilde{\psi}}(\textbf{x}).                                                                           
\end{eqnarray*}
 Therefore, Proposition~\ref{pro-auto-ntilde} is verified.
\end{proof}

\begin{lemma}\cite[Corollary 3.1]{bedregal12}\label{corbedregal12}
A function $\mathcal{N}: L_n([0,1])\rightarrow L_n([0,1])$ is an $n$-dimensional strong fuzzy negation satisfying the property $\textbf{DP}$ iff there exists an $n$-dimensional automorphism $\varphi$ such that $\mathcal{N}=\mathcal{N}^{\varphi}_{S}$, where 
\begin{center}
$\mathcal{N}_S({\bf x})=(1-\pi_n({\bf x}), 1-\pi_{n-1}({\bf x}), \ldots, 1-\pi_1({\bf x}))$.
\end{center}
\end{lemma}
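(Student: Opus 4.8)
The plan is to prove the equivalence by chaining together results already established in the excerpt, since Lemma~\ref{corbedregal12} is essentially a reformulation of Theorem~\ref{N=N} combined with the fact that the standard strong negation $\mathcal{N}_S=\widetilde{N_S}$ on $L_n([0,1])$ plays the role of a ``canonical'' strong negation from which all others arise by conjugation with an $n$-dimensional automorphism.

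\medskip

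\noindent\textbf{Proof.}
$(\Leftarrow)$ Suppose $\mathcal{N}=\mathcal{N}_S^{\varphi}$ for some $\varphi\in Aut(L_n([0,1]))$. By Theorem~\ref{bedregal3.4} there is $\psi\in Aut([0,1])$ with $\varphi=\widetilde{\psi}$, and by definition $\mathcal{N}_S=\widetilde{N_S}$. Applying Proposition~\ref{pro-auto-ntilde} with $N_1=\ldots=N_n=N_S$ yields
\begin{equation*}
\mathcal{N}=\widetilde{N_S}^{\widetilde{\psi}}=\widetilde{N_S^{\psi}\ldots N_S^{\psi}}=\widetilde{N_S^{\psi}}.
\end{equation*}
Since $N_S$ is a strong fuzzy negation and conjugates of strong fuzzy negations by automorphisms of $[0,1]$ are again strong fuzzy negations (a standard fact, cf.~\cite{bedregal10}), $N:=N_S^{\psi}$ is a strong fuzzy negation. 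By Theorem~\ref{N=N}, $\mathcal{N}=\widetilde{N}$ is an $n$-dimensional strong fuzzy negation, and by Corollary~\ref{coro-3.3} it satisfies \textbf{DP}.

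\medskip

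\noindent $(\Rightarrow)$ Suppose $\mathcal{N}$ is an $n$-dimensional strong fuzzy negation satisfying \textbf{DP}. By Lemma~\ref{bedregal12} (equivalently, Theorem~\ref{N=N}) there exists a strong fuzzy negation $N$ on $[0,1]$ with $\mathcal{N}=\widetilde{N}$. Now I invoke the classical representation of strong fuzzy negations: every strong fuzzy negation $N$ is conjugate to $N_S$, i.e.\ there is $\psi\in Aut([0,1])$ with $N=N_S^{\psi}$ (this is the one-dimensional Trillas-type theorem; it follows from the results collected in~\cite{bedregal10,klement}). Setting $\varphi=\widetilde{\psi}\in Aut(L_n([0,1]))$ by Theorem~\ref{bedregal3.4}, Proposition~\ref{pro-auto-ntilde} gives
\begin{equation*}
\mathcal{N}_S^{\varphi}=\widetilde{N_S}^{\widetilde{\psi}}=\widetilde{N_S^{\psi}}=\widetilde{N}=\mathcal{N},
\end{equation*}
which is the desired conclusion.

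\medskip

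\noindent The only step requiring real input beyond the machinery already developed is the one-dimensional fact that every strong fuzzy negation is conjugate to $N_S$; everything else is bookkeeping with $\widetilde{(\cdot)}$, Proposition~\ref{pro-auto-ntilde}, and Theorem~\ref{N=N}. I expect the main obstacle to be purely expository: making sure the conjugation identity $\widetilde{N_S}^{\widetilde{\psi}}=\widetilde{N_S^{\psi}}$ is applied in the correct direction (using $N=N_2^{\rho^{-1}}$ whenever $N_2=N_1^{\rho}$), and confirming that the automorphism $\psi$ producing $N$ from $N_S$ genuinely lifts to the $n$-dimensional automorphism $\widetilde{\psi}$ via Theorem~\ref{bedregal3.4}. $\Box$
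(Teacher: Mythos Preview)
The paper does not actually prove Lemma~\ref{corbedregal12}: it is imported verbatim as Corollary~3.1 of \cite{bedregal12} and used as a black box in the one-line derivation of Theorem~\ref{theondsfn-nda}. So there is no in-paper argument to compare yours against.

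Your proof is correct and, in fact, supplies what the paper omits. You reduce the statement to Theorem~\ref{N=N} (equivalently Lemma~\ref{bedregal12}) together with the one-dimensional Trillas representation $N=N_S^{\psi}$, and then lift everything via Theorem~\ref{bedregal3.4} and Proposition~\ref{pro-auto-ntilde}. None of the ingredients you invoke depend on Lemma~\ref{corbedregal12}, so there is no circularity. The only external input is the classical Trillas theorem, which the paper itself acknowledges as the result being generalized (see the sentence preceding Theorem~\ref{theondsfn-nda} and the reference \cite{trillas79}). Your closing paragraph about the direction of the conjugation identity is a fair caveat but not a genuine obstacle: Proposition~\ref{pro-auto-ntilde} is stated as an equality, so it applies symmetrically.
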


The following theorem is a generalization of the Trillas theorem \cite{trillas79} and a generalization for the interval case given by Bedregal in \cite{bedregal10}.

\begin{theorem}\label{theondsfn-nda}
A function $\mathcal{N}: L_n([0,1])\rightarrow L_n([0,1])$ is an $n$-dimensional strong fuzzy negation iff there exists an $n$-dimensional automorphism $\varphi$ such that $\mathcal{N}=\mathcal{N}^{\varphi}_{S}$.
\end{theorem}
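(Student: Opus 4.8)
The plan is to reduce this theorem to results already established in the excerpt, essentially by chaining Theorem~\ref{N=N} with Lemma~\ref{corbedregal12}. First I would prove the $(\Leftarrow)$ direction: if $\mathcal{N}=\mathcal{N}_S^{\varphi}$ for some $n$-dimensional automorphism $\varphi$, then since $\mathcal{N}_S$ is plainly an $n$-dimensional strong fuzzy negation (it is antitone, swaps $/0/$ and $/1/$, and is clearly involutive by reversing and complementing coordinates), the conjugate $\mathcal{N}_S^{\varphi}$ is again an $n$-dimensional strong fuzzy negation by the proposition on conjugates of negations under $n$-dimensional automorphisms established just above. So this direction is immediate.

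For the $(\Rightarrow)$ direction, suppose $\mathcal{N}$ is an $n$-dimensional strong fuzzy negation. By Corollary~\ref{coro-3.3}, every $n$-dimensional strong fuzzy negation satisfies \textbf{DP}, hence $\mathcal{N}$ is an $n$-dimensional strong fuzzy negation satisfying \textbf{DP}. Now Lemma~\ref{corbedregal12} applies directly and yields an $n$-dimensional automorphism $\varphi$ with $\mathcal{N}=\mathcal{N}_S^{\varphi}$, which is exactly what we want. Concretely, one could also give the construction explicitly: by Theorem~\ref{N=N} there is a strong fuzzy negation $N$ on $[0,1]$ with $\mathcal{N}=\widetilde{N}$; by the classical Trillas theorem there is an automorphism $\psi\in Aut([0,1])$ with $N=N_S^{\psi}$; then Proposition~\ref{pro-auto-ntilde} gives $\widetilde{N}=\widetilde{N_S^{\psi}}=\widetilde{N_S}^{\widetilde{\psi}}$, and since $\widetilde{N_S}=\mathcal{N}_S$ and $\widetilde{\psi}\in Aut(L_n([0,1]))$ by Theorem~\ref{bedregal3.4}, we obtain $\mathcal{N}=\mathcal{N}_S^{\widetilde{\psi}}$ with $\varphi=\widetilde{\psi}$.

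The theorem is thus essentially a corollary; the real content was already absorbed into Corollary~\ref{coro-3.3} (which in turn rests on Theorem~\ref{th-/x/}, the statement that strong $n$-dimensional negations preserve non-degenerate elements). The only mild subtlety to check is that the definition of conjugate $\mathcal{N}^{\varphi}({\bf x})=\varphi^{-1}(\mathcal{N}(\varphi({\bf x})))$ used here matches the one underlying Lemma~\ref{corbedregal12}, which it does. I expect no genuine obstacle: the proof is a one-line appeal to Corollary~\ref{coro-3.3} and Lemma~\ref{corbedregal12} for $(\Rightarrow)$, and to the conjugation-preservation proposition for $(\Leftarrow)$. If a self-contained argument is preferred over citing Lemma~\ref{corbedregal12}, the explicit route through Theorem~\ref{N=N}, the Trillas theorem, Proposition~\ref{pro-auto-ntilde} and Theorem~\ref{bedregal3.4} is the way to spell it out, and its only nonroutine ingredient is again the \textbf{DP} property guaranteed by Corollary~\ref{coro-3.3}.
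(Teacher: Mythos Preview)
Your proposal is correct and matches the paper's own proof, which is literally the one-liner ``Straightforward from Corollary~\ref{coro-3.3} and Lemma~\ref{corbedregal12}''. Your additional explicit route via Theorem~\ref{N=N}, the Trillas theorem, Proposition~\ref{pro-auto-ntilde} and Theorem~\ref{bedregal3.4} is a fine unpacking but is not needed since Lemma~\ref{corbedregal12} is already an ``iff'' and hence also covers the $(\Leftarrow)$ direction.
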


\begin{proof}
Straightforward from Corollary \ref{coro-3.3} and Lemma \ref{corbedregal12}. 
\end{proof}

\begin{proposition}\label{n<nvarphi}
Let $\mathcal{N}$ be an $n$-dimensional strict (strong) fuzzy negation and the $n$-dimensional automorphism $\varphi({\bf x})={\bf x}^2$, i.e., $\varphi({\bf x})=((\pi_1({\bf x}))^2, \ldots,$ $ (\pi_n({\bf x}))^2)$. Then, $\mathcal{N}<\mathcal{N}^{\varphi}$ and $(\mathcal{N}^{\varphi})^{-1}<\mathcal{N}$.
\end{proposition}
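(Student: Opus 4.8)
The plan is to reduce the whole statement to the scalar inequality $t^{2}\le t\le\sqrt{t}$ on $[0,1]$. By Theorem~\ref{bedregal3.4} we may write $\varphi=\widetilde{\psi}$ with $\psi(t)=t^{2}$, and by Proposition~\ref{tildepsi} $\varphi^{-1}=\widetilde{\psi^{-1}}$ with $\psi^{-1}(t)=\sqrt{t}$; hence, applying Eq.~(\ref{widetildepsi}) and Eq.~(\ref{mono_proji}) componentwise, $\varphi({\bf x})\le{\bf x}\le\varphi^{-1}({\bf x})$ for every ${\bf x}\in L_{n}([0,1])$, and in each coordinate $i$ both inequalities are strict precisely when $\pi_{i}({\bf x})\in(0,1)$. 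Since a strong $n$-dimensional fuzzy negation is strict (Proposition~\ref{strongstrict}), it suffices to treat the case $\mathcal{N}$ strict; bijectivity of $\mathcal{N}$, which is what lets one speak of $(\mathcal{N}^{\varphi})^{-1}$, is granted by Lemma~\ref{bijective} in the strong case.

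First I would establish $\mathcal{N}\preceq\mathcal{N}^{\varphi}$. Fix ${\bf x}$. From $\varphi({\bf x})\le{\bf x}$ and \textbf{N2}, $\mathcal{N}(\varphi({\bf x}))\ge\mathcal{N}({\bf x})$; applying the isotone map $\varphi^{-1}$ (Corollary~\ref{corvarphi}), $\varphi^{-1}(\mathcal{N}(\varphi({\bf x})))\ge\varphi^{-1}(\mathcal{N}({\bf x}))$; and $\varphi^{-1}(\mathcal{N}({\bf x}))\ge\mathcal{N}({\bf x})$ by the sandwich above applied to the point $\mathcal{N}({\bf x})$. Concatenating, $\mathcal{N}^{\varphi}({\bf x})=\varphi^{-1}(\mathcal{N}(\varphi({\bf x})))\ge\mathcal{N}({\bf x})$, i.e.\ $\mathcal{N}\preceq\mathcal{N}^{\varphi}$ by Eq.~(\ref{eq_tilde_neg}). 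To upgrade this to $\mathcal{N}<\mathcal{N}^{\varphi}$ I would exhibit one point of strict inequality: since $\mathcal{N}$ is injective with $\mathcal{N}(/0/)=/1/$ and $\mathcal{N}(/1/)=/0/$, there is some $t\in(0,1)$ for which $\mathcal{N}(/t/)$ has a coordinate in the open interval $(0,1)$ (the tuples all of whose coordinates lie in $\{0,1\}$ are reached for at most finitely many $t$); for such a $t$ the sandwich gives $\varphi^{-1}(\mathcal{N}(/t/))>\mathcal{N}(/t/)$ strictly in that coordinate, so $\mathcal{N}^{\varphi}(/t/)\ge\varphi^{-1}(\mathcal{N}(/t/))>\mathcal{N}(/t/)$ and $\mathcal{N}\neq\mathcal{N}^{\varphi}$.

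For the second inequality I would run the identical chain with the roles of $\varphi$ and $\varphi^{-1}$ interchanged, i.e.\ conjugating now through $\varphi^{-1}=\widetilde{\psi^{-1}}$: since $\varphi^{-1}({\bf x})\ge{\bf x}$ and $\varphi({\bf y})\le{\bf y}$ now play the parts of $\varphi({\bf x})\le{\bf x}$ and $\varphi^{-1}({\bf y})\ge{\bf y}$, every inequality in the chain reverses, yielding $(\mathcal{N}^{\varphi})^{-1}({\bf x})\le\mathcal{N}({\bf x})$ for all ${\bf x}$, with strict inequality for a point ${\bf x}$ of the same kind as in the previous paragraph; hence $(\mathcal{N}^{\varphi})^{-1}<\mathcal{N}$. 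One records on the way that $(\mathcal{N}^{\varphi})^{-1}$ is again a strict (resp.\ strong) $n$-dimensional fuzzy negation, so the order symbol $<$ is meaningful.

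I expect the only genuinely non-routine step to be the passage from $\preceq$ to the strict order $<$: the monotonicity chain by itself delivers only the non-strict inequality, and one must actually invoke strictness of $\mathcal{N}$ --- through its behaviour on the coordinates lying in $(0,1)$, the range in which $t^{2}<t<\sqrt{t}$ --- to produce a witness of strict inequality. Everything else is routine bookkeeping of antitone and isotone compositions, resting on Theorem~\ref{bedregal3.4} and Proposition~\ref{tildepsi}.
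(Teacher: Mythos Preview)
Your approach is essentially the paper's: the same two-step chain $\mathcal{N}({\bf x})\le\varphi^{-1}(\mathcal{N}({\bf x}))\le\varphi^{-1}(\mathcal{N}(\varphi({\bf x})))=\mathcal{N}^{\varphi}({\bf x})$, driven by the scalar sandwich $t^{2}\le t\le\sqrt{t}$, with the second claim obtained by swapping $\varphi$ and $\varphi^{-1}$. You are in fact more careful than the paper about strictness: the paper asserts ${\bf x}^{2}<{\bf x}$ pointwise on $L_{n}([0,1])\setminus\{/0/,/1/\}$, which fails at points such as $(0,1)\in L_{2}([0,1])$; your device of producing a single witness $/t/$ with a coordinate of $\mathcal{N}(/t/)$ in $(0,1)$ is the honest way to pass from $\preceq$ to $<$.

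One caution on notation. Your aside that bijectivity of $\mathcal{N}$ is needed ``to speak of $(\mathcal{N}^{\varphi})^{-1}$'' suggests you momentarily read this symbol as the functional inverse of $\mathcal{N}^{\varphi}$, whereas what you then (correctly) compute is the conjugate $\mathcal{N}^{\varphi^{-1}}=\varphi\circ\mathcal{N}\circ\varphi^{-1}$. These are different objects: in the strong case $\mathcal{N}^{\varphi}$ is involutive, so its functional inverse is $\mathcal{N}^{\varphi}$ itself, which you have just shown is $>\mathcal{N}$, not $<\mathcal{N}$. Only the reading $(\mathcal{N}^{\varphi})^{-1}:=\mathcal{N}^{\varphi^{-1}}$ is consistent with the statement and with the corollary that follows it, and that is precisely what your chain with $\varphi$ and $\varphi^{-1}$ interchanged establishes; you should drop the remark about bijectivity.
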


\begin{proof}
Clearly, $\varphi^{-1}(\textbf{x})=\sqrt{\textbf{x}}$, i.e., $\varphi^{-1}(\textbf{x})=(\sqrt{\pi_1(\textbf{x})}, \ldots, \sqrt{\pi_n(\textbf{x})})$. 
Since $\textbf{x}^2<\textbf{x}$ for each $\textbf{x}\in L_n([0,1]) - \{/0/, /1/\}$, then because $\mathcal{N}$ is strict we have that 
$\mathcal{N}(\textbf{x})<\mathcal{N}(\textbf{x}^2)$. So, 
$\varphi^{-1}(\mathcal{N}(\textbf{x}))<\varphi^{-1}(\mathcal{N}(\varphi(\textbf{x})))=\mathcal{N}^\varphi(\textbf{x})$. But, since $\textbf{x}<\sqrt{\textbf{x}}$ 
for each $\textbf{x}\in L_n([0,1])-\{/0/, /1/\}$, then $\mathcal{N}(\textbf{x})<\sqrt{\mathcal{N}(\textbf{x})}$. Therefore, 
$\mathcal{N}(\textbf{x})<\mathcal{N}^\varphi(\textbf{x})$. Analogously we proof that $(\mathcal{N}^{\varphi})^{-1}<\mathcal{N}$.
\end{proof}

\begin{corollary}
There exists neither a lesser nor greater $n$-dimensional strict (strong) fuzzy negation.
\end{corollary}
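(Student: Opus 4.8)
The plan is to derive this as an immediate consequence of Proposition~\ref{n<nvarphi}. The statement asserts that the poset of $n$-dimensional strict (respectively strong) fuzzy negations, ordered by $\preceq$, has no minimum and no maximum element. I would argue by contradiction on each half separately. Suppose $\mathcal{N}$ were a greatest $n$-dimensional strict (strong) fuzzy negation, meaning $\mathcal{M}\preceq\mathcal{N}$ for every $n$-dimensional strict (strong) fuzzy negation $\mathcal{M}$. Apply Proposition~\ref{n<nvarphi} with the $n$-dimensional automorphism $\varphi(\mathbf{x})=\mathbf{x}^2$: since $\mathcal{N}^{\varphi}$ is again an $n$-dimensional strict (strong) fuzzy negation (this is exactly the conjugation proposition proved earlier, using that $\varphi=\widetilde{\psi}$ for $\psi(x)=x^2\in Aut([0,1])$), and since $\mathcal{N}<\mathcal{N}^{\varphi}$, i.e. $\mathcal{N}(\mathbf{x})\le\mathcal{N}^{\varphi}(\mathbf{x})$ for all $\mathbf{x}$ with strict inequality on $L_n([0,1])-\{/0/,/1/\}$, we obtain a strict (strong) fuzzy negation $\mathcal{N}^{\varphi}$ strictly above $\mathcal{N}$ in the $\preceq$ order. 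This contradicts maximality of $\mathcal{N}$.

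For the dual half, suppose $\mathcal{N}$ were a least $n$-dimensional strict (strong) fuzzy negation. Proposition~\ref{n<nvarphi} also furnishes $(\mathcal{N}^{\varphi})^{-1}<\mathcal{N}$; here I would note that $(\mathcal{N}^{\varphi})^{-1}$ is itself an $n$-dimensional strict (strong) fuzzy negation, because any strict fuzzy negation is a bijection whose inverse is again a strict fuzzy negation (and a strong fuzzy negation is its own inverse, so in the strong case one can simply take $\mathcal{N}^{\varphi}$ itself, which by the same proposition satisfies $\mathcal{N}^{\varphi}<\mathcal{N}$ after applying the first inequality to $\mathcal{N}^{\varphi}$ in place of $\mathcal{N}$, or directly read off $(\mathcal{N}^{\varphi})^{-1}=\mathcal{N}^{\varphi}$). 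Thus there is a strict (strong) fuzzy negation lying strictly below $\mathcal{N}$ in $\preceq$, contradicting minimality. Combining the two contradictions yields the corollary.

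The only real point requiring care is the observation that conjugation by $\varphi$ and, in the strict case, taking the inverse, both stay inside the class of $n$-dimensional strict (strong) fuzzy negations; both facts are already available from the conjugation proposition above and from the definition of strictness (continuity plus strict monotonicity, hence bijectivity with strictly monotone continuous inverse). I expect no genuine obstacle: the heavy lifting is entirely in Proposition~\ref{n<nvarphi}, and the corollary is a two-line contradiction argument once one records that the witnessing negations $\mathcal{N}^{\varphi}$ and $(\mathcal{N}^{\varphi})^{-1}$ belong to the relevant class. I would also remark that one should first check the class is nonempty — e.g. $\widetilde{N_S}$ is an $n$-dimensional strong (hence strict) fuzzy negation — so that the phrases ``lesser'' and ``greater'' are not vacuously satisfied.
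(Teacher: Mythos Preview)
Your approach is exactly the paper's: the corollary is recorded as ``Straightforward from Proposition~\ref{n<nvarphi}'', and your contradiction argument just spells this out.

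One small slip in your parenthetical for the strong case deserves a fix. If $(\mathcal{N}^{\varphi})^{-1}$ denotes the functional inverse and $\mathcal{N}$ is strong, then indeed $(\mathcal{N}^{\varphi})^{-1}=\mathcal{N}^{\varphi}$; but combining this with the other inequality $\mathcal{N}<\mathcal{N}^{\varphi}$ from the same proposition would give $\mathcal{N}^{\varphi}<\mathcal{N}<\mathcal{N}^{\varphi}$, which is absurd. So you cannot ``directly read off'' $\mathcal{N}^{\varphi}<\mathcal{N}$ this way. The clean route for the ``no least'' half is the one you essentially already have available: apply the \emph{first} inequality of Proposition~\ref{n<nvarphi} to the negation $\mathcal{M}=\mathcal{N}^{\varphi^{-1}}$ (conjugation by $\varphi^{-1}(\mathbf{x})=\sqrt{\mathbf{x}}$), obtaining $\mathcal{M}<\mathcal{M}^{\varphi}=\mathcal{N}$ with $\mathcal{M}$ strict (strong). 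This avoids any appeal to functional inverses of strict $n$-dimensional negations, whose bijectivity on the partially ordered set $L_n([0,1])$ is not established in the paper.
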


\begin{proof}
Straightforward from Proposition \ref{n<nvarphi}.
\end{proof}

\subsection{$\mathcal{N}$-Preserving $n$-dimensional automorphisms}

Let $\mathcal{N}$ be an $n$-dimensional fuzzy negation. An $n$-dimensional automorphism $\varphi$ is \textbf{$\mathcal{N}$-preserving $n$-dimensional automorphism} if, for each $\textbf{x}\in L_n([0,1])$, the following holds
\begin{eqnarray}\label{Npreservingfor}
\varphi(\mathcal{N}(\textbf{x}))=\mathcal{N}(\varphi(\textbf{x})).
\end{eqnarray}

The following theorem shows us that $\mathcal{N}$-Preserving $n$-dimensional automorphisms are strongly related with the notion of $N$-preserving automorphisms.\\

\begin{theorem}\label{Npreserving}
Let $\varphi$ be an $n$-dimensional automorphism, $\mathcal{N}$ be a representable $n$-dimensional fuzzy negation, $\psi$ be the automorphism such that $\varphi=\widetilde{\psi}$ and $N_1, \ldots, N_n$ be fuzzy negations such that $\mathcal{N}=\widetilde{N_1\ldots N_n}$. Then, $\varphi$ is a $\mathcal{N}$-preserving $n$-dimensional automorphism iff $\psi$ is an $N_i$-preserving  automorphism, for each $i=1, \ldots, n$.
\end{theorem}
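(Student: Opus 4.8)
The plan is to reduce everything to projections and then invoke the characterization of $N$-preserving automorphisms via the already-established formula $\widetilde{N_1\ldots N_n}^{\widetilde{\psi}}=\widetilde{N_1^{\psi}\ldots N_n^{\psi}}$ (Proposition~\ref{pro-auto-ntilde}) together with the fact that $\widetilde{\cdot}$ is injective on tuples of negations. First I would observe that Eq.~(\ref{Npreservingfor}), namely $\varphi(\mathcal{N}(\textbf{x}))=\mathcal{N}(\varphi(\textbf{x}))$ for all $\textbf{x}$, is equivalent to $\mathcal{N}^{\varphi}=\mathcal{N}$: since $\varphi=\widetilde{\psi}$ is bijective with inverse $\widetilde{\psi^{-1}}$ (Proposition~\ref{tildepsi}), applying $\varphi^{-1}$ to both sides of Eq.~(\ref{Npreservingfor}) gives $\varphi^{-1}(\mathcal{N}(\varphi(\textbf{x})))=\mathcal{N}(\textbf{x})$, i.e.\ $\mathcal{N}^{\varphi}(\textbf{x})=\mathcal{N}(\textbf{x})$, and the converse is the same computation run backwards. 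So $\varphi$ is $\mathcal{N}$-preserving iff $\mathcal{N}^{\widetilde{\psi}}=\mathcal{N}$.

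Next I would use representability: $\mathcal{N}=\widetilde{N_1\ldots N_n}$ with $N_1\leq\ldots\leq N_n$, and by Proposition~\ref{pro-auto-ntilde} we have $\mathcal{N}^{\widetilde{\psi}}=\widetilde{N_1\ldots N_n}^{\widetilde{\psi}}=\widetilde{N_1^{\psi}\ldots N_n^{\psi}}$. Hence $\varphi$ is $\mathcal{N}$-preserving iff $\widetilde{N_1^{\psi}\ldots N_n^{\psi}}=\widetilde{N_1\ldots N_n}$. It then remains to show that this equality of $n$-dimensional negations is equivalent to the componentwise equalities $N_i^{\psi}=N_i$ for all $i=1,\ldots,n$, which by definition of $N_i$-preserving automorphism (Eq.~(\ref{preserving auto})) is exactly the assertion that $\psi$ is $N_i$-preserving for each $i$. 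The direction ``componentwise $\Rightarrow$ global'' is immediate from Eq.~(\ref{tildeN}). For the converse, I would evaluate both $n$-dimensional negations at degenerate elements: for $x\in[0,1]$, Eq.~(\ref{tildeN}) gives $\widetilde{N_1^{\psi}\ldots N_n^{\psi}}(/x/)=(N_1^{\psi}(x),\ldots,N_n^{\psi}(x))$ and similarly $\widetilde{N_1\ldots N_n}(/x/)=(N_1(x),\ldots,N_n(x))$; equality of these tuples for every $x$ forces $N_i^{\psi}(x)=N_i(x)$ for every $i$ and every $x$, which is what we want. (Equivalently, one invokes Proposition~\ref{NN}: $N_i(x)=\pi_i(\mathcal{N}(/x/))$ recovers each component negation from $\mathcal{N}$, so two representable negations coincide iff their component negations coincide.)

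The only mild subtlety — and the step I would treat most carefully — is making sure the bookkeeping between the index $i$ and the reversed index $n-i+1$ in Eq.~(\ref{tildeN}) does not cause a mismatch: in $\widetilde{N_1\ldots N_n}(\textbf{x})=(N_1(\pi_n(\textbf{x})),\ldots,N_n(\pi_1(\textbf{x})))$ the negation $N_i$ sits in the $i$-th output slot but is fed $\pi_{n-i+1}(\textbf{x})$. Since I only ever evaluate at degenerate elements $/x/$, where $\pi_j(/x/)=x$ for all $j$, this reversal is harmless and the $i$-th slot of $\widetilde{N_1\ldots N_n}(/x/)$ is simply $N_i(x)$; so the component-extraction argument goes through cleanly. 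With that observation the chain of equivalences $\varphi$ is $\mathcal{N}$-preserving $\iff$ $\mathcal{N}^{\widetilde{\psi}}=\mathcal{N}$ $\iff$ $\widetilde{N_1^{\psi}\ldots N_n^{\psi}}=\widetilde{N_1\ldots N_n}$ $\iff$ $N_i^{\psi}=N_i$ for all $i$ $\iff$ $\psi$ is $N_i$-preserving for all $i$ completes the proof.
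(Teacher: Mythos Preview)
Your proof is correct but takes a genuinely different route from the paper. The paper argues each direction by direct componentwise computation: for $(\Rightarrow)$ it fixes $x\in[0,1]$ and chases $\psi(N_i(x))$ through Eqs.~(\ref{formN}), (\ref{widetildepsi}), (\ref{Npreservingfor}) to reach $N_i(\psi(x))$; for $(\Leftarrow)$ it fixes $\mathbf{x}\in L_n([0,1])$ and expands $\widetilde{\psi}(\mathcal{N}(\mathbf{x}))$ coordinate by coordinate via Eqs.~(\ref{tildeN}) and (\ref{widetildepsi}), swaps $\psi$ past each $N_i$, and reassembles. You instead recast the $\mathcal{N}$-preserving condition as the fixed-point equation $\mathcal{N}^{\varphi}=\mathcal{N}$, invoke Proposition~\ref{pro-auto-ntilde} to identify $\mathcal{N}^{\widetilde{\psi}}$ with $\widetilde{N_1^{\psi}\ldots N_n^{\psi}}$, and then reduce to the injectivity of $\widetilde{\,\cdot\,}$ on tuples of negations by evaluating at degenerate elements. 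Your approach is more conceptual and reuses Proposition~\ref{pro-auto-ntilde} as a black box, which shortens the argument and makes the equivalence chain transparent; the paper's approach is more self-contained and avoids the detour through conjugates altogether. Both ultimately hinge on the same mechanism---evaluating at $/x/$ to recover the $N_i$---but you package it as an injectivity statement while the paper unfolds it inline via Eq.~(\ref{formN}).
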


\begin{proof}
($\Rightarrow$) Let $x\in [0,1]$, then
\begin{eqnarray*}
\psi(N_i(x)) & = & \psi(\pi_i(\mathcal{N}(/x/)))  \mbox{ by Eq.(\ref{formN})}\\
                  & = &\pi_i(\widetilde{\psi}(\mathcal{N}(/x/)))   \mbox{ by Eq.(\ref{widetildepsi})}\\
                  & = &\pi_i(\mathcal{N}(\widetilde{\psi}(/x/)))   \mbox{ by Eq.(\ref{Npreservingfor})}\\
                  & = &\pi_i(\mathcal{N}(/\psi(x)/))   \mbox{ by Eq.(\ref{widetildepsi})}\\
                  & = & N_i(\psi(x))  \mbox{ by Eq.(\ref{formN})}
\end{eqnarray*}

($\Leftarrow$) Let $\textbf{x}\in L_n([0,1])$, then
\begin{eqnarray*}
\widetilde{\psi}(\mathcal{N}(\textbf{x})) & = & \widetilde{\psi}(\widetilde{N_1\ldots N_n}(\textbf{x})) \\
                                                              & = & \widetilde{\psi}(N_1(\pi_n(\textbf{x})), \ldots,  N_n(\pi_1(\textbf{x}))) \mbox{ by Eq.(\ref{tildeN})}\\
                                                              & = & (\psi(N_1(\pi_n(\textbf{x}))), \ldots, \psi( N_n(\pi_1(\textbf{x})))) \mbox{ by Eq.(\ref{widetildepsi})}\\
                                                              & = & (N_1(\psi(\pi_n(\textbf{x}))), \ldots,  N_n(\psi(\pi_1(\textbf{x})))) \mbox{ by Eq.(\ref{Npreservingfor})}\\
                                                              & = & \widetilde{N_1 \ldots  N_n}(\psi(\pi_1(\textbf{x})),\ldots,\psi(\pi_n(\textbf{x}))) \mbox{ by Eq.(\ref{tildeN})}\\
                                                              & = & \widetilde{N_1 \ldots  N_n}(\widetilde{\psi}(\textbf{x}))\mbox{ by Eq.(\ref{widetildepsi})}\\
                                                              & = & \mathcal{N}(\widetilde{\psi}(\textbf{x}))
\end{eqnarray*}
Therefore, Proposition~\ref{Npreserving} holds.
\end{proof}

The following theorem is an $n$-dimensional version of Proposition \ref{bedregalprop2.6} which extends \cite[Proposition 4.2]{navara99} and \cite[Proposition 7.5]{bedregal10} for interval case. It provides an expression for all  $\mathcal{N}$-preserving $n$-dimensional automorphisms in $L_n([0,1])$.

\begin{theorem}\label{theoremvarphiN}
Let $\mathcal{N}$ be an $n$-dimensional strong fuzzy negation with $/e/$ as the degenerate equilibrium point and $\varphi$ be an $n$-dimensional automorphism on  $L_n([0,e])=\{{\bf x}\in L_n([0,1]): \pi_n({\bf x})\leq e\}$.\footnote{All definitions and results described in the beginning of this section (until Proposition 5.1) can be adapted for $L_n([0,e])$.} Then, $\varphi^{\mathcal{N}}: L_n([0,1])\rightarrow L_n([0,1])$ defined by

\begin{eqnarray}\label{varphiN}
\varphi^{\mathcal{N}}({\bf x}){=}\left\{
\begin{array}{ll}
\varphi({\bf x}) ~ \hspace{1.15cm}  \mbox{if} \ {\bf x}\leq /e/ \\
\mathcal{N}(\varphi(\mathcal{N}({\bf x})))  ~\mbox{if} \  {\bf x}> /e/ \\ 
(\pi_1(\varphi({\bf x})){,} \ldots{,} \pi_i(\varphi({\bf x})){,}   \pi_{i+1}(\mathcal{N}(\varphi(\mathcal{N}({\bf x})))){,}\ldots{,}  \pi_{n}(\mathcal{N}(\varphi(\mathcal{N}({\bf x}))))) ~  \mbox{if} \ \pi_i({\bf x})\leq e<\pi_{i+1}({\bf x})
\end{array}
\right.
\end{eqnarray}
is an $\mathcal{N}$-preserving $n$-dimensional automorphism. 
\end{theorem}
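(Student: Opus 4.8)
The plan is to verify two things: first, that $\varphi^{\mathcal{N}}$ as defined by Eq.~(\ref{varphiN}) is genuinely an $n$-dimensional automorphism (bijective, order-preserving), and second, that it satisfies the preserving identity $\varphi^{\mathcal{N}}(\mathcal{N}(\mathbf{x}))=\mathcal{N}(\varphi^{\mathcal{N}}(\mathbf{x}))$ of Eq.~(\ref{Npreservingfor}). Since $\mathcal{N}$ is an $n$-dimensional strong fuzzy negation, by Theorem~\ref{N=N} there is a strong fuzzy negation $N$ with $\mathcal{N}=\widetilde{N}$, and its degenerate equilibrium point is $/e/$ where $e$ is the (unique) equilibrium point of $N$ (Corollary~\ref{eqpoint1}). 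Likewise, by the footnoted $n$-dimensional version of Theorem~\ref{bedregal3.4}, the automorphism $\varphi$ on $L_n([0,e])$ has the form $\varphi=\widetilde{\psi}$ for an automorphism $\psi$ of $[0,e]$. I would then extend $\psi$ to an $N$-preserving automorphism $\psi^{N}$ of $[0,1]$ via Proposition~\ref{bedregalprop2.6}, and show that the three-case formula for $\varphi^{\mathcal{N}}$ coincides componentwise with $\widetilde{\psi^{N}}$.

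First I would unpack the three cases of Eq.~(\ref{varphiN}) in coordinates. If $\mathbf{x}\le /e/$, i.e.\ $\pi_n(\mathbf{x})\le e$, then every coordinate $\pi_k(\mathbf{x})\le e$, so $\psi^{N}(\pi_k(\mathbf{x}))=\psi(\pi_k(\mathbf{x}))=\pi_k(\varphi(\mathbf{x}))$; hence $\varphi^{\mathcal{N}}(\mathbf{x})=\widetilde{\psi^{N}}(\mathbf{x})$. If $\mathbf{x}> /e/$, i.e.\ $\pi_1(\mathbf{x})>e$, then every coordinate exceeds $e$; using $\mathcal{N}=\widetilde{N}$ and Eq.~(\ref{tildeN}) one computes $\pi_k(\mathcal{N}(\varphi(\mathcal{N}(\mathbf{x}))))=N(\psi(N(\pi_k(\mathbf{x}))))=(N\circ\psi\circ N)(\pi_k(\mathbf{x}))=\psi^{N}(\pi_k(\mathbf{x}))$, since $\pi_k(\mathbf{x})>e$; again $\varphi^{\mathcal{N}}(\mathbf{x})=\widetilde{\psi^{N}}(\mathbf{x})$. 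In the mixed case $\pi_i(\mathbf{x})\le e<\pi_{i+1}(\mathbf{x})$, the first $i$ coordinates are handled by the $\varphi$-branch and the last $n-i$ by the $\mathcal{N}\circ\varphi\circ\mathcal{N}$-branch, and the same two computations show each coordinate equals $\psi^{N}(\pi_k(\mathbf{x}))$. So in all cases $\varphi^{\mathcal{N}}=\widetilde{\psi^{N}}$.

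Once that identification is established, the conclusion is immediate: $\psi^{N}\in Aut([0,1])$ by Proposition~\ref{bedregalprop2.6}, so $\widetilde{\psi^{N}}\in Aut(L_n([0,1]))$ by Theorem~\ref{bedregal3.4}; and since $\psi^{N}$ is $N$-preserving, Theorem~\ref{Npreserving} (with all $N_i=N$) gives that $\widetilde{\psi^{N}}$ is an $\mathcal{N}$-preserving $n$-dimensional automorphism. Therefore $\varphi^{\mathcal{N}}$ is an $\mathcal{N}$-preserving $n$-dimensional automorphism.

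The main obstacle I anticipate is the mixed case: one must check that the piecewise definition actually lands in $L_n([0,1])$, i.e.\ that the output tuple is nondecreasing across the "seam" between coordinate $i$ (coming from $\varphi$) and coordinate $i+1$ (coming from $\mathcal{N}\circ\varphi\circ\mathcal{N}$). This follows because $\pi_i(\mathbf{x})\le e<\pi_{i+1}(\mathbf{x})$ forces $\pi_i(\varphi^{\mathcal{N}}(\mathbf{x}))=\psi^{N}(\pi_i(\mathbf{x}))\le\psi^{N}(e)=e<\psi^{N}(\pi_{i+1}(\mathbf{x}))=\pi_{i+1}(\varphi^{\mathcal{N}}(\mathbf{x}))$, using that $\psi^{N}$ fixes $e$ (it agrees with $\psi$ there and $\psi\in Aut([0,e])$) and is strictly increasing; but one should also note well-definedness, namely that when $\mathbf{x}$ has several coordinates equal to $e$ the three cases still give consistent values, which again reduces to $\psi^{N}(e)=e$. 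The rest is the routine componentwise bookkeeping sketched above.
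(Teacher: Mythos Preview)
Your proposal is correct and takes a genuinely different route from the paper's own proof. The paper first writes $\varphi=\widetilde{\psi}$ and $\mathcal{N}=\widetilde{N}$ (as you do), but then verifies the identity $\varphi^{\mathcal{N}}(\mathcal{N}(\mathbf{x}))=\mathcal{N}(\varphi^{\mathcal{N}}(\mathbf{x}))$ by a lengthy direct case analysis on the position of $\mathbf{x}$ relative to $/e/$, expanding both sides in coordinates in each case; it never explicitly checks that $\varphi^{\mathcal{N}}$ is an automorphism, and it appends a second part (not announced in the statement) showing that \emph{every} $\mathcal{N}$-preserving $n$-dimensional automorphism arises from Eq.~(\ref{varphiN}). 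Your argument instead establishes the single identity $\varphi^{\mathcal{N}}=\widetilde{\psi^{N}}$ and then invokes Proposition~\ref{bedregalprop2.6}, Theorem~\ref{bedregal3.4} and Theorem~\ref{Npreserving} to conclude both that $\varphi^{\mathcal{N}}$ is an $n$-dimensional automorphism and that it is $\mathcal{N}$-preserving. This is shorter, more modular, and has the advantage of making the automorphism property of $\varphi^{\mathcal{N}}$ explicit rather than implicit; the paper's approach is more self-contained (it does not rely on Theorem~\ref{Npreserving}) and, via its second half, actually yields the stronger characterization that Eq.~(\ref{varphiN}) describes \emph{all} $\mathcal{N}$-preserving $n$-dimensional automorphisms. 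Your remark about well-definedness at the seam and at coordinates equal to $e$ is apt; the paper handles this implicitly through its coordinatewise expansion in Eq.~(\ref{NpsiN}), which is exactly your computation $\pi_k(\varphi^{\mathcal{N}}(\mathbf{x}))=\psi^{N}(\pi_k(\mathbf{x}))$ in disguise.
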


\begin{proof}
By Theorem~\ref{bedregal3.4}, there exists an automorphism $\psi$ such that  $\varphi=\widetilde{\psi}$. Analogously, by Theorem~\ref{N=N}, there exists a strong fuzzy negation $N$ such that  $\mathcal{N}=\widetilde{N}$. Thus, it holds that

If $\textbf{x}=/e/$ and since $/e/=\mathcal{N}(/e/)$, then $\mathcal{N}(/e/)=\mathcal{N}(\textbf{x})$. So,
\begin{eqnarray*}
\varphi^{\mathcal{N}}(\mathcal{N}(\textbf{x})) & = & \varphi(\mathcal{N}(\textbf{x})) \mbox{ since $\mathcal{N}(\textbf{x})= /e/$} \\
                                                                        & = & \mathcal{N}(\varphi(\textbf{x}) \mbox{ by Eq.~(\ref{Npreservingfor})} \\
                                                                        & = & \mathcal{N}(\varphi^{\mathcal{N}}({\bf x})) \mbox{ since $\textbf{x}=/e/$}
\end{eqnarray*}
If $\textbf{x}</e/$, then since $\mathcal{N}$ is strict, $/e/=\mathcal{N}(/e/)<\mathcal{N}(\textbf{x})$ and so,
\begin{eqnarray*}
\varphi^{\mathcal{N}}(\mathcal{N}(\textbf{x})) & = & \mathcal{N}(\varphi(\mathcal{N}(\mathcal{N}(\textbf{x})))) \mbox{ since $\mathcal{N}(\textbf{x})> /e/$} \\
                                                                        & = & \mathcal{N}(\varphi(\textbf{x})) \mbox{ since $\mathcal{N}$ is strong} \\
                                                                        & = & \mathcal{N}(\varphi^{\mathcal{N}}({\bf x})) \mbox{ since $\textbf{x}</e/$}
\end{eqnarray*}
If $\textbf{x}>/e/$ then since $\mathcal{N}$ is strict, $\mathcal{N}(\textbf{x})</e/$ implying new  results as follows
\begin{eqnarray*}
\varphi^{\mathcal{N}}(\mathcal{N}(\textbf{x})) & = & \varphi(\mathcal{N}(\textbf{x})) = \mathcal{N}(\mathcal{N}(\varphi(\mathcal{N}(\textbf{x})))) \  \mbox{since $\mathcal{N}$ is strong}.\\
 & = & \mathcal{N}(\varphi^{\mathcal{N}}({\bf x})) \ \mbox{ since $\textbf{x}>/e/$}
\end{eqnarray*}
If $\pi_i({\bf x})< e<\pi_{i+1}({\bf x})$ then $N(\pi_{i+1}({\bf x}))< N(e)< N(\pi_{i}({\bf x}))$ and   by Corollary \ref{eqpoint1}, $N(e)=e$.

In addition,
\begin{eqnarray}\label{NpsiN}
&   (\pi_1(\varphi({\bf x})){,} \ldots{,} \pi_i(\varphi({\bf x})){,} \pi_{i+1}(\mathcal{N}(\varphi(\mathcal{N}({\bf x})))){,}\ldots{,} \pi_{n}(\mathcal{N}(\varphi(\mathcal{N}({\bf x})))))  \\  
& =   (\psi(\pi_1({\bf x})), \ldots, \psi(\pi_i({\bf x})), N(\pi_{n-i}(\varphi(\mathcal{N}({\bf x})))),\ldots{,} N(\pi_{1}(\varphi(\mathcal{N}({\bf x})))))  \nonumber \\
& =   (\psi(\pi_1({\bf x})), \ldots, \psi(\pi_i({\bf x})),  N(\psi(\pi_{n-i}(\mathcal{N}({\bf x})))),\ldots{,} N(\psi(\pi_{1}(\mathcal{N}({\bf x})))))  \nonumber \\
& =   (\psi(\pi_1({\bf x})), \ldots, \psi(\pi_i({\bf x})),  N(\psi(N(\pi_{i+1}({\bf x})))){,}\ldots, N(\psi(N(\pi_{n}({\bf x}))))). \nonumber
\end{eqnarray}

So, the following holds 
\begin{eqnarray*}
& &\varphi^{\mathcal{N}}(\mathcal{N}(\textbf{x}))\\
& = & (\psi(\pi_1(\mathcal{N}(\textbf{x}))), \ldots, \psi(\pi_{n-i}(\mathcal{N}(\textbf{x}))),  N(\psi(N(\pi_{n-i+1}(\mathcal{N}(\textbf{x}))))), \ldots,  N(\psi(N(\pi_{n}(\mathcal{N}(\textbf{x})))))) \\&&  \mbox{ by Eq.(\ref{NpsiN})} \\
 & = & (\psi(N(\pi_n(\textbf{x}))), \ldots, \psi(N(\pi_{i+1}(\textbf{x}))),  N(\psi(N(N(\pi_{i}(\textbf{x}))))),\ldots, N(\psi(N(N(\pi_1(\textbf{x})))))) \\ 
& = & (N(N(\psi(N(\pi_n(\textbf{x}))))), \ldots, N(N(\psi(N(\pi_{i+1}(\textbf{x}))))), N(\psi(\pi_{i}(\textbf{x}))),\ldots, N(\psi(\pi_1(\textbf{x})))\\ && \mbox{ since $N$ is strong} \\
& = & \widetilde{N}(\psi(\pi_1(\textbf{x})), \ldots, \psi(\pi_{i}(\textbf{x})),N(\psi(N(\pi_{i+1}(\textbf{x})))), \ldots,  N(\psi(N(\pi_n(\textbf{x})))))  \mbox{ by Eq.(\ref{tildeN})} \\
& = & \mathcal{N}(\psi(\pi_1(\textbf{x})), \ldots, \psi(\pi_{i}(\textbf{x})),N(\psi(N(\pi_{i+1}(\textbf{x})))), \ldots, N(\psi(N(\pi_n(\textbf{x}))))) \\ 
& = & \mathcal{N}(\pi_1(\varphi({\bf x})), \ldots, \pi_{i}(\varphi({\bf x})), \pi_{i+1}(\mathcal{N}(\varphi(\mathcal{N}({\bf x})))),\ldots, \pi_{n}(\mathcal{N}(\varphi(\mathcal{N}({\bf x}))))) ~   \mbox{by Eq.(\ref{NpsiN}),}  \\ && \mbox{based on results of Theorems \ref{bedregal3.4} and~\ref{N=N}} \\
& = & \mathcal{N}(\varphi^{\mathcal{N}}(\textbf{x})).
\end{eqnarray*}

If $\pi_j({\bf x})<\pi_{j+1}({\bf x})=\ldots=\pi_i({\bf x})= e<\pi_{i+1}({\bf x})$ then the following holds 
\begin{eqnarray*}
N(\pi_{i+1}({\bf x}))< e &=&N(\pi_{i}({\bf x}))=\ldots= N(\pi_{j+1}({\bf x}))< N(\pi_{j}({\bf x}))
\end{eqnarray*} 
with $j+1\leq i$ and $j\geq 0$. Hence, the equations as follows are verified:
\begin{eqnarray*}
& &\varphi^{\mathcal{N}}(\mathcal{N}(\textbf{x}))\\
 & = & (\psi(\pi_1(\mathcal{N}(\textbf{x}))), \ldots, \psi(\pi_{n-i}(\mathcal{N}(\textbf{x}))),\underbrace{e, \ldots, e,}_{i-(j+1)\ \textrm{times}}  N(\psi(N(\pi_{n-j+1}(\mathcal{N}(\textbf{x}))))),  \ldots, N(\psi(N(\pi_{n}(\mathcal{N}(\textbf{x})))))) \\& & \mbox{ by Eq.(\ref{NpsiN})} \\
 & = & (\psi(N(\pi_n(\textbf{x}))), \ldots, \psi(N(\pi_{i+1}(\textbf{x}))), \underbrace{e, \ldots, e,}_{i-(j+1)\  \textrm{times}}  N(\psi(N(N(\pi_{j}(\textbf{x}))))),\ldots,  N(\psi(N(N(\pi_1(\textbf{x}))))))\\& & \mbox{ since $N$ is strong} \\ 
& = & (N(N(\psi(N(\pi_n(\textbf{x}))))), \ldots, N(N(\psi(N(\pi_{i+1}(\textbf{x}))))),  \underbrace{e, \ldots, e,}_{i-(j+1)\ \textrm{times}} N(\psi(\pi_{j}(\textbf{x}))), \ldots, N(\psi(\pi_1(\textbf{x})))) \\& &\mbox{ since $N$ is strong} \\
& = & \widetilde{N}(\psi(\pi_1(\textbf{x}))), \ldots, \psi(\pi_{j}(\textbf{x})),\underbrace{\psi(e), \ldots, \psi(e),}_{i-(j+1)\ \textrm{times}} N(\psi(N(\pi_{i+1}(\textbf{x})))), \ldots, N(\psi(N(\pi_n(\textbf{x})))) \mbox{ by Eq.(\ref{tildeN})} \\
& = & \mathcal{N}(\psi(\pi_1(\textbf{x})), \ldots, \psi(\pi_{i}(\textbf{x})),  N(\psi(N(\pi_{i+1}(\textbf{x})))), \ldots, N(\psi(N(\pi_n(\textbf{x}))))) \\ 
& = & \mathcal{N}(\pi_1(\varphi({\bf x})), \ldots, \pi_{i}(\varphi({\bf x})), \pi_{i+1}(\mathcal{N}(\varphi(\mathcal{N}({\bf x})))),\ldots,  \pi_{n}(\mathcal{N}(\varphi(\mathcal{N}({\bf x}))))) \  \mbox{  by Eq.(\ref{NpsiN}),}  \\& & \mbox{based on results of Theorem~\ref{theoremvarphiN}} \\
& = & \mathcal{N}(\varphi^{\mathcal{N}}(\textbf{x})).
\end{eqnarray*}

Therefore, $\varphi^{\mathcal{N}}$ is $\mathcal{N}$-preserving $n$-dimensional automorphism. Now we will proof that all $\mathcal{N}$-preserving $n$-dimensional automorphisms have the form of Equation (\ref{varphiN}). Suppose that there exists an $\mathcal{N}$-preserving $n$-dimensional automorphism $\varphi': L_n([0,1])\rightarrow L_n([0,1])$. Then by Theorem~\ref{Npreserving}, $\psi': [0,1]\rightarrow [0,1]$ defined by $\psi'(x)=\pi_1(\varphi'(/x/))$ is an $N$-preserving automorphism. But, by Proposition \ref{bedregalprop2.6}, there exists an automorphism $\psi'': [0,e]\rightarrow [0,e]$ such that $\psi'=\psi''^{N}$. Let $\varphi''=\widetilde{\psi}''$. Hence, if $\textbf{x}\leq /e/$, then $\pi_i(\textbf{x})\leq e$ and so
\begin{eqnarray*}
\varphi'(\textbf{x}) & = & (\psi'(\pi_1(\textbf{x})), \ldots, \psi'(\pi_n(\textbf{x})))  \mbox{ based on results of  Theorem~\ref{Npreserving}}\\
                             & = & (\psi''^N(\pi_1(\textbf{x})), \ldots, \psi''^N(\pi_n(\textbf{x})))  \mbox{ based on results of Proposition~\ref{bedregalprop2.6}}\\
                             & = & (\psi''(\pi_1(\textbf{x})), \ldots, \psi''(\pi_n(\textbf{x})))  \mbox{ by Eq.(\ref{rhoN})}\\
                             & = & \widetilde{\psi}''(\pi_1(\textbf{x}), \ldots, \pi_n(\textbf{x})) \mbox{ by Eq.(\ref{widetildepsi})}\\
                             & = & \varphi''(\textbf{x}) \\
                             & = & \varphi''^{\mathcal{N}}(\textbf{x})  \mbox{ by Eq.(\ref{varphiN})}
\end{eqnarray*}
If $/e/<\textbf{x}$, then
\begin{eqnarray*}
\varphi'(\textbf{x}) & \hspace{-0.1cm}= \hspace{-0.1cm}&  (\psi'(\pi_1(\textbf{x})), \ldots, \psi'(\pi_n(\textbf{x})))  \mbox{ based on results of  Theorem~\ref{Npreserving}}\\
                             & \hspace{-0.1cm}= \hspace{-0.1cm} & (\psi''^N(\pi_1(\textbf{x})), \ldots, \psi''^N(\pi_n(\textbf{x})))   \mbox{ based on results of Proposition~\ref{bedregalprop2.6}}\\
                             & \hspace{-0.1cm}= \hspace{-0.1cm} & (N(\psi''(N(\pi_1(\textbf{x})))), \ldots, N(\psi''(N(\pi_n(\textbf{x}))))) \mbox{ by Eq.(\ref{preserving auto})}\\
                             & \hspace{-0.1cm}= \hspace{-0.1cm} & \widetilde{N}(\psi''(N(\pi_n(\textbf{x}))), \ldots, \psi''(N(\pi_1(\textbf{x}))))  \mbox{ by Eq.(\ref{tildeN})} \\
                             & \hspace{-0.1cm}= \hspace{-0.1cm} & \mathcal{N}(\psi''(N(\pi_n(\textbf{x}))), \ldots, \psi''(N(\pi_1(\textbf{x}))))  \mbox{ based on  Theorem~\ref{N=N}}\\
                             & \hspace{-0.1cm}= \hspace{-0.1cm} & \mathcal{N}(\widetilde{\psi}''(N(\pi_n(\textbf{x})), \ldots, N(\pi_1(\textbf{x}))))  \mbox{ by Eq.(\ref{widetildepsi})}\\
                             & \hspace{-0.1cm}= \hspace{-0.1cm} & \mathcal{N}(\varphi''(N(\pi_n(\textbf{x})), \ldots, N(\pi_1(\textbf{x}))))  \mbox{ based on results of Theorem~\ref{bedregal3.4}}\\                             
                             & \hspace{-0.1cm}= \hspace{-0.1cm} & \mathcal{N}(\varphi''(\widetilde{N}(\pi_1(\textbf{x})), \ldots, \pi_n(\textbf{x}))) \mbox{ by Eq.(\ref{tildeN})}\\
                            & \hspace{-0.1cm}= \hspace{-0.1cm} & \mathcal{N}(\varphi''(\mathcal{N}(\textbf{x})))) \mbox{ based on results of  Theorem~\ref{N=N}}\\
                             & \hspace{-0.1cm}= \hspace{-0.1cm} & \varphi''^{\mathcal{N}}(\textbf{x})  \mbox{ by Eq.(\ref{varphiN})}
\end{eqnarray*}
If $\pi_i({\bf x})\leq /e/<\pi_{i+1}({\bf x})$ then
\begin{eqnarray*}
\varphi'(\textbf{x})  \hspace{-0.1cm}  & \hspace{-0.1cm} = \hspace{-0.1cm} & \hspace{-0.1cm} (\psi'(\pi_1(\textbf{x})), \ldots, \psi'(\pi_n(\textbf{x})))  \mbox{ based on Theorem~\ref{Npreserving}}\\
                            \hspace{-0.1cm} & \hspace{-0.1cm} = \hspace{-0.1cm} & \hspace{-0.1cm} (\psi''^N(\pi_1(\textbf{x})), \ldots, \psi''^N(\pi_n(\textbf{x}))) \mbox{ based on results of Proposition~\ref{bedregalprop2.6}}\\
                             \hspace{-0.1cm} & \hspace{-0.1cm} = \hspace{-0.1cm} & \hspace{-0.1cm} (\psi''(\pi_1(\textbf{x})), \ldots, \psi''(\pi_{i}(\textbf{x})), N(\psi''(N(\pi_{i+1}(\textbf{x})))), \ldots, N(\psi''(N(\pi_n(\textbf{x})))))\\ 
                             \hspace{-0.1cm} & \hspace{-0.1cm} = \hspace{-0.1cm} & \hspace{-0.1cm} (\pi_1(\varphi''({\bf x})), \ldots, \pi_i(\varphi''({\bf x})), \pi_{i+1}(\mathcal{N}(\varphi''(\mathcal{N}({\bf x})))),\ldots,  \pi_{n}(\mathcal{N}(\varphi''(\mathcal{N}({\bf x})))))   \mbox{ by Eq.(\ref{NpsiN})} \\
& = & \varphi''^{\mathcal{N}}(\textbf{x}).
\end{eqnarray*}
Therefore, $\varphi'=\varphi''^{\mathcal{N}}$, i.e., all $\mathcal{N}$-preserving $n$-dimensional automorphisms have the form of Equation (\ref{varphiN}).
\end{proof}\\

The following result is analogous to Proposition \ref{bedregalprop2.7}.

\begin{proposition}
Let $\mathcal{N}$ be an $n$-dimensional strong fuzzy negation. Then, $(\varphi^{\mathcal{N}})^{-1}$ is an $\mathcal{N}$-preserving $n$-dimensional automorphism. 
\end{proposition}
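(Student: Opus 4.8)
The plan is to reduce everything to Theorem~\ref{theoremvarphiN}, which already asserts that $\varphi^{\mathcal{N}}$, defined by Eq.~(\ref{varphiN}), is an $\mathcal{N}$-preserving $n$-dimensional automorphism, and then to observe that both ``being an $n$-dimensional automorphism'' and ``being $\mathcal{N}$-preserving'' are properties stable under taking inverses. Thus the argument splits into two short steps, and neither of them needs to touch the piecewise formula~(\ref{varphiN}).

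First I would check that $(\varphi^{\mathcal{N}})^{-1}\in Aut(L_n([0,1]))$. Since $\varphi^{\mathcal{N}}$ is bijective, so is $(\varphi^{\mathcal{N}})^{-1}$; and from the equivalence ${\bf x}\leq {\bf y}$ iff $\varphi^{\mathcal{N}}({\bf x})\leq \varphi^{\mathcal{N}}({\bf y})$, applied with ${\bf x},{\bf y}$ replaced by $(\varphi^{\mathcal{N}})^{-1}({\bf u}),(\varphi^{\mathcal{N}})^{-1}({\bf v})$, one obtains $(\varphi^{\mathcal{N}})^{-1}({\bf u})\leq (\varphi^{\mathcal{N}})^{-1}({\bf v})$ iff ${\bf u}\leq {\bf v}$. (Equivalently, by Theorem~\ref{bedregal3.4} and Proposition~\ref{tildepsi}, writing $\varphi^{\mathcal{N}}=\widetilde{\psi}$ gives $(\varphi^{\mathcal{N}})^{-1}=\widetilde{\psi^{-1}}$, which is again an $n$-dimensional automorphism.)

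Second, I would establish the $\mathcal{N}$-preserving identity for $(\varphi^{\mathcal{N}})^{-1}$. By Theorem~\ref{theoremvarphiN} and Eq.~(\ref{Npreservingfor}), $\varphi^{\mathcal{N}}(\mathcal{N}({\bf y}))=\mathcal{N}(\varphi^{\mathcal{N}}({\bf y}))$ for every ${\bf y}\in L_n([0,1])$. Applying $(\varphi^{\mathcal{N}})^{-1}$ to both sides gives $\mathcal{N}({\bf y})=(\varphi^{\mathcal{N}})^{-1}(\mathcal{N}(\varphi^{\mathcal{N}}({\bf y})))$; now substituting ${\bf y}=(\varphi^{\mathcal{N}})^{-1}({\bf x})$ and using $\varphi^{\mathcal{N}}((\varphi^{\mathcal{N}})^{-1}({\bf x}))={\bf x}$ yields $\mathcal{N}((\varphi^{\mathcal{N}})^{-1}({\bf x}))=(\varphi^{\mathcal{N}})^{-1}(\mathcal{N}({\bf x}))$, which is exactly Eq.~(\ref{Npreservingfor}) for $(\varphi^{\mathcal{N}})^{-1}$.

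There is essentially no obstacle here; the only point requiring a little care is the bookkeeping of the substitutions in the second step, and the observation that the conclusion does not use the explicit three-case description~(\ref{varphiN}) at all, only the fact that $\varphi^{\mathcal{N}}$ is an $\mathcal{N}$-preserving $n$-dimensional automorphism. This makes the statement the precise $n$-dimensional counterpart of Proposition~\ref{bedregalprop2.7}, whose proof on $[0,1]$ follows by the same two-line inverse argument.
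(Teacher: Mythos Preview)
Your proposal is correct and follows essentially the same approach as the paper: both invoke Theorem~\ref{theoremvarphiN} to get that $\varphi^{\mathcal{N}}$ is $\mathcal{N}$-preserving, and then derive Eq.~(\ref{Npreservingfor}) for the inverse by inserting $\varphi^{\mathcal{N}}\circ(\varphi^{\mathcal{N}})^{-1}$ and using the commutation of $\varphi^{\mathcal{N}}$ with $\mathcal{N}$. Your explicit verification that $(\varphi^{\mathcal{N}})^{-1}\in Aut(L_n([0,1]))$ is a small addition the paper leaves implicit.
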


\begin{proof}
By Theorem \ref{theoremvarphiN}, $\varphi^{\mathcal{N}}$ is an $\mathcal{N}$-preserving $n$-dimensional automorphism. Let $\textbf{x}\in L_n([0,1])$, then
\begin{eqnarray*}
(\varphi^{\mathcal{N}})^{-1}(\mathcal{N}(\textbf{x})) & = & (\varphi^{\mathcal{N}})^{-1}(\mathcal{N}(\varphi^{\mathcal{N}}((\varphi^{\mathcal{N}})^{-1}(\textbf{x}))))\\
                                                                                & = & (\varphi^{\mathcal{N}})^{-1}(\varphi^{\mathcal{N}}(\mathcal{N}((\varphi^{\mathcal{N}})^{-1}(\textbf{x}))))   \mbox{ by Eq.(\ref{Npreservingfor})}\\
                                                                                & = & \mathcal{N}((\varphi^{\mathcal{N}})^{-1}(\textbf{x}))
\end{eqnarray*} 
Therefore, by Eq.(\ref{Npreservingfor}), $(\varphi^{\mathcal{N}})^{-1}$ is also an $\mathcal{N}$-preserving $n$-dimensional automorphism.
\end{proof}

\section{Conclusion}
The principal research question considered in this paper is the following: how can the main properties of (strong) fuzzy negations on $L([0,1])$ be preserved by representable (strong) fuzzy negation on $L_n([0,1])$, mainly related to the analysis of degenerate elements and equilibrium points?  

Our aim was to design $n$-dimensional fuzzy negations and investigate one special extension from $L([0,1])$ -- the representable fuzzy negations on  $L_n([0,1])$, summarizing the class of such functions which are continuous and monotone by part. 

 In Theorem~\ref{subseq monotone}, $n$-representable interval negations on $L_n([0,1])$ were discussed by stating the necessary and sufficient conditions  which one can obtain an $n$-dimensional fuzzy negations from fuzzy negation $L([0,1])$. Further results also consider the subclass   of strong interval negations, with additional analysis of degenerate elements and $n$-dimensional equilibrium points in its $n$-membership functions. 
 
Theorem~\ref{theondsfn-nda} states the relationship between 
an $n$-dimensional strong fuzzy negation  $\mathcal{N}: L_n([0,1])\rightarrow L_n([0,1])$  and $n$-dimensional automorphism on $L_n([0,1])$.
The conjugate based on $n$-dimensional fuzzy negations  provides a method to obtain other $n$-dimensional fuzzy negations, in which properties of representable fuzzy negations on $L_n([0,1])$ are preserved. 

Extending the previous work~\cite{bedregal10,navara99} on theoretical research interval case  related to  $n$-dimensional version of fuzzy negation, Theorem~\ref{theoremvarphiN} provides  expression for all  $\mathcal{N}$-preserving $n$-dimensional automorphisms in $L_n([0,1])$ together with its reverse construction.

Further works investigate other fuzzy connectives, as implications and bi-implications, along with their representable classes, conjugate and dual constructions. 
In addition, we will investigate other orders for $L_n([0,1])$ such as admissible orders on n-dimensional intervals in the sense of \cite{BFK13,BGB13,dSBS16}.


%




%

\end{document}